\newtheorem{theorem}{Theorem}
\newtheorem{problem}{Problem}
\newtheorem{algo}{Algorithm}
\newcommand{\hdq}{\texttt{HD$_q$}\xspace}
\newcommand{\cmp}{\texttt{CMP}\xspace}
\newcommand{\pmax}{p_{max}\xspace}
\newcommand{\D}{\mathcal{D}\xspace}
\newcommand{\Ot}{\Tilde{O}}
\newcommand{\prob}{\textsc{ProFil}\xspace}
\newcommand{\amprob}{\textsc{AmpFil}\xspace}
\newcommand{\ampest}{\textsc{AmpEst}\xspace}
\newcommand{\trueampest}{\textsc{TrueAmpEst}\xspace}
\newcommand{\eqae}{\textsc{EQAmpEst}\xspace}
\newcommand{\kdist}{$k$-\textsc{Distinctness}\xspace}
\newcommand{\gkd}{\textsc{Gapped $k$-Distinctness}\xspace}
\newcommand{\Finf}{$\textsc{F}_\infty$\xspace}
\newcommand{\countdec}{\textsc{CountDecision}\xspace}
\newcommand{\ED}{\textsc{ElementDistinctness}}
\newcommand{\deltaerr}{\log\tfrac{1}{\delta\tau}}
\newcommand{\pmset}{S}
\newcommand{\fmax}{\hat{f}_{max}\xspace}
\renewcommand{\O}{\mathcal{O}}
\newcommand{\borcl}{\hat{\mathcal{O}}}
\newcommand{\hata}{\hat{A}}
\newcommand{\iden}{\mathbb{I}}
\newcommand{\mbc}[1]{\mathbf{C}^{(#1)}}
\newcommand{\gamy}{\gamma^y}
\newcolumntype{C}[1]{>{\centering\let\newline\\\arraybackslash\hspace{0pt}}m{#1}}
{\bfseries}{\itshape}
\algnewcommand\algorithmicon{\textbf{on}}
\newcommand{\alg}{\mathcal{A}}
\newcommand{\mdistampest}{MDistAmpEst\xspace}
\newcommand{\mdaealgo}{\textsc{MDistAEAlgo}\xspace}
\newcommand{\ampfilorcl}{\textsc{AmpFilBOrcl}\xspace}
\newcommand{\dist}{\mathcal{D}_p}
\newcommand{\probfilorcl}{\textsc{ProbFilBOrcl}\xspace}
\newcommand{\aealgo}{\textsc{QAEAlgo}\xspace}
\newtheorem{lemma}[theorem]{Lemma}
\newtheorem{claim}[theorem]{Claim}
\newtheorem{corollary}[theorem]{Corollary}
\newtheorem{proposition}[theorem]{Proposition}
\begin{document}

\title{Few Quantum Algorithms on Amplitude Distribution}

\author{Debajyoti Bera}
\email{dbera@iiitd.ac.in}
\affiliation{Department of Computer Science, IIIT-D, New Delhi, India.}
\author{Tharrmashastha SAPV}
\email{tharrmashasthav@iiitd.ac.in}
\affiliation{Department of Computer Science, IIIT-D, New Delhi, India.}
\maketitle

\begin{abstract}
  Amplitude filtering is concerned with identifying basis-states in a superposition whose amplitudes are  greater than a specified threshold; probability filtering is defined analogously for probabilities. Given the scarcity of qubits, the focus of this work is to design log-space algorithms for them.

Both algorithms follow a similar pattern of estimating the amplitude (or, probability for the latter problem) of each state, in superposition, then comparing each estimate against the threshold for setting up a flag qubit upon success, finally followed by amplitude amplification of states in which the flag is set. We show how to implement each step using very few qubits by designing three subroutines. Our first algorithm performs amplitude amplification even when the ``good state'' operator has a small probability of being incorrect --- here we improve upon the space complexity of the previously known algorithms. Our second algorithm performs ``true amplitude estimation'' in roughly the same complexity as that of ``amplitude estimation'', which actually estimates a probability instead of an amplitude. Our third algorithm is for performing amplitude estimation in parallel (superposition) which is difficult when each estimation branch involves different oracles.

As an immediate reward, we observed that the above algorithms for the filtering problems directly improved the upper bounds on the space-bounded query complexity of problems such as non-linearity estimation of Boolean functions and $k$-distinctness.
\end{abstract}

\section{Introduction}

\newcommand{\C}{\mathcal{C}}
\newcommand{\A}{\mathcal{A}}

A quantum circuit is always associated with a distribution, say $\D$, over the observed outcomes~\footnote{We assume measurement in the standard basis in this paper, however, it should not be difficult to extend our algorithms to accommodate measurements in another basis.} that can, in principle, encode complex information. Given a threshold $\tau$, and a blackbox to run the circuit, it may be useful to know if there is an outcome with a probability of at least $\tau$. We call this problem \textsc{Probability Filtering} (denoted \prob). We also introduce \textsc{Amplitude Filtering} (denoted \amprob) that determines if the absolute value of the amplitude of any basis state is above a given threshold; even though this problem appears equivalent to \prob, an annoying difference crawls in if we allow absolute or relative errors with respect to the threshold. We are not aware of prior algorithms for these problems, as stated in their most general form. The most interesting takeaway from this work are $\Ot(1)$-qubit\footnote{$\tilde{O}$ hides $\log$ factors.} algorithms for the \prob and the \amprob problems whose query complexities, measured as the number of calls to the circuit, are independent of the domain size of $\D$.

The framework offered by these problems supports interesting tasks. For example, a binary search over $\tau$ (tweaked to handle the above annoyance) can be a way to compute the largest probability among all the outcomes and can be used to find the modal outcome. We have observed that several combinatorial problems can be reduced to finding the mode of a distribution or identifying if the mode is greater than something. For instance, the \kdist problem generalizes the well-studied element-distinctness problem: whether an array has at least $k$ repetitions of any element. Consider the distribution over the domain of the array elements. If any element appears at least $k$ times, then its mode will be at least $k/N$ ($N$ denoting the size of the array), and vice versa. Our algorithm for probability filtering can be used to design an algorithm for \kdist that makes an optimal number of queries (up to logarithmic factors) when $k=\Omega(N)$, and that too using $\tilde{O}(1)$ qubits. Previous quantum algorithms for large $k$ have an exponential query complexity and require a larger number of qubits~\cite{Ambainis2007QuantumDistinctness}.
We hope that the frameworks of \prob and \amprob would be useful for designing more quantum algorithms.

When space is not a constraint, query complexity of a discrete problem with an $n$-sized domain is $O(n)$ which is achievable by querying and caching the entire input at the beginning. However, this is not feasible when space is limited. In contrast, our algorithms are allowed only constant many logarithmic-sized registers. 
To restrict the number of qubits to $\tilde{O}(1)$ we end up using super-linear queries for many problems.

\subsection{Summary of Results}

The algorithms for \prob and \amprob are built on three different algorithms.

\paragraph*{Biased Amplitude Amplification in Log-space~(\cref{sec:aa-biased-oracle})}The \prob problem can be solved using an intuitively simple idea of doing amplitude estimation for each outcome in superposition, using the threshold as a marking function, and then doing amplitude amplification with respect to the marking. Doing this while ensuring that the errors inherent in the estimation step do not increase significantly in the amplification step can actually be reduced to the problem of biased-oracle amplitude amplification. 

In the latter problem, the oracle to mark ``good'' states is allowed to err with some probability $1-p$. Hoyer et al.~\cite{biasedoraclehoyer} studied this problem earlier for $p=9/10$. 
They proposed an algorithm that uses $O(\sqrt{1/\lambda})$ queries to obtain a marked element with high probability where $\lambda$ is the probability of obtaining any marked element out of a total $N$ elements.
This algorithm performs an ``error reduction step'' after each iteration of amplitude amplification. 
This step uses one new qubit for reducing the error; however, since that additional qubit becomes completely entangled with the other qubits, a fresh qubit is required for each error reduction step. Thus, the number of qubits required would grow as much as $O(\sqrt{1/\lambda})$ in the worst case.

To reduce the qubit footprint of our algorithms, we designed our own algorithm for biased oracle amplitude amplification for arbitrary $p>1/2$ that uses $O\Big(\frac{p}{(p-\frac{1}{2})^2\sqrt{\lambda}}\log(\frac{1}{\lambda\delta})\Big)$ queries and just $\log(N)+O\Big(\frac{2p}{(p-1/2)^2}\log(1/\lambda\delta)\Big)$ qubits (which is $\Ot(1)$ for $p>2/3$ and $\lambda=1/poly(N)$).

\paragraph*{True Amplitude Estimation (\cref{sec:true-amp-est})}
The algorithm for biased amplitude amplification can be used to solve a promise version of \prob almost optimally, and the latter could be used to solve \amprob, but only sub-optimally. The reason is that this approach performs amplitude estimation for each outcome; however, amplitude estimation would actually estimate the probabilities of the outcomes. Suppose we represent a state in the following manner: $\ket{\psi} = \alpha \ket{Good} + \beta \ket{Bad}$, and we want to estimate $|\alpha|$ within an small additive error $\epsilon$. The standard amplitude estimation technique can be used to estimate $|\alpha|^2$ with error $\epsilon^2$, which can be used to indirectly obtain an estimate of $|\alpha|$ with error $\epsilon$. However, this requires $O(1/\epsilon^2)$ queries.


To speed up this approach, we present an algorithm for {\em true amplitude estimation} (\trueampest)~\footnote{We use ``true'' to differentiate it from the well-known ``amplitude estimation'' algorithm (\ampest), which as explained above, is a misnomer.} that estimates the amplitude (rather, its norm) with a query complexity of $O(\frac{1}{\epsilon})$ queries, using a similar number of qubits as that required for amplitude estimation.
The algorithm uses the native amplitude estimation algorithm, denoted \ampest, and a variant of the Hadamard test as a subroutine.~\footnote{The algorithm is an extension of the amplitude estimation algorithm, and could have been used earlier. Since we did not find it anywhere, we are including it for the sake of completeness.}

\paragraph*{Multidistribution Amplitude Estimation in Log-space (\cref{sec:mdim-amp-est})}

The final challenge comes in trying to run \trueampest algorithm in superposition to estimate the amplitudes of every $\ket{x}$ in a state $\sum_x \alpha_x \ket{x}$. This boils down to running \ampest in superposition. It is quite easy to parallelly estimate the probabilities associated with different basis states in a fixed superposition, e.g., estimating $|\alpha_x|^2$ for all $x$ in a state $\sum_x \alpha_x \ket{x}$. In fact, we would be following this approach for \prob. Apeldoorn et al.\ had studied a similar problem named {\em multidimensional amplitude estimation}~\cite{van2021quantum}.
Their objective was to estimate each $|\alpha_x|^2$ separately with each estimate in a separate register.
They provided an $O(1/\epsilon)$-query algorithm but its space requirement is understandably $\Omega(n)$, in fact, it requires $O(n/\epsilon)$ qubits of space. 

Now consider a different parallel estimation problem, that of  parallelly estimating the probability associated with a fixed basis state in different superpositions, e.g., estimating $|\alpha_{y,0}|^2$ for all $y$ in a superposition of $\sum_x \alpha_{y,x} \ket{x}$ over several $y$'s.
Recall that \ampest uses two operators as a black-box -- one for generating the state on which estimation should be performed and another for marking the ``good states''. Technically speaking, we face the difficult problem of parallelizing \ampest where a different state generation operator is employed in each branch of a superposition~\footnote{This was not a problem for the above algorithm for \prob since the state generation oracle was common and the marking oracles were implemented using the register that was in superposition.}.
We name this task as 
{\em multidistribution amplitude estimation} (\mdistampest).



We designed a quantum algorithm for the above problem 
with the same query complexity as a single estimation procedure viz., $O(1/\epsilon)$ queries and $O(\log(n/\epsilon))$ space. The query complexity of a straightforward implementation has an overhead of $n$, the number of branches in the superposition, against a single amplitude estimation. We managed to keep the query complexity low by pushing controlled operations deep inside the algorithm to identify and eliminate redundant calls to the oracle.

\paragraph*{Algorithms for \prob and \amprob (\cref{sec:profil_algo})}
Using our biased amplitude amplification algorithm with the amplitude estimation, we could solve a promise version of \prob in which either there is some outcome with probability more than the threshold $\tau$ or the probability of every outcome is less than $\tau-\epsilon$. Our query complexity is $\Ot(\tfrac{1}{\epsilon\sqrt{\tau}})$.

Our next objective was to design a $\Ot(\tfrac{1}{\epsilon \tau})$-query algorithm for a similar promise version of the \amprob problem. Here, the algorithm for \prob could be directly employed, using threshold as $\tau^2$; however, the query complexity would scale as $\tfrac{1}{\epsilon^2}$ instead of the desired $\tfrac{1}{\epsilon}$. This situation arises since the amplitude estimation step actually estimates the probability and not the amplitude, so the estimation inaccuracy too scales quadratically with respect to the latter. We use \trueampest to work around this hurdle.
Thus, we obtain an algorithm for the \amprob problem that follows the same overall approach as that of \prob but with vastly different subroutines: estimating the amplitude of all the states in superposition (using \trueampest and \mdistampest), marking the states whose amplitude's modulus is larger than the given threshold in parallel, and then amplifying the probability of finding a marked state (using biased amplitude amplification).
We show that the \amprob problem can be solved in $\tilde{O}(\frac{1}{\epsilon \tau})$ queries.

For, the \prob problem, we show that our algorithm is tight with respect to the parameters $\epsilon$ and $\tau$ individually, i.e, we show a lower bound of $\Omega(\frac{1}{\epsilon}+\frac{1}{\sqrt{\tau}})$ queries.
Further, we show an almost tight lower bound of $\Omega(\frac{1}{\epsilon}+\frac{1}{\tau})$ queries for the \amprob problem. 
Both the lower bounds use standard approaches like the adversary method~\cite{adversarymethod} and reduction from a counting problem~\cite{Nayak1999QuantumStatistics}. The details on the lower bounds is presented in Section~\ref{sec:highdist-lb}.


\paragraph*{Applications of \prob and \amprob}

The results in this work can be used to design low-space, often $\Ot(1)$, algorithms for several problems which have received recent attention. These problems can now be solved using a logarithmic number of qubits --- often exponentially less compared to the existing approaches, and have a better query complexity, thus leading to better space-time complexities.
The reductions are mostly straightforward and are omitted due to space constraints, but the implications are interesting, as discussed below.
    \begin{itemize}
        \item Our approach for \kdist makes an optimal number of queries (up to logarithmic factors) when $k=\Omega(n)$, and that too using $\tilde{O}(1)$ qubits (see section~\ref{sec:kdist}). Previous quantum algorithms for large $k$ have an exponential query complexity in limited space or require a polynomial number of qubits~\cite{Ambainis2007QuantumDistinctness}.
        \item Our algorithm for \prob can be used to identify the presence of high-frequency items in an array (those above a given threshold --- this problem is also known as ``heavy hitters'') using $\tilde{O}(\log\tfrac{1}{\epsilon})$ qubits; it also generates a superposition of such items along with estimates of their frequencies. The best algorithms for identifying heavy hitters in low space classical algorithms are of streaming nature but require $\tilde{O}(\tfrac{1}{\epsilon})$ space~\cite{CMSketch}. Here $\epsilon\in(0, 1]$ indicates the inaccuracy in frequency estimation.
        \item Our \prob and \amprob algorithms can be used to binary search for the largest threshold, which essentially yield the largest probability and the largest amplitude, respectively.
        \item Valiant and Valiant showed that $\tilde{O}(\tfrac{m}{\epsilon^2})$ samples of an $m$-valued array are sufficient to classically estimate common statistical properties of the distribution of values in the array~\cite{ValiantValiant}. Recently it was shown that fewer samples of the order of $\tilde{O}(\tfrac{1}{g^2})$ could be used if we want to identify the item with the largest probability (denoted $p_{\max}$)~\cite{Dutta2010ModeDistributions}; here $g$ denotes the gap between $p_{\max}$ and the second largest probability which is always less than $\pmax$. A binary search using \prob makes only $\tilde{O}(\tfrac{1}{g \sqrt{p_{\max}}})$ queries and is able to locate such an item.
        \item The non-linearity of a Boolean function can also be calculated in terms of the amplitude with the largest norm among the output superposition of the Deutsch-Jozsa circuit. An algorithm based on this connection was recently proposed that employed binary search using \prob to estimate the non-linearity of any Boolean function with additive accuracy $\lambda$ using $\Ot(1)$ qubits and $\Ot(\tfrac{1}{\lambda^2 \fmax})$ queries~\cite{Bera2021QuantumEstimation}; here $\fmax$ denotes the largest absolute value of any Walsh coefficient of the function, also the largest amplitude in the output state of the Deutsch-Jozsa circuit. Now, we can use $\amprob$ instead of \prob to do the same but using only $\Ot(\tfrac{1}{\lambda\fmax})$ queries (this also involves doing a binary search over all $|\hat{f}(x)|$). It should be noted in this context that the best known lower bound for non-linearity estimation is $\Omega(\tfrac{1}{\lambda})$ (Appendix~\ref{appendix:non-lin-lb}).
    \end{itemize}
\section{Background: Quantum amplitude estimation (QAE)}
\label{sec:background_qae}
Consider a quantum circuit $A$ on $n$ qubits whose final state is $\ket{\psi}$ on input $\ket{0^n}$. Let $\ket{x}$ be some basis state (in the standard basis --- this can be easily generalized to any arbitrary basis).
Given an accuracy parameter $\epsilon\in (0,1)$, the amplitude estimation problem is to estimate the probability $p$ of observing  $\ket{x}$ upon measuring $\ket{\psi}$ in the standard basis, up to an additive accuracy $\epsilon$.
Brassard et al.~\cite{brassard2002quantum} proposed a quantum amplitude estimation circuit, which we call $\aealgo_A$, that acts on two registers of size $n$ and $m$ qubits and makes $2^m-1$ calls to controlled-$A$ to output an estimate $\tilde{p} \in [0,1]$ of $p$ that behaves as follows.
\begin{theorem}\label{thm:amp_est}
    The amplitude estimation algorithm returns an estimate $\tilde{p}$ that has a confidence
    interval $|p-\tilde{p}| \le 2\pi k \frac{\sqrt{p(1-p)}}{2^m} +
    \pi^2 \frac{k^2}{2^{2m}}$ with probability at least $\frac{8}{\pi^2}$ if
    $k=1$ and with probability at least $1-\frac{1}{2(k-1)}$ if $k \ge 2$. It uses exactly $2^m-1$ evaluations of the oracle. If
    $p=0$ or 1 then $\tilde{p}=p$ with certainty.
\end{theorem}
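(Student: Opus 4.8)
The plan is to recognise $\aealgo$ as quantum phase estimation applied to the Grover iterate built from $A$ and the marked state $\ket{x}$, and to read off the three claims from the eigenstructure of that iterate, the standard tail bound for phase estimation, and a short trigonometric conversion. Concretely, write $A\ket{0^n}=\ket{\psi}=\sin\theta\,\ket{\psi_G}+\cos\theta\,\ket{\psi_B}$ with $\theta\in[0,\pi/2]$, where $\ket{\psi_G}$ is the normalised projection of $\ket{\psi}$ onto $\operatorname{span}\{\ket{x}\}$ and $\ket{\psi_B}$ the complementary unit vector, so $p=\sin^2\theta$. Let $S_x$ and $S_0$ be the reflections flipping the phase of $\ket{x}$ and of $\ket{0^n}$ respectively, and put $Q=-A\,S_0\,A^{-1}\,S_x$. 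A direct computation shows that $\mathcal H=\operatorname{span}\{\ket{\psi_G},\ket{\psi_B}\}$ is $Q$-invariant and that $Q$ acts on it as a rotation by $2\theta$; hence on $\mathcal H$ it has eigenvalues $e^{\pm 2i\theta}$ with eigenvectors $\ket{\psi_\pm}$, and $\ket{\psi}$ is an explicit equal-weight combination of $\ket{\psi_+}$ and $\ket{\psi_-}$. Each application of $Q$ costs one call to (controlled-)$A$ and one to (controlled-)$A^{-1}$.

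Next I would run phase estimation: $\aealgo$ uses an $m$-qubit output register and prepares the $n$-qubit register in $\ket{\psi}$, then applies controlled-$Q^{2^j}$ for $j=0,\dots,m-1$, so $Q$ is invoked $\sum_{j=0}^{m-1}2^j=2^m-1$ times in total, matching the stated query count. A measurement of the output register returns an integer $y$; the algorithm sets $\tilde\theta=\pi y/2^m$ and outputs $\tilde p=\sin^2\tilde\theta$. Since $\ket{\psi}$ is supported on the two eigenvectors with phases $\pm 2\theta$, the observed $y/2^m$ concentrates either around $\theta/\pi$ or around $1-\theta/\pi$; in both cases $\sin^2(\pi y/2^m)$ is close to $\sin^2\theta=p$, so the two-to-one ambiguity between $\theta$ and $\pi-\theta$ is harmless.

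The quantitative core is the standard phase-estimation tail bound: for a fixed target phase, the probability that the measured integer lands within cyclic distance $k$ of the nearest integer is at least $8/\pi^2$ when $k=1$ and at least $1-\tfrac{1}{2(k-1)}$ when $k\ge 2$; this follows by summing the geometric series defining the output amplitudes and estimating the resulting Dirichlet kernel, and is where essentially all of the work lies — I would either cite it or reproduce the sum. On that event $|\theta-\tilde\theta|\le \pi k/2^m$ (the $\pi-\theta$ branch being handled symmetrically). Writing $\tilde\theta=\theta+\delta$ with $|\delta|\le\pi k/2^m$ and using $\sin^2\tilde\theta-\sin^2\theta=\sin\delta\,\sin(2\theta+\delta)$,
\[
|\tilde p-p|\le|\sin\delta|\bigl(|\sin 2\theta|+|\sin\delta|\bigr)\le 2|\delta|\sqrt{p(1-p)}+\delta^2\le 2\pi k\,\frac{\sqrt{p(1-p)}}{2^m}+\pi^2\,\frac{k^2}{2^{2m}},
\]
which is the claimed confidence interval, with the claimed probabilities.

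Finally, for the boundary cases: if $p\in\{0,1\}$ then $\theta\in\{0,\pi/2\}$, hence $2\theta\in\{0,\pi\}$ and the relevant eigenvalue of $Q$ is $\pm 1$, a phase that is exactly representable on $m$ bits, so phase estimation returns it with certainty and $\tilde p=p$ exactly. The only real obstacle in all of this is the sharp-constant phase-estimation tail bound invoked in the previous paragraph; once that is in hand, everything else is the bookkeeping of the eigenstructure of $Q$ and the elementary trigonometry displayed above.
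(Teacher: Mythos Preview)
Your proposal is correct and is precisely the standard argument from Brassard, H{\o}yer, Mosca and Tapp~\cite{brassard2002quantum}. Note, however, that the present paper does not give its own proof of this theorem: it is stated in Section~\ref{sec:background_qae} as a background result quoted from~\cite{brassard2002quantum}, so there is no in-paper proof to compare against. What you have written is essentially a faithful sketch of the original BHMT proof---phase estimation on the Grover iterate $Q$, the eigenstructure of $Q$ on the two-dimensional good/bad subspace, the Dirichlet-kernel tail bound for phase estimation giving the $8/\pi^2$ and $1-\tfrac{1}{2(k-1)}$ probabilities, and the identity $\sin^2\tilde\theta-\sin^2\theta=\sin\delta\,\sin(2\theta+\delta)$ for the error conversion---and all steps check out.
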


The following corollary is obtained from the above theorem by setting $k=1$ and $m=q+3$.
\begin{corollary}\label{cor:amp_est_our_form}
    The amplitude estimation algorithm returns an estimate $\tilde{p}$ that has a confidence
    interval $|p-\tilde{p}| \le \frac{1}{2^q}$ with probability at least $\frac{8}{\pi^2}$ using $q+3$ qubits and $2^{q+3}-1$ queries. If
    $p=0$ or 1 then $\tilde{p}=p$ with certainty.

    Setting $\frac{1}{2^q}=\epsilon$, the amplitude estimation algorithm returns an estimate $\tilde{p}$ that has a confidence
    interval $|p-\tilde{p}| \le \epsilon$ with probability at least $\frac{8}{\pi^2}$ using $O\Big(\log(\frac{1}{\epsilon})\Big)$ qubits and $O\big(\frac{1}{\epsilon}\big)$ queries.
\end{corollary}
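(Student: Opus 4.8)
The plan is to specialize \cref{thm:amp_est} to the case $k=1$ and then choose the size $m$ of the estimation register just large enough that the two error terms collapse below the target bound $1/2^q$. First I would set $k=1$ in the confidence interval of \cref{thm:amp_est}, obtaining $|p-\tilde p|\le 2\pi\sqrt{p(1-p)}/2^m+\pi^2/2^{2m}$ with probability at least $8/\pi^2$, and note that the boundary case $p\in\{0,1\}\Rightarrow\tilde p=p$ carries over verbatim. Since $p(1-p)$ is maximized at $p=1/2$, we have $\sqrt{p(1-p)}\le 1/2$, so the right-hand side is at most $\pi/2^m+\pi^2/2^{2m}$.

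Next I would substitute $m=q+3$. This turns the bound into $\tfrac{\pi}{8}\cdot\tfrac{1}{2^q}+\tfrac{\pi^2}{64}\cdot\tfrac{1}{2^{2q}}$, and using $2^{2q}\ge 2^q$ for $q\ge 0$ together with $\tfrac{\pi}{8}+\tfrac{\pi^2}{64}\approx 0.547<1$, this is strictly less than $1/2^q$, which is the first claim. The query count $2^{q+3}-1$ and the register size $q+3$ are then read straight off \cref{thm:amp_est} (the $q+3$ counting the estimation qubits placed alongside the $n$ qubits of $A$). For the asymptotic restatement I would take $q=\lceil\log_2(1/\epsilon)\rceil$, so that $1/2^q\le\epsilon$; then the qubit count is $q+3=O(\log(1/\epsilon))$ and the query count is $2^{q+3}-1\le 16/\epsilon=O(1/\epsilon)$.

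The only ``obstacle'' here is bookkeeping: one must pick the additive offset in $m=q+3$ large enough that the sum of the two constant prefactors $\pi/8$ and $\pi^2/64$ drops below $1$, and confirm that the lower-order term $\pi^2/2^{2m}$ is genuinely dominated by the first term for every admissible $q$. There is no mathematical difficulty beyond a direct application of \cref{thm:amp_est}.
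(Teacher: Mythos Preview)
Your proposal is correct and follows exactly the approach the paper indicates: the paper simply says the corollary ``is obtained from the above theorem by setting $k=1$ and $m=q+3$,'' and you have filled in precisely the routine bookkeeping (bounding $\sqrt{p(1-p)}\le 1/2$ and checking $\pi/8+\pi^2/64<1$) that this entails.
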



We use the subscript in $\aealgo_A$ to remind the reader that the circuit for amplitude estimation depends on the algorithm $A$ that generates the state $\ket{\psi}$ from $\ket{0^n}$.

Now, let $p_x$ be the probability of obtaining the basis state $\ket{x}$ on measuring the state $\ket{\psi}$. The amplitude estimation circuit referred to above uses an oracle, denoted $O_x$ to mark the ``good state'' $\ket{x}$, and involves measuring the output of the $\aealgo_A$ circuit in the standard basis; actually, it suffices to only measure the second register. We can summarise the behaviour of the $\aealgo_A$ circuit (without the final measurement) in the following lemma.
\begin{lemma}
    Given an oracle $O_x$ that marks $\ket{x}$ in some state $\ket{\psi}$ and the algorithm $A$ that acts as $A\ket{0^n}=\ket{\psi}$, $\aealgo$ on an input state $\ket{00\ldots 0}\ket{0^m}$ generates the following state.
    $$\aealgo_{A,O_x} \ket{00\ldots 0}\ket{0^m} \xrightarrow{} \beta_{x,s}\ket{\psi}\ket{\hat{p_x}} + \beta_{x,\overline{s}}\ket{\psi}\ket{E_x}$$
    Here, $|\beta_{x,s}|^2$, the probability of obtaining the good estimate, is at least $\frac{8}{\pi^2}$, and $\ket{\hat{p}_x}$ is an $m$-qubit normalized state of the form $\ket{\hat{p}_x} = \gamma_{+}\ket{\hat{p}_{x,+}} + \gamma_{-}\ket{\hat{p}_{x,-}}$ such that both $\sin^2(\pi\frac{\hat{p}_{x,+}}{2^m})$ and $\sin^2(\pi\frac{\hat{p}_{x,-}}{2^m})$ approximate $p_x$ up to $m-3$ bits of accuracy. Further, $\ket{E_x}$ is an $m$-qubit error state (normalized) such that any basis state in $\ket{E_x}$ corresponds to a bad estimate, i.e., we can write 
    $\displaystyle\ket{E_x} = \sum_{\substack{t \in \{0,1\}^m \\ t \not\in \{\hat{p}_{x,+}, \hat{p}_{x,-} \}}}\gamma_{t,x}\ket{t}$ in which $|\sin^2\left(\pi\tfrac{t}{2^m}\right) - p_x| > \tfrac{1}{2^{m-3}}$ for any such $t$.
\end{lemma}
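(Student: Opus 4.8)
The plan is to unfold the amplitude estimation construction of Brassard et al.~\cite{brassard2002quantum} that underlies $\aealgo_{A,O_x}$ and to carry the accuracy guarantee of \cref{thm:amp_est} (with $k=1$, i.e.\ \cref{cor:amp_est_our_form}) through each branch of the superposition, not merely through the distribution of the final measurement. Recall that $\aealgo_{A,O_x}$ prepares $\ket\psi=A\ket{0^n}$ on the $n$-qubit register, puts the $m$-qubit register into the uniform superposition, applies the controlled powers $Q^{2^0},\dots,Q^{2^{m-1}}$ of the Grover iterate $Q=-A\,S_{0^n}\,A^{-1}\,S_{O_x}$ (controlled on the successive qubits of the $m$-qubit register), and finishes with an inverse QFT on that register; equivalently, it runs quantum phase estimation of $Q$ on the initial state $\ket\psi$.

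First I would restrict attention to the (at most two-dimensional) $Q$-invariant subspace spanned by the normalized good and bad parts $\ket{\psi_G},\ket{\psi_B}$ of $\ket\psi$ singled out by $O_x$, on which $\ket\psi=\sin\theta\,\ket{\psi_G}+\cos\theta\,\ket{\psi_B}$ with $\sin^2\theta=p_x$. There $Q$ acts as a rotation by $2\theta$, so it has eigenvectors $\ket{w_\pm}$ with eigenvalues $e^{\pm 2i\theta}$, and a short computation gives $\ket\psi=\tfrac{1}{\sqrt 2}\bigl(e^{i\theta}\ket{w_+}+e^{-i\theta}\ket{w_-}\bigr)$ up to a global phase. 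By linearity of phase estimation, the pre-measurement state is $\tfrac{1}{\sqrt 2}\bigl(e^{i\theta}\ket{w_+}\ket{\Phi_+}+e^{-i\theta}\ket{w_-}\ket{\Phi_-}\bigr)$, where $\ket{\Phi_\pm}$ is the $m$-qubit phase-estimation output associated with the eigenphase $\pm\theta/\pi \bmod 1$; since $\sin^2(\pi t/2^m)$ takes the same value at $t$ and at $2^m-t$, outcomes concentrated near either peak encode the same estimate of $p_x$.

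Next I would invoke the standard phase-estimation tail bound --- exactly \cref{thm:amp_est}/\cref{cor:amp_est_our_form} specialized to $k=1$ --- which guarantees that for each eigenphase the good outcomes $t$, namely those with $\bigl|\sin^2(\pi t/2^m)-p_x\bigr|\le 2\pi\sqrt{p_x(1-p_x)}/2^m+\pi^2/2^{2m}$, carry weight at least $8/\pi^2$ inside $\ket{\Phi_\pm}$. Collecting these good outcomes (clustered around the two eigenphases $\pm\theta/\pi$) into a normalized state $\ket{\hat p_x}=\gamma_+\ket{\hat p_{x,+}}+\gamma_-\ket{\hat p_{x,-}}$ and everything else into a normalized error state $\ket{E_x}$ produces the stated split $\beta_{x,s}\ket\psi\ket{\hat p_x}+\beta_{x,\bar s}\ket\psi\ket{E_x}$ with $|\beta_{x,s}|^2\ge 8/\pi^2$; every basis state $t$ in $\ket{E_x}$ fails the displayed inequality by construction, and $2\pi\sqrt{p_x(1-p_x)}/2^m+\pi^2/2^{2m}\le 1/2^{m-3}$ because $2\pi+\pi^2/2^m<8$, which is precisely the claimed ``$m-3$ bits of accuracy''. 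The boundary case $p_x\in\{0,1\}$ has $\theta\in\{0,\pi/2\}$, where $Q$ equals $\pm\iden$ on the relevant subspace and phase estimation is exact, so $\ket{E_x}=0$ and the estimate is certain.

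The step I expect to need the most care is the bookkeeping of the $n$-qubit register. The honest pre-measurement state entangles the eigenvectors $\ket{w_\pm}$ with their respective phase peaks, so writing the \emph{same} factor $\ket\psi$ in front of both $\ket{\hat p_x}$ and $\ket{E_x}$ is not literally a product statement; one either folds this entanglement into the definitions of $\ket{\hat p_{x,\pm}}$ and $\ket{E_x}$, or argues that on the good part the two mirror-image peaks recombine, through $\ket\psi=\tfrac1{\sqrt2}(e^{i\theta}\ket{w_+}+e^{-i\theta}\ket{w_-})$, back into $\ket\psi$. A related minor point: the $8/\pi^2$ of Brassard et al.\ actually accounts for the two integers straddling each eigenphase, so each $\ket{\hat p_{x,\pm}}$ in the statement is most naturally read as the (one- or two-basis-state) peak at the corresponding eigenphase; this looseness does not affect the downstream applications, which only use the $8/\pi^2$ weight and the accuracy of the supported outcomes.
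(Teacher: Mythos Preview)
The paper does not prove this lemma at all: it appears in the background section as a restatement of the Brassard--H{\o}yer--Mosca--Tapp analysis, introduced with ``We can summarise the behaviour of the $\aealgo_A$ circuit \ldots\ in the following lemma'' and left without justification. Your proposal therefore supplies strictly more than the paper does, and the route you take---decompose $\ket\psi$ in the eigenbasis $\ket{w_\pm}$ of the Grover iterate, run phase estimation branchwise, and invoke the $k=1$ case of \cref{thm:amp_est}---is exactly the standard derivation behind \cref{thm:amp_est} and \cref{cor:amp_est_our_form}.

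Your final paragraph is not merely ``bookkeeping'' but a genuine observation about the lemma's phrasing. The honest post-$\aealgo$ state $\tfrac{1}{\sqrt2}\bigl(e^{i\theta}\ket{w_+}\ket{\Phi_+}+e^{-i\theta}\ket{w_-}\ket{\Phi_-}\bigr)$ is entangled across the two registers, whereas the lemma writes a product $\ket\psi\otimes(\beta_{x,s}\ket{\hat p_x}+\beta_{x,\bar s}\ket{E_x})$. You are right that this is a looseness in the statement, and also right that it is harmless here: every downstream use in the paper (the $\eqae$ construction, \probfilorcl, \ampfilorcl) either measures the estimate register immediately or cares only about the marginal weight $\ge 8/\pi^2$ on good outcomes and the accuracy bound they satisfy, both of which survive the entanglement. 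Your remark that each $\ket{\hat p_{x,\pm}}$ should really be read as the one- or two-basis-state peak around the corresponding eigenphase is likewise an accurate reading of where the $8/\pi^2$ in Brassard et al.\ comes from.
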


In an alternate setting where the oracle $O_x$ is not provided, $\aealgo_A$ can still be performed if the basis state $\ket{x}$ is provided, say, in a different register. One can construct a quantum circuit, say $EQ$, that acts on basis states as $\ket{x}\ket{y} \mapsto (-1)^{\delta_{x,y}} \ket{x} \ket{y}$. Now perform $\aealgo_A$ in which we replace all calls to $O_x$ by $EQ$ whose first input is set to $\ket{x}$ from the new register.
We name this circuit as $\eqae_A$ that implements the following operation.

$$\eqae_A\big(\ket{x}\ket{00\ldots 0}\ket{0^m}\big) \xrightarrow{} \ket{x}\big(\beta_{x,s}\ket{\psi}\ket{\hat{p}_x} + \beta_{x,\overline{s}}\ket{\psi}\ket{E_x}\big)$$
Further, since $\eqae_A$ is a quantum circuit, we could replace the state $\ket{x}$ by any superposition $\sum_x\alpha_x\ket{x}$. We would be using $\eqae_A$ in this mode in this work.

    $$\eqae_A\Big(\sum_x\alpha_x\ket{x}\ket{00\ldots 0}\ket{0^m}\Big) \xrightarrow{} \sum_x\alpha_x\beta_{x,s}\ket{x}\ket{\psi}\ket{\hat{p}_x} + \sum_x\alpha_x\beta_{x,\overline{s}}\ket{x}\ket{\psi}\ket{E_x}.$$
Let $p_x$ denote the probability of observing the basis state $\ket{x}$ when the state $\ket{\psi}$ is measured.
Notice that on measuring the first and the third registers of the output, with probability $|\alpha_x\beta_{x,s}|^2 \ge \frac{8}{\pi^2}|\alpha_x|^2$ we would obtain as measurement outcome a pair $\ket{x}\ket{\hat{p}_x}$ where $\sin^2(\pi\frac{\hat{p}_x}{2^m}) = \tilde{p}_x$ is within $\pm \frac{1}{2^{m-3}}$ of $p_x$.
Observe in this setting that the subroutine essentially estimates the probabilities $p_x$ corresponding to {\em all} the basis states $\ket{x}$ according to the distribution implicit in the superposition. 
This shows how amplitude estimation can be parallelized to identify all the probabilities in a {\em single} distribution. In a later section, we discuss another approach to parallelize amplitude estimation across multiple distributions.

\section{True Amplitude Estimation}
\label{sec:true-amp-est}


Let's recall amplitude estimation. We are given access to a state preparation oracle $A$ such that $A\ket{0^n}=\sum_{x\in\{0,1\}^n} \alpha_x\ket{x}$, denoted $\ket{\psi}$, and we are also given a ``good state'' $\ket{y}$. With probability at least $8/\pi^2$, \ampest outputs an estimate $\Tilde{\alpha}_y$ of the probability $\alpha_y$ with $\epsilon$ bits of accuracy using $O(\frac{1}{\epsilon})$ calls to $A$.
Notice that contrary to the name, the algorithm does not estimate the `amplitude' of $\ket{y}$ up to $\epsilon$ bits of accuracy; 
rather, what it estimates is its `probability' $p_y=|\alpha_y|^2$.


If one wishes to estimate the absolute value of the amplitude of $\ket{y}$ up to $\epsilon$ accuracy, one could use \ampest to estimate $p_y$ up to $\epsilon^2$ accuracy as $\Tilde{p}_y$ and output $\sqrt{\Tilde{p}_y}$.
It can be easily seen that $\sqrt{\Tilde{p}_y}$ is an $\epsilon$-estimate of $|\alpha_y|$.
However, now the query complexity worsens to $O(1/\epsilon^2)$.
We circumvent this issue and obtain an algorithm that performs an $\epsilon$-estimation of $|\alpha_y| = \sqrt{\Re(\alpha_y)^2 + \Im(\alpha_y)^2}$ in $O(1/\epsilon)$ by using a modified Hadamard test to separately estimate the real and the imaginary parts of $\alpha_y$.
Although the modified Hadamard test is well-known, to the best of our knowledge we could not find it in the form presented here.
We first explain the modified Hadamard test which is one well-known method for estimating the inner product of two states.

Say, we have two algorithms $A_{\psi}$ and $A_{\phi}$ that generate the states $A_{\psi}\ket{0^n}=\ket{\psi}$ and $A_{\phi}\ket{0^n}=\ket{\phi}$, respectively, and we want to produce a state $\ket{0}\ket{\xi_0} + \ket{1}\ket{\xi_1}$ such that the probability (say $p_0$) of observing the first register to be in the state $\ket{0}$ is linearly related to $|\braket{\psi}{\phi}|$. Though swap-test is commonly used for this purpose, there the probability is proportional to $|\braket{\psi}{\phi}|^2$; this subtle difference becomes a bottleneck if we are trying to use amplitude estimation to estimate $p_0$ with additive accuracy, say $\epsilon$. We show that the Hadamard test can do the estimation using $O(1/\epsilon)$ queries to the algorithms whereas it would be $O(1/\epsilon^2)$ if we use the swap test.

The Hadamard test circuit denoted $HT_{A_{\psi}, A_{\phi}}$ requires one additional qubit, initialized as $\ket{0}$ on which the $H$-gate is first applied. Then, we apply a conditional gate controlled by the above qubit that applies $A_{\psi}$ to the second register, initialized to $\ket{0^n}$, if the first register is in the state $\ket{0}$, and applies $A_{\phi}$ if the first register is in the state $\ket{1}$. Finally, the $H$-gate is again applied to the first register. The probability of observing the first register in the state $\ket{0}$ is $\frac{1}{2}\big(1+Re\big(|\bra{\psi}\ket{\phi}|\big)\big)$.

Thus, to obtain $Re\big(|\bra{\psi}\ket{\phi}|\big)$ with $\epsilon$ accuracy, it suffices to estimate $\frac{1}{2}\big(1+Re\big(|\bra{\psi}\ket{\phi}|\big)\big)$ with $\epsilon/2$ accuracy. This can be accomplished with \ampest in which $HT_{A_{\psi},A_{\phi}}$ is used for state preparation and good states are defined as those with $\ket{0}$ in the first register. This requires $O(1/\epsilon)$ queries to $A_{\phi}$ and $A_{\psi}$. 


The Hadamard test can also be used to estimate $Im\big(|\bra{\psi}\ket{\phi}|\big)$ with a slight bit of modification to the original algorithm.
One can see that on slipping an $S^\dagger$-gate after the first $H$-gate in the above algorithm, the final state of the algorithm will be such that the probability of measuring the register $R_1$ to be in $\ket{0}$ is
$$Pr\big[\ket{0}_{R_1}\big] = \Big\| \frac{1}{2} \big(\ket{\psi} -\iota \ket{\phi}\big) \Big\|^2 = \frac{1}{2}\big(1+Im\big(\big| \bra{\psi}\ket{\phi} \big|\big)\big).$$
As before, by estimating $\frac{1}{2}\big(1+Im\big(\big| \bra{\psi}\ket{\phi} \big|\big)\big)$ with $\epsilon/2$ accuracy, it is possible to obtain an estimate of $Im\big(|\bra{\psi}\ket{\phi}|\big)$ with $\epsilon$ accuracy.

\begin{algo}[\trueampest]
\label{algo:trueampest}
Suppose that we are given a circuit $A$ that generates the state $A\ket{0^n} = \ket{\psi}$ and another circuit $A_y$ that generates the state $A_y \ket{0^n} = \ket{y}$. Then, the probability of observing $\ket{0}$ in the first register of the output of the modified Hadamard test can be used to compute $|\Re(\braket{y}{A|0^n})|$. This probability can be estimated with $\epsilon$-additive error using classical means (requiring $O(1/\epsilon^2)$ calls to $A$ and $A_y$), or quantum means such as \aealgo, or \mdaealgo that we describe later, both of which require $O(1/\epsilon)$ calls to $A$ and $A_y$. A similar approach using a slightly modified test (described above) can be used to compute $|\Im(\braket{y}{A|0^n})|$.
\end{algo}


Now, it can be proved using elementary algebra that for any complex number $Z = C+iD$ such that $|Z|^2\le 1$, $\epsilon/\sqrt{2}$-estimates (additive) of $|C|$ and $|D|$ are sufficient to obtain an $\epsilon$-estimate (additive) of $|Z|$. So, the actual absolute value $|\braket{y}{A|0^n}|$ can therefore be also obtained with $\epsilon$-additive error after individually obtaining its real and complex part, both requiring $O(1/\epsilon)$ calls to $A$ and $A_y$.

\section{Multidistribution Amplitude Estimation (\mdistampest)}
\label{sec:mdim-amp-est}


The popularly known amplitude estimation problem is concerned with estimating the probability of a ``good/desired state'', say $\ket{\gamma}$, when the output of a  quantum algorithm $A$ on the input state $\ket{0^l}$ is measured, i.e., estimating $|\braket{\gamma}{A|0^l}|^2$. Traditionally, it takes as input a fixed $\ket{00\ldots 0}$ state, and uses two oracles, $A$ for preparing the state to analyse, and a marking oracle $O_\gamma$ to mark the good state: $O_\gamma = - \ketbra{\gamma} + \sum_{x \not= \gamma} \ketbra{x}$.

There can be two ways to formulate parallel versions of this problem, i.e., in a superposition. First is simultaneous estimation of $|\braket{\gamma^y}{A|0^l}|^2$ for a set of basis states $\{ \ket{\gamma^y} \}_y$. That is, we want to perform the following operation:
$$\sum_y \alpha_y \ket{y}\ket{00\ldots 0} \longrightarrow \sum_y \alpha_y \ket{y} \ket{\bar{\beta}_y},$$
where, $\bar{\beta}_y$s are $\epsilon$-additive estimates of $|\braket{\gamma^y}{A|0^l}|^2$. The complexity would be defined as the number of queries to $A$. Even though this appears to involve a different marking oracle in every branch of the superposition, given that we want to mark only basis states, this can effectively be solved using $\eqae$ discussed earlier in section~\ref{sec:background_qae}.
Parallelization of this kind can be found in the works (\cite{hhlalgo, jeffery2022quantum}) that use variable-time amplitude amplification.

A variant of this problem was recently studied by Apeldoorn in the name of multidimensional amplitude estimation~\cite{van2021quantum} where the goal was to obtain {\em all} the estimates in as many registers, not in a superposition.

For the second manner, 
%
suppose that we are given a family of oracular quantum algorithms $\{A^O_y~:~y=1 \ldots N\}$ each making $k$ queries to an oracle $O$, for some known constant $k$. Let $\ket{\gamma}$ be some fixed state as above. We want to simultaneously estimate $|\braket{\gamma}{A^O_y|0^l}|^2$ for all $y \in [N]$. We will henceforth drop the superscript $O$ and simply write $A_y$, but the reader will find it convenient to remember that the number of calls to $O$ will be our final query complexity measure.

In fact, there is no reason to mandate a fixed $\ket{\gamma}$ for all $k$. Suppose, in addition, we also have a family of states $\{ \ket{\gamma^y} \}_y$ ($\ket{\gamma^y}$ will be called as the $y^{th}$ desired state). In our {\tt Multidistribution Amplitude Estimation} problem (MDistAmpEst), we want to estimate the values of $|\braket{\gamma^y}{A^O_y|0^l}|^2$ for all $y$, again in a superposition as above.
MDistAmpEst combines both the parallelization attempts mentioned earlier.


The MDistAmpEst problem is formally defined below. $n$ will denote $\log(N)$.

\begin{restatable}[Multidistribution Amplitude Estimation Problem (MDistAmpEst)]{problem}{paeprob}
Given $l$-qubit oracular quantum algorithms $\alg_1^O, \alg_2^O, \cdots \alg_N^O$ where each $\alg_i^O$ makes $k$ calls ({\it wlog.}) to an oracle $O$,
and given a family of reflection operators $\{S_{\gamma^y} = \mathcal{I}-2\ket{\gamma^y}\bra{\gamma^y} : 1\le y \le N\}$ where $\gamma^y \in \{0,1\}^l$, we need a circuit for the following operation:
$$\ket{y}\ket{0^l}\ket{0^{\log{1/\epsilon}}} \xrightarrow{}\ket{y}\ket{\phi_y}\ket{\tilde{\beta}_{y\gamy}},$$
where, for each $y$, $\sin^2\Big(\frac{\pi}{2^m}\cdot \tilde{\beta}_{y\gamy}\Big)$ is an estimate of $|\bra{\gamy}A_y^O\ket{0^l}|^2$ such that
$$\bigg|\sin^2\Big(\frac{\pi}{2^m}\cdot \tilde{\beta}_{y\gamy}\Big) - |\bra{\gamy}A_y^O\ket{0^l}|^2\bigg| \le \epsilon/8$$
for some given $0<\epsilon\le 1$ and $m=\log1/\epsilon$. The query complexity of the problem is defined as the total number of calls made to $O$.
\end{restatable}

Suppose that the first register contains the basis state $\ket{y}$. The standard \aealgo algorithm would operate in two stages, first apply $A_y \otimes \iden$, and then the core estimation steps, which we denote $AmpEst_y$; $AmpEst_y$ in-turn uses $A_y$ for preparing the state to process and $S_{\gamma^y}$ to mark the good state. Thus, the overall estimation algorithm looks like the following.
$$AmpEst_y \left[ \left(A_y \ket{0^l}\right) \otimes  \ket{0^{\log 1/\epsilon}} \right] = \ket{\phi_y}\ket{\tilde{\beta}_{y\gamy}},\quad \mbox{ where } \ket{\phi_y} = A_y \ket{0^l}.$$

The solution appears simple --- implement the corresponding conditional operators
$$\mathbf{U}=\sum_y \ketbra{y} \otimes AmpEst_y\mbox{ (conditional-$AmpEst_y$, and)} \quad \mathbf{V}=\sum_y \ketbra{y} \otimes A_y\mbox{ (conditional-$A_y$)}.$$

It is easy to verify that $$\ket{y}\ket{\phi_y}\ket{\tilde{\beta}_{y\gamy}} = \Big(\sum_y \ketbra{y}\otimes AmpEst_y\Big)\Big(\sum_y \ketbra{y}{y} \otimes A_y \otimes \iden^{\log 1/\epsilon}\Big)\cdot \ket{y}\ket{0^l}\ket{0^{\log 1/\epsilon}}.$$


Operators like $\mathbf{U}$ are common in quantum circuits, e.g., they appear in the amplitude estimation circuit as $\sum_y \ketbra{y} \otimes G_f^{2^y}$ where $G_f$ is the Grover's iterator that internally calls the oracle $U_f$.
A naive approach to implement $\mathbf{U}$ is to serially apply the sequence of $\ketbra{y}{y} \otimes AmpEst_y$ operators for each individual $y$. 
Each of these operators would perform amplitude estimation with the good state as $\ket{\gamy}$ and the state preparation oracle as $A_y$, conditioned on the first register being in $\ket{y}$. 
Then, the total number of queries made by $\mathbf{U}$ to the oracle $O$ would be $O(N\frac{k}{\epsilon})$ where $O(k/\epsilon)$ is the query complexity due to a single amplitude estimation $AmpEst^O_y$. 
Similarly, the number of calls made to $O$ for implementing $\mathbf{V}$ is $N$. Thus, the total number of queries to the oracle $O$ would be $O(\frac{Nk}{\epsilon})$.


Our solution here is an algorithm \mdaealgo that performs the same task but with just $O(\frac{k}{\epsilon})$ queries to $O$. The trick is to push the control $\ketbra{y}{y}$ deep inside $AmpEst_y$ and $A_y$.

\begin{restatable}[Multidistribution Amplitude Estimation]{theorem}{simulae_thm}
\label{theorem:simulaetheorem}
    \mdaealgo uses $O(\frac{k}{\epsilon})$ queries to the oracle $O$ to solve the \textsf{Multidistribution Amplitude Estimation Problem}, i.e., with probability at least $\tfrac{8}{\pi^2}$ it outputs
    $\ket{\Phi} = \sum_y \alpha_y\ket{y}\ket{\xi_y}\ket{\tilde{\beta}_{y \gamy}}$ given any initial state $\sum_y \alpha_y \ket{y} \ket{0^l} \ket{0^m}$, where $\sin^2\big(\tilde{\beta}_{y\gamma^y}\frac{\pi}{2^m}\big)$ is an $O(\epsilon)$-estimate of $|\bra{\gamy}A_y^O\ket{0^l}|^2$ for each $y$.
\end{restatable}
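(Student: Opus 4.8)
The plan is to show that the naive serial implementation of $\mathbf{U}=\sum_y\ketbra{y}\otimes AmpEst_y$ and $\mathbf{V}=\sum_y\ketbra{y}\otimes A_y$, which costs $O(Nk/\epsilon)$ queries, can be replaced by a circuit costing only $O(k/\epsilon)$ queries that still inherits, branch by branch, the correctness guarantee of ordinary amplitude estimation (\cref{thm:amp_est}). First I would recall the internal anatomy of $\aealgo_{A_y,S_{\gamy}}$: it prepares $A_y\ket{0^l}$ and then runs quantum phase estimation on the iterate $G_y:=-A_y S_0 A_y^\dagger S_{\gamy}$ using an $m$-qubit estimation register with $m=\log(1/\epsilon)$; phase estimation consists of a Hadamard layer, the operator $\sum_{s\in\{0,1\}^m}\ketbra{s}\otimes G_y^s$ (equivalently controlled-$G_y^{2^j}$ for $j=0,\dots,m-1$, hence $2^m-1$ applications of $G_y$ in total), and an inverse QFT. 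Consequently the operation $\mathbf{U}\cdot\mathbf{V}$ we want is a fixed composition of: (i) conditional-$A_y$ and conditional-$A_y^\dagger$; (ii) the $y$-independent reflection $S_0$; (iii) the controlled reflections $\sum_y\ketbra{y}\otimes S_{\gamy}$ (query-free by hypothesis, as the family $\{S_{\gamy}\}$ is given, cf.\ the $\eqae$ construction of \cref{sec:background_qae}); and (iv) Hadamard/QFT gates on the estimation register. Only (i) touches the oracle $O$.

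The heart of the argument — ``pushing the control deep inside'' — is the claim that $\mathbf{V}=\sum_y\ketbra{y}\otimes A_y^O$ makes only $k$ calls to $O$, not $Nk$. After padding with trivial queries so that every $A_y^O$ issues exactly $k$ queries at aligned positions, write $A_y^O = W_{y,k}\,O\,W_{y,k-1}\,O\cdots O\,W_{y,0}$, where the $W_{y,i}$ are the query-free pieces acting on the (fixed) work/query register. Since $\iden\otimes O$ acts trivially on the $\ket{y}$ register it commutes with the control, so
$$\mathbf{V} \;=\; \Big(\textstyle\sum_y\ketbra{y}\otimes W_{y,k}\Big)(\iden\otimes O)\Big(\textstyle\sum_y\ketbra{y}\otimes W_{y,k-1}\Big)(\iden\otimes O)\cdots(\iden\otimes O)\Big(\textstyle\sum_y\ketbra{y}\otimes W_{y,0}\Big).$$
The $k$ factors $\iden\otimes O$ are the only oracle calls; the controlled unitaries $\sum_y\ketbra{y}\otimes W_{y,i}$ are query-free. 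Hence $\mathbf{V}$, and likewise $\mathbf{V}^\dagger$, costs exactly $k$ queries to $O$.

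From here the count is bookkeeping. Conditional-$G_y=\sum_y\ketbra{y}\otimes G_y$ is one $\mathbf{V}$, one $\mathbf{V}^\dagger$, and the query-free reflections, so $2k$ queries; the phase-estimation core applies conditional-$G_y$ a total of $2^m-1$ times, and controlling it additionally on a qubit of the estimation register only replaces each $\iden\otimes O$ by a (still single-query) controlled-$O$, a constant-factor matter; the initial state preparation is one more $\mathbf{V}$. The grand total is $O(2^m k)=O(k/\epsilon)$ queries, and the space is $n+l+m=O(\log N + l + \log(1/\epsilon))$ qubits. For correctness, note that restricted to a branch $\ket{y}$ this circuit acts on the remaining $l+m$ qubits exactly as $\aealgo_{A_y,S_{\gamy}}$ does — identical gate sequence and identical oracle calls, merely conditioned on $\ket{y}$ — so by linearity, on input $\sum_y\alpha_y\ket{y}\ket{0^l}\ket{0^m}$ the output is $\sum_y\alpha_y\ket{y}\otimes\big(\aealgo_{A_y,S_{\gamy}}\ket{0^l}\ket{0^m}\big)$. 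Invoking \cref{thm:amp_est} with $k=1$ and $m=\log(1/\epsilon)$ (and the summary of $\aealgo$'s behaviour in \cref{sec:background_qae}), in each branch the estimation register ends, with amplitude weight at least $8/\pi^2$, on a basis state $\tilde\beta_{y\gamy}$ satisfying $\big|\sin^2(\pi\tilde\beta_{y\gamy}/2^m)-|\bra{\gamy}A_y^O\ket{0^l}|^2\big|\le 2^{-(m-3)}=O(\epsilon)$, which is exactly the claimed output state $\sum_y\alpha_y\ket{y}\ket{\xi_y}\ket{\tilde\beta_{y\gamy}}$.

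The main obstacle is precisely the commutation/lock-step claim of the second paragraph: one must justify that distinct branches can run their distinct subroutines $A_y$ while sharing a single sequence of $k$ oracle calls. This requires the oracle $O$ to be the same unitary in every branch and to act on a fixed register, so that only the inter-query ``glue'' unitaries need to carry the $\ket{y}$-control (and those are query-free), and it requires the ``{\it wlog.}\ each $A_y$ makes exactly $k$ queries'' normalization, handled by padding with trivial queries at aligned positions. Everything downstream is standard amplitude-estimation analysis applied branchwise.
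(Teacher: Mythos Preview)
Your proposal is correct and follows essentially the same route as the paper: decompose each $A_y^O$ as an alternating sequence of query-free unitaries and calls to $O$, observe that controlling on $\ket{y}$ only needs to touch the query-free pieces so that the $O$-calls become uncontrolled (or controlled only on a fixed estimation qubit), and then count $2k$ queries per Grover iterate times $2^m-1$ iterations plus one initial $\mathbf{V}$, giving $O(k/\epsilon)$; the paper formalizes the ``push the control inside'' step via its $\mbc{i,p}$ notation and \cref{lemma:cproduct_2}, but the substance is identical to your commutation argument.
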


The proof of this theorem is present in Appendix~\ref{appendix:algo-mdistampest}.

\section{Amplitude Amplification using Biased Oracle}
\label{sec:aa-biased-oracle}

Given an oracle $\O$, that marks a state of interest (say $\ket{x}$), we know that the amplitude amplification algorithm allows us to obtain $\ket{x}$ {\it w.h.p.} from any given state $\ket{\psi} = A\ket{0}$ quadratically faster as compared to classical approaches using $A$ as a black-box. We use
$AA_{A,\O}$ to denote such an amplitude amplification algorithm, the key ingredient of which is the  Grover iterator $G_{A,\O} = -AR_{\ket{0}}A^{\dagger}\O$; here, $R_{\ket{i}}$ denotes the reflection operator $2\ket{i}\bra{i}-\mathbb{I}$.
The standard assumption is that the oracle $\O$ marks the state $\ket{x}$ with probability 1, i.e., $\O\ket{x}\ket{0} = \ket{x}\ket{1}$ if $x$ is `good' or else, $\ket{x}\ket{0}$.

However, if we replace the oracle $\mathcal{O}$ with a bounded-error oracle $\borcl_p$ which marks $\ket{x}$ but with some probability $p \in (1/2, 1)$, then the naive amplitude estimation algorithm does not work as intended since $\O$ would also mark the other states with probability $1-p$, and this error will accumulate faster with each iteration of the algorithm.
Hoyer et al.~\cite{biasedoraclehoyer} first investigated this setting for $p=9/10$ and showed that by following each amplification step with an error reduction step, it is possible to solve the bounded-error search problem using $O(\sqrt{N})$ queries to oracle $\borcl_p$ for a search of $1$ good element over $N$ elements.
We observed that the algorithm increases the number of ancill\ae\ qubits in each error reduction step, and since, the number of error reduction steps is $O(\sqrt{N})$, the space required for this algorithm becomes $O(\sqrt{N})$ qubits; thus, we cannot use this algorithm as a subroutine for this work.

Another approach, as hinted by the authors of the above work, would be to replace a single call to the oracle $\borcl_p$ in the Grover iterator by the following sub-circuit for marking: make $O(\log(1/\delta))$-many independent calls to $\borcl_p$, then compute the majority over those copies, and finally use the majority value for marking the state of interest.
Naturally, the ancill\ae\ qubits required for performing the majority in each of the $O(\sqrt{N})$ calls is $O(\log(1/\delta))$.
However, not all the ancill\ae~ qubits can be cleaned up for reuse due to their entanglement with the workspace qubits.
Hence, the space complexity would asymptotically remain $O(\sqrt{N})$.

Therefore, we designed another technique that uses just log-space to solve the bounded-error search problem using $O(\sqrt{N}\log(1/\delta))$ queries to $\borcl_p$~\footnote{Careful readers will observe a logarithmic overhead in the query complexity.}. The idea is to replace the operator $A$ in the Grover iterator with a newly constructed operator $\hata$ that internally uses $\borcl_p$ to enhance $A$. The role of $\hata$ will be to generate a state in which the good and the bad states are explicitly marked using an additional register whose state is $\ket{1}$ or $\ket{0}$, accordingly, and furthermore, the probability of the bad state can be made arbitrarily low.

\begin{lemma}
    \label{theorem:biasedaa_construct}
    Suppose that we are given an algorithm $A$ that generates the initial state $A\ket{0^n} = \sum_x \alpha_x \ket{x}$, a bounded-error oracle $\borcl_p$ as defined above and an error parameter $\delta$; further, let $G$ denote the set of good states, and $B$ the set of bad states. Choose an appropriate $k=O\Big(\frac{2p}{(p-1/2)^2}\log(\frac{1}{\delta})\Big)$, and construct a quantum circuit $\hata$ as described in Algorithm~\ref{algo:biasedaa}. Then, (ignoring the states of ancill\ae)
    $$\hata\ket{0^{n+k+1}}=\sum_{x \in G} \alpha_x \ket{x} \Big[ \eta^g_{x0} \ket{\dots} \ket{0} + \eta^g_{x1} \ket{\dots} \ket{1}\Big] + \sum_{x \in B} \alpha_x \ket{x} \Big[\eta^b_{x0} \ket{\dots} \ket{0} + \eta^b_{x1} \ket{\dots} \ket{1} \Big],$$
    such that $|\eta^g_{x0}|^2 \le \delta$ and $|\eta^b_{x1}|^2 \le \delta$.
\end{lemma}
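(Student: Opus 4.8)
The plan is to construct $\hata$ explicitly as the composition of three conceptual stages acting on the registers $\ket{x}\ket{\text{workspace}}\ket{\text{flag}}$, and then track how the flag qubit correlates with membership in $G$ versus $B$. First I would apply $A$ to the first register to obtain $\sum_x \alpha_x \ket{x}$. Next, on a fresh workspace register I would run $k = O\!\big(\tfrac{2p}{(p-1/2)^2}\log\tfrac{1}{\delta}\big)$ independent copies of $\borcl_p$, each controlled by (a copy of, or coherent use of) the $\ket{x}$ register, writing the $k$ bounded-error ``votes'' into $k$ ancilla qubits. Then I would coherently compute the majority of these $k$ votes into the final flag qubit, set to $\ket{1}$ if the majority says ``good'' and $\ket{0}$ otherwise. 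The output state then has exactly the claimed block structure: for $x \in G$ the amplitude on flag $\ket{0}$ is $\eta^g_{x0}$, the probability that a majority of $k$ i.i.d.\ Bernoulli$(p)$ trials is wrong, and symmetrically for $x \in B$ the amplitude $\eta^b_{x1}$ on flag $\ket{1}$ corresponds to the majority being wrong.

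The heart of the argument is the Chernoff/Hoeffding bound: if each of the $k$ votes is correct with probability $p > 1/2$ (equivalently, biased by $p - 1/2$ away from the decision boundary), then $\Pr[\text{majority wrong}] \le e^{-2k(p-1/2)^2}$, or more sharply $\le \big(\tfrac{p}{\text{something}}\big)$-type bound; in any case choosing $k = \Theta\!\big(\tfrac{1}{(p-1/2)^2}\log\tfrac{1}{\delta}\big)$ (with the extra factor of $p$ absorbing constants, matching the stated $k = O(\tfrac{2p}{(p-1/2)^2}\log\tfrac1\delta)$) forces this probability below $\delta$. Since $|\eta^g_{x0}|^2$ and $|\eta^b_{x1}|^2$ are precisely these error probabilities for each individual $x$ (the amplitudes on the wrong flag value, after squaring, are the classical error probabilities of the majority vote), we get $|\eta^g_{x0}|^2 \le \delta$ and $|\eta^b_{x1}|^2 \le \delta$ uniformly in $x$, which is the claim.

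The one subtlety I would be careful about — and I expect this to be the main obstacle — is that $\borcl_p$ is a \emph{quantum} oracle whose ``probability $p$ of marking correctly'' is itself a statement about measuring an internal register; the $k$ invocations are not literally i.i.d.\ classical coin flips but produce an entangled state over $k$ ancilla registers. I would resolve this by the standard observation that, for a fixed basis state $\ket{x}$ in the first register, the reduced behaviour of each $\borcl_p$ call on its own ancilla is a fixed mixed state (or, if $\borcl_p$ is given as a unitary with a garbage register, a fixed pure state whose overlap with the ``correct answer'' subspace is $\ge \sqrt{p}$), the $k$ calls use disjoint ancillae and hence act independently, and computing the majority coherently and then discarding (not measuring) the vote ancillae yields exactly the mixture whose flag-marginal is the classical majority distribution. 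Thus the Chernoff bound applies verbatim to $|\eta^g_{x0}|^2$ and $|\eta^b_{x1}|^2$. The remaining bookkeeping — that the $\ket{x}$-amplitudes $\alpha_x$ pass through untouched, that the workspace qubits $\ket{\dots}$ carry the vote register and whatever garbage $\borcl_p$ leaves behind, and that the total ancilla count is $n + k + 1$ — is routine and follows by inspection of Algorithm~\ref{algo:biasedaa}.
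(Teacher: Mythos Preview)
Your proposal is correct and follows essentially the same approach as the paper. The paper does not give a formal standalone proof of this lemma; it simply states Algorithm~\ref{algo:biasedaa} (apply $A$, then apply $\borcl_p$ to $R_1R_{2i}$ for $i=1,\ldots,k$, then compute the majority into $R_{maj}$) and defers the error analysis to the $\MAJ$-operator discussion in Appendix~\ref{appendix:ampest-and-amp}, which is exactly the Hoeffding-bound argument you outline, including the quantum formulation that the $k$ calls on disjoint ancillae behave as i.i.d.\ Bernoulli trials once you fix a basis state $\ket{x}$.
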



\begin{algorithm}[t]
    \small
	\caption{\label{algo:biasedaa}Constructing the algorithm $\hata_{A,\borcl_p,k}$}
	\begin{algorithmic}[1]
	    \Require Bounded-error oracle $\borcl_p$, the initial algorithm $A$ and $k$.
	    \State Initialize $R_1$ to $\ket{0^n}$. Next $k+1$ registers $R_{21}R_{22}\cdots R_{2k}R_{maj}$ are initialized to $\ket{0}$.
	    \State Apply $A$ to $R_1$.
	    \For{$i$ in $1$ to $k$}
	        \State Apply $\borcl_p$ to $R_1R_{2i}$.
	    \EndFor
	    \State Apply a conditional majority gate, using $R_1$ as the control, using the registers $R_{21} R_{22} \ldots R_{2k}$ as inputs to the majority circuit, and storing the majority value in $R_{maj}$.
	\end{algorithmic}
    \end{algorithm}

The result can be understood by taking $\delta \to 0$, and analysing the observation upon measuring the output of $\hata\ket{0^{n+k+1}}$. For $x \in G$, we are more likely to observe $\ket{x}\ket{\dots}\ket{1}$ as compared to $\ket{x}\ket{\dots}\ket{0}$, and for $x\in B$, $\ket{x}\ket{\dots}\ket{0}$ is the more likely outcome --- i.e., the information about $x$ being good or bad is encoded in the final qubit, {\it w.h.p.}.

The next step is pretty obvious. We run one of the amplification routines using $\hata$ as the state preparation oracle, and amplifying the probability of states like $\ket{x}\ket{\dots}\ket{1}$. This would amplify such states for $x\in G$ but, unfortunately, also for $x \in B$. However, if we choose $\delta$ sufficiently small, we can ensure that the probability of states like $\ket{x}\ket{\dots}\ket{1}$, where $x\in B$, would be extremely small, and hence, would be within tolerable limits even after Algorithm~\ref{algo:errored-amplify}.
    
Let $\lambda$ be the probability obtaining some good state on measuring $\ket{\psi}$ if some good state is present in $\ket{\psi}$; formally, $\lambda = \min_x(|\alpha_x|^2)$ over all ``good'' $x$. We will use the fact that FPAA~\cite{yoder2014fpaa} employed by the algorithm can amplify an unknown success probability, lower bounded by $\lambda$, to any desired $1-\delta$ within $O(\tfrac{1}{\sqrt{\lambda}} \log \tfrac{1}{\delta})$ iterations of Line~\ref{line:FPAA_inside_algo}.

\begin{restatable}[]{theorem}{biasedAAthm}
\label{thm:erroed-oracle-amplification-our}
    Given an $n$-qubit algorithm $A$ that generates the initial state $A\ket{0} = \ket{\psi}=\sum_{x\in \{0,1\}^n}\alpha_x\ket{x}$, a bounded-error oracle $\borcl_p$ as defined above and an error parameter $\delta$, there exists an algorithm that uses $O\Big(\frac{p}{(p-\frac{1}{2})^2\sqrt{\lambda}}\log(\frac{1}{\lambda\delta})\Big)$ queries to $\borcl_p$ along with $n+O\Big(\frac{2p}{(p-1/2)^2}\log(1/\lambda\delta)\Big)$ qubits and outputs a good state with probability at least $1-\delta$, if one exists. If there is no good state in $\ket{\psi}$ then the algorithm outputs ``No Solution'' with probability at least $1-\delta$.
\end{restatable}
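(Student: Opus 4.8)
The plan is to instantiate fixed-point amplitude amplification (FPAA) with the operator $\hata$ supplied by Lemma~\ref{theorem:biasedaa_construct} as the state-preparation operator, declaring a basis state ``marked'' precisely when its final flag qubit is $\ket1$. I would instantiate $\hata$ with internal error parameter $\delta' := c\lambda\delta$ for a suitable small absolute constant $c\le 1/4$; by Lemma~\ref{theorem:biasedaa_construct} this forces $k = O\!\big(\tfrac{2p}{(p-1/2)^2}\log\tfrac1{\delta'}\big) = O\!\big(\tfrac{p}{(p-1/2)^2}\log\tfrac1{\lambda\delta}\big)$, and $\hata$ then acts on $n+k+1$ qubits while issuing exactly $k$ calls to $\borcl_p$. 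Algorithm~\ref{algo:errored-amplify} runs FPAA with the phased Grover iterates built from one call each to $\hata$ and $\hata^\dagger$ (hence $2k$ calls to $\borcl_p$ per iterate) together with the reflection about $\ket{0^{n+k+1}}$ and the ancilla-free reflection about the flag-$\ket1$ subspace, using the promised lower bound $\lambda$ and target error $\delta'' := \delta/2$; finally it measures, reporting the observed $\ket x$ if the flag reads $\ket1$ and ``No Solution'' otherwise.

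\emph{Case 1: a good state exists.} First I would bound the marked weight $\mu$ of $\hata\ket{0^{n+k+1}}$. Using Lemma~\ref{theorem:biasedaa_construct}, the contribution from $G$ is $\sum_{x\in G}|\alpha_x|^2|\eta^g_{x1}|^2 \ge (1-\delta')\sum_{x\in G}|\alpha_x|^2 \ge (1-\delta')\lambda \ge \lambda/2$, while the false-positive contribution from $B$ is $\sum_{x\in B}|\alpha_x|^2|\eta^b_{x1}|^2 \le \delta'\sum_{x\in B}|\alpha_x|^2 \le \delta'$, so $\mu \in [\lambda/2,\,1]$. Since $\mu$ exceeds $\lambda$ up to a factor $2$, FPAA amplifies the marked weight to at least $1-\delta''$ in $O\!\big(\tfrac1{\sqrt\lambda}\log\tfrac1{\delta''}\big)$ iterations of Line~\ref{line:FPAA_inside_algo}; moreover FPAA only rotates amplitude \emph{into} the marked subspace without altering relative amplitudes \emph{within} it, so after measurement we see a marked $\ket x$ with probability $\ge 1-\delta''$ and, conditioned on that event, $\Pr[x\in B] = \tfrac{1}{\mu}\sum_{x\in B}|\alpha_x|^2|\eta^b_{x1}|^2 \le \tfrac{2\delta'}{\lambda} = 2c\delta \le \delta/2$. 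Hence a good state is output with probability at least $(1-\delta/2)(1-\delta/2) \ge 1-\delta$.

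\emph{Case 2: no good state.} Now $G=\emptyset$, so the marked weight of $\hata\ket{0^{n+k+1}}$ is at most $\sum_{x\in B}|\alpha_x|^2|\eta^b_{x1}|^2 \le \delta'$, well below the promised $\lambda$. I would invoke the property of Yoder's FPAA that an initial marked weight $a$ is amplified to at most $O(a/\lambda)$ up to $\mathrm{polylog}(1/\delta'')$ factors, so the amplified marked weight is $O(\delta'/\lambda)\cdot\mathrm{polylog}(1/\delta) = O(c\delta)\cdot\mathrm{polylog}(1/\delta)$; choosing $c$ small enough makes this $\le\delta$, whence the measurement shows flag $\ket0$ and we output ``No Solution'' with probability $\ge 1-\delta$. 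Combining both cases, the query count is the number of FPAA iterations times the $O(k)$ calls per iterate, $O\!\big(\tfrac1{\sqrt\lambda}\log\tfrac1{\delta''}\big)\cdot O(k) = O\!\big(\tfrac{p}{(p-1/2)^2\sqrt\lambda}\log\tfrac1{\lambda\delta}\big)$ up to the logarithmic overhead flagged earlier, and the qubit count is that of $\hata$, namely $n+k+1 = n + O\!\big(\tfrac{2p}{(p-1/2)^2}\log\tfrac1{\lambda\delta}\big)$ (the transient scratch of the conditional majority circuit is uncomputed, so it does not enter the persistent count).

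The main obstacle I anticipate is two-fold. The first is making the false-positive bookkeeping airtight: one must argue that the FPAA rotation acts within the two-dimensional span of the normalized marked and unmarked components of $\hata\ket{0^{n+k+1}}$, hence preserves the good-to-bad ratio inside the flag-$\ket1$ block, and that this ratio is exactly what governs the conditional error surviving measurement. The second, and harder, is pinning down what FPAA does to an initial marked weight lying \emph{below} the promised threshold $\lambda$ — this is what certifies the ``No Solution'' verdict rather than merely amplifying spurious weight; if Yoder's stated guarantees are not directly usable in that regime, a clean fallback is to precede the decision by one round of amplitude estimation on $\hata$ (cost $O(1/\sqrt\lambda)$ queries and $O(\log(1/\lambda))$ qubits) to separate $\mu\ge\lambda/2$ from $\mu\le\delta'$, which leaves all claimed bounds unchanged.
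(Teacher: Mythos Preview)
Your proposal is correct and follows essentially the same route as the paper: instantiate $\hata$ via Lemma~\ref{theorem:biasedaa_construct}, run FPAA with flag-$\ket1$ as the marked subspace, and split into the two cases, using in Case~1 the fact that amplification acts in the two-dimensional span and hence preserves the good/bad ratio inside the marked block.

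The one substantive difference is the choice of $\delta'$. The paper sets $\delta' = \lambda^4\delta^2$ rather than your $c\lambda\delta$; this more aggressive choice (still giving $k = O\big(\tfrac{p}{(p-1/2)^2}\log\tfrac{1}{\lambda\delta}\big)$) is precisely what dissolves the obstacle you flag in Case~2. With $\delta' = \lambda^4\delta^2$, the paper simply lower-bounds the number of Grover-type iterations needed to raise the marked weight from $\delta'$ to $\delta$ by $\Omega(1/\lambda)$, which exceeds the $O(1/\sqrt{\lambda})$ iterations actually performed, so the marked weight stays below $\delta$ --- no appeal to fine properties of FPAA below threshold and no amplitude-estimation fallback is needed. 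Your constant-$c$ choice would require $c$ to shrink polylogarithmically in $1/\delta$ to absorb the $\log^2(1/\delta)$ blow-up from the $O(\tfrac{1}{\sqrt{\lambda}}\log\tfrac{1}{\delta})$ iteration count, which is harmless asymptotically but worth making explicit. Conversely, your Case~1 conditional bound $\Pr[x\in B \mid \text{flag}=1] \le 2\delta'/\lambda$ is tighter than the paper's, which only establishes a $3/4$ conditional success probability.
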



\noindent When $p$ is a constant, the number of queries is $\Ot(\tfrac{1}{\sqrt{\lambda}})$ and $\Ot(1)$ additional qubits are used.

\begin{algorithm}[!h]
    \small
	\caption{\label{algo:errored-amplify}Amplitude amplification using a biased oracle}
	\begin{algorithmic}[1]
	    \Require Bounded-error oracle $\borcl_p$, the initial algorithm $A$ and $k$.
	    \State Initialize $k+2$ registers such that the first register $R_1$ is initialize to $\ket{0^n}$ and the next $k+1$ registers $R_{21}R_{22}\cdots R_{2k}R_{maj}$ are initialized to $\ket{0}$.
	    \State Use Algorithm~\ref{algo:biasedaa} to construct $\hata_{A,\borcl_p,k}$. Then, apply $\hata_{A,\borcl_p,k}$ on $R_1R_{21}R_{22}\cdots R_{2k}R_{maj}$.\label{line:FPAA_inside_algo}
	    \State Apply the fixed point amplitude amplification algorithm (FPAA) on $R_{maj}$ using the good state as $\ket{1}$ and with error at most $\delta/2$. Stop if the number of iterations crosses the limit of $\Ot(\tfrac{1}{\sqrt{\lambda}})$ set by the FPAA algorithm.
	    \State Measure $R_{maj}$ as $m$. If $m=\ket{0}$, output ``No Solution''. Else, measure $R_1$ as $y$ and output $y$.
	\end{algorithmic}
    \end{algorithm}

\begin{proof}[Proof Sketch of Theorem~\ref{thm:erroed-oracle-amplification-our}]
    Set $k = O(\frac{2p}{(p-1/2)^2}\log(1/\delta'))$ for a $\delta'=\lambda^4\delta^2$ and construct $\hata_{A,\borcl_p,k}$.
    From Lemma~\ref{theorem:biasedaa_construct}, see that this $\hata$ (dropping the subscripts) behaves as \[\hata\ket{0^n}\ket{0^k}\ket{0} = \sum_x \alpha_x \ket{x} \big(\eta_{x,0}\ket{\phi_{x,0}}\ket{0}+\eta_{x,1}\ket{\phi_{x,1}}\ket{1}\big) = \ket{\Psi}\] where $|\eta_{x,f(x)}|^2 \ge 1-\delta'$ for any $x$; here $f(x)$ indicates the ``goodness'' of $x$.
    
    Now, two cases can happen. We refer the reader to Appendix~\ref{appendix:biased-aa} for the complete proof.
    
    \textbf{Case (i): } Let $f(x)=0$ for all $x$. For this case, we show that in order for the probability of $\ket{1}$ in $R_{maj}$ to become at least $\delta$ post amplification from the initial $\delta'=\lambda^4 \delta^2$, the initial state needs to be amplified $\Omega(\frac{1}{\lambda})$ number of times.
    However, since we perform only $O(\frac{1}{\sqrt{\lambda}})$ many iterations of amplification, the probability of $\ket{1}$ in $R_{maj}$ remains below $\delta$. Thus, the algorithm outputs ``No solution'' with high probability.

    \textbf{Case (ii): } Let $f(x)=1$ for some $x$. In this case, for all $x$ such that $f(x)=1$, we will have $|\eta_{x,1}|^2\ge 1-\delta'$. Then the probability of measuring the last qubit as $\ket{1}$ is at least $\sum_{x : f(x)=1}|\alpha_x \eta_{x,1}|^2 \ge \lambda(1-\delta') > \lambda/2$ (since $\delta < 0.5$).
    Now, using the fixed point amplitude amplification subroutine, in $O(\frac{1}{\sqrt{\lambda}})$ iterations, we obtain a final state post amplification such that with probability $1-\delta$ we obtain $\ket{1}$ on measuring the $R_{maj}$ register.
    
    One more step is needed here. One may argue that even though we obtain $\ket{1}$ with probability $1-\delta$ in $R_{maj}$, the state $\ket{x}$ obtained on measuring $R_1$ need not be such that $f(x)=1$. However, we were able to show that, with probability at least $3/4$ we will obtain a good $x$ upon measuring $R_1$ if we had obtained $\ket{1}$ in $R_{maj}$.
\end{proof}

\section{Probability and Amplitude Filtering}
\label{sec:profil_algo}


The algorithms for both probability filtering and amplitude filtering follow a similar pattern. The first step is to design appropriate biased oracles --- \probfilorcl for probability filtering and \ampfilorcl for amplitude filtering; they are described in \cref{algo:probfilborcl-main} and \cref{algo:ampfilborcl-main}. The oracles are used for marking a basis state to be good if its probability or amplitude, as required, is more than $\tau$. Then, use our amplitude amplification algorithm for biased-oracle (see Section~\ref{sec:aa-biased-oracle}) to amplify the probability of finding a marked state, if one exists. We summarise the behaviour of the filtering algorithms in the following theorems.

\begin{restatable}[Additive-error algorithm for \amprob]{theorem}{ampfilthm}
\label{theorem:ampfil}
The \amprob problem is given as input a $(\log(m)+a)$-qubit quantum algorithm $O_D$ that generates a distribution $D:\big(p_x = |\alpha_x|^2\big)_{x=1}^m$ upon measurement of the first $\log(m)$ qubits of \\\centerline{$\displaystyle O_D \ket{0^{\log(m)+a}} = \sum_{x \in \{0,1\}^{\log(m)}} \alpha_x \ket{x} \ket{\psi_x} = \ket{\Psi}$ (say)}\\ in the standard basis, and also a threshold $\tau \in (0,1)$.

For any choice of parameters $0 < \epsilon < \tau$ for additive accuracy and $\delta$ for error, there exists a quantum algorithm that uses $O\big((\log(m)+\log(\frac{1}{\epsilon})+a)\log(\frac{1}{\delta\tau})\big)$ qubits and makes $O(\frac{1}{\epsilon\tau}\deltaerr)$ queries to $O_D$ such that when its final state is measured in the standard basis, we observe the following. 
\begin{enumerate}
    \item If $|\alpha_x| < \tau - \epsilon$ for all $x$ then the output register is observed in the state $\ket{0}$ with probability at least $1-\delta$.
    \item If $|\alpha_x| \ge \tau$ for any $x$, then with probability at least $1-\delta$ the output register is observed in the state $\ket{1}$ and some $x$ such that $|\alpha_x| \ge \tau$ is returned as output.
\end{enumerate}
\end{restatable}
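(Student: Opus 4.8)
The plan is to run the \emph{estimate--mark--amplify} pipeline, filling each stage with one of the subroutines built above: \trueampest together with \mdaealgo for the estimation, a bounded-error oracle \ampfilorcl for the marking, and the biased-oracle amplitude amplification of Theorem~\ref{thm:erroed-oracle-amplification-our} for the amplification.

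\textbf{Estimate (for all $x$ in superposition).} Starting from $\ket{\Psi}=O_D\ket{0^{\log m+a}}=\sum_x\alpha_x\ket{x}\ket{\psi_x}$, I would, for every branch $x$ simultaneously, produce an additive $(\epsilon/4)$-estimate $\widetilde{|\alpha_x|}$ of $|\alpha_x|=\sqrt{p_x}$ in an $O(\log\tfrac1\epsilon)$-qubit register kept beside $\ket x$. Plain \ampest does not suffice: it estimates $p_x$, and extracting $\sqrt{p_x}$ to accuracy $\epsilon$ would cost $O(1/\epsilon^2)$ queries, whereas the Hadamard-test route of \trueampest (Algorithm~\ref{algo:trueampest}) gives $|\alpha_x|$ to accuracy $\epsilon$ in $O(1/\epsilon)$ queries (estimate $|\Re\alpha_x|$ and $|\Im\alpha_x|$ each to accuracy $\tfrac{\epsilon}{4\sqrt2}$ via the two Hadamard tests, then combine). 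The obstruction to running this in superposition is that \trueampest's second state-preparation operator is branch-dependent (it must prepare $\ket x$ in the $x$-th branch), which is exactly the \mdistampest situation; \mdaealgo (Theorem~\ref{theorem:simulaetheorem}) carries it out with only $O(1/\epsilon)$ total calls to $O_D$ rather than $O(m/\epsilon)$. Each branch's estimate is correct with probability at least $8/\pi^2$.

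\textbf{Mark.} I would then define \ampfilorcl to compute this estimate, compare it against $\tau-\tfrac\epsilon2$, set the output qubit to $\ket1$ iff $\widetilde{|\alpha_x|}\ge\tau-\tfrac\epsilon2$, and uncompute the scratch registers. Because each branch's estimate is correct only with probability $p_0=8/\pi^2>\tfrac23$, \ampfilorcl is a bounded-error oracle in the sense of Section~\ref{sec:aa-biased-oracle} with parameter $p=p_0$: if $|\alpha_x|\ge\tau$ then $\widetilde{|\alpha_x|}\ge\tau-\tfrac\epsilon4$, so $x$ is flagged with probability $\ge p$; if $|\alpha_x|<\tau-\epsilon$ then $\widetilde{|\alpha_x|}<\tau-\tfrac{3\epsilon}4$, so $x$ is not flagged with probability $\ge p$. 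This is exactly where the promise gap $\epsilon$ is used.

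\textbf{Amplify and count.} I would feed $O_D$ as the initial algorithm and \ampfilorcl as the bounded-error oracle $\borcl_p$ into Theorem~\ref{thm:erroed-oracle-amplification-our} with error $\delta$, with the good set $\{x:|\alpha_x|\ge\tau\}$ and the bad set $\{x:|\alpha_x|<\tau-\epsilon\}$. That theorem then directly delivers conclusion~(1) (no good state, so the output qubit is $\ket0$ with probability $\ge1-\delta$) and conclusion~(2) (a good state exists, so with probability $\ge1-\delta$ the output qubit is $\ket1$ and a good $x$ is returned). Since $\lambda:=\min_{x\ \mathrm{good}}p_x\ge\tau^2$ and $p$ is a constant, it makes $O\big(\tfrac1{\sqrt\lambda}\log\tfrac1{\lambda\delta}\big)=O\big(\tfrac1\tau\log\tfrac1{\delta\tau}\big)$ calls to \ampfilorcl, each costing $O(1/\epsilon)$ queries to $O_D$, for the claimed $O\big(\tfrac1{\epsilon\tau}\deltaerr\big)$ queries. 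For the qubit count, the $\hata$ of Lemma~\ref{theorem:biasedaa_construct} internally uses $k=O\big(\tfrac{2p}{(p-1/2)^2}\log\tfrac1{\lambda\delta}\big)=O\big(\log\tfrac1{\delta\tau}\big)$ copies of \ampfilorcl, and each copy --- being built on \mdaealgo --- carries $O\big(\log m+\log\tfrac1\epsilon+a\big)$ workspace qubits that stay entangled and hence are not reusable across the copies; since $k$ is only logarithmic (contrast the $O(\sqrt{1/\lambda})$-qubit growth of the Hoyer et al.\ error-reduction step), the total is $O\big((\log m+\log\tfrac1\epsilon+a)\log\tfrac1{\delta\tau}\big)$.

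\textbf{Main obstacle.} The crux is the Estimate stage: getting the \emph{amplitude} rather than the probability to $\epsilon$-accuracy in $O(1/\epsilon)$ queries \emph{and} doing it for all $m$ branches at once without an $m$-fold overhead --- precisely the reason \trueampest and \mdistampest were developed --- so the real content is verifying they compose and that an $8/\pi^2$-reliable per-branch estimate is an admissible bounded-error oracle for Theorem~\ref{thm:erroed-oracle-amplification-our}. What remains is bookkeeping: the error budget (Theorem~\ref{thm:erroed-oracle-amplification-our} internally wants a $\delta'=\lambda^4\delta^2$-accurate $\hata$, which fixes $k$ and hence both an extra log factor in space and one in the query bound), and checking that the gap $\epsilon$ with estimation accuracy $\epsilon/4$ both separates conclusions~(1) and~(2) and pins down the returned index.
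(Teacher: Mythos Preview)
Your proposal is correct and follows essentially the same approach as the paper: build the biased oracle \ampfilorcl from \trueampest run in superposition via \mdaealgo (this is exactly Lemma~\ref{lemma:amp-fil-boracle}, with the same $8/\pi^2$ success probability and $O(1/\epsilon)$ queries), then feed it into the biased-oracle amplitude amplification of Theorem~\ref{thm:erroed-oracle-amplification-our} with $\lambda\ge\tau^2$. Your complexity accounting (queries $O(\tfrac{1}{\epsilon\tau}\log\tfrac{1}{\delta\tau})$, qubits $O((\log m+\log\tfrac1\epsilon+a)\log\tfrac{1}{\delta\tau})$ from the $k$ non-reusable copies inside $\hata$) matches the paper's. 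One cosmetic slip: you say ``uncompute the scratch registers'' in the Mark stage, but then correctly observe they ``stay entangled and hence are not reusable''; the paper simply leaves them in place (cf.\ the form of \ampfilorcl in Lemma~\ref{lemma:amp-fil-boracle}), and your space count already reflects this.
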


\begin{restatable}[Additive-error algorithm for \prob]{theorem}{probfilthm}
\label{theorem:probfil}
The \prob filtering problem is given as input a $(\log(m)+a)$-qubit oracle $O_D$ that generates a distribution $D = (p_i)_{i=1}^m$ on the first $\log(m)$ qubits of the state $O_D\ket{0^{\log(m)+a}}$, and also a threshold $\tau$.

For any choice of parameters $0 < \epsilon < \tau$ for additive accuracy and $\delta$ for error, there exists a quantum algorithm that uses $O\big((\log(m)+\log(\frac{1}{\epsilon})+a)\log(\frac{1}{\delta\tau})\big)$ qubits and makes $O(\frac{1}{\epsilon\sqrt{\tau}}\deltaerr)$ queries to $O_D$ such that when its final state is measured in the standard basis, we observe the following. 
\begin{enumerate}
    \item If $p_x < \tau - \epsilon$ for all $x$ then the output register is observed in the state $\ket{0}$ with probability at least $1-\delta$.
    \item If $p_x \ge \tau$ for any $x$, then with probability at least $1-\delta$ the output register is observed in the state $\ket{1}$ and some $x$ such that $p_x \ge \tau$ is returned as output.
\end{enumerate}
\end{restatable}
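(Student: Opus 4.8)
The plan is to assemble the three subroutines developed earlier into the "estimate–mark–amplify" pipeline already advertised in the introduction. First I would set up the estimation stage: run amplitude estimation for every basis state $\ket{x}$ in superposition. Concretely, feed the state $\sum_x \alpha_x \ket{x}\ket{\psi_x}$ together with fresh ancillae into $\eqae_{O_D}$, which (as shown in Section~\ref{sec:background_qae}) simultaneously estimates all the probabilities $p_x$ of a single distribution: with amplitude at least $\tfrac{8}{\pi^2}|\alpha_x|^2$ per branch, the third register holds $\ket{\hat p_x}$ with $|\sin^2(\pi \hat p_x/2^m) - p_x| \le 2^{-(m-3)}$. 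I would choose $m = \log(1/\epsilon') + O(1)$ with $\epsilon' = \Theta(\epsilon)$ so the estimation error is below $\epsilon/2$; this costs $O(1/\epsilon)$ queries to $O_D$ and $O(\log(m) + \log(1/\epsilon) + a)$ qubits. Note for \prob we do not need \trueampest or \mdistampest — the state-generation oracle $O_D$ is common across all branches, so the single-distribution parallelization via $\eqae$ already suffices; this is the point where the \amprob proof would differ.

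Next I would construct the biased oracle \probfilorcl (Algorithm~\ref{algo:probfilborcl-main}). Given the estimate register $\ket{\hat p_x}$, it computes the comparison $[\sin^2(\pi \hat p_x/2^m) \ge \tau - \epsilon/2]$ into a flag qubit. The crucial point is that this oracle is \emph{biased}: because $\eqae$ produces the good estimate only with probability $\ge 8/\pi^2 > 1/2$ in each branch (and otherwise produces some $\ket{E_x}$ that may flip the flag the wrong way), the flag qubit is correct with probability $p = 8/\pi^2$ rather than $1$. When $p_x \ge \tau$ the good estimate certainly exceeds $\tau - \epsilon/2$, and when $p_x < \tau - \epsilon$ it certainly falls below $\tau - \epsilon/2$; so on the promise, the only source of error is the estimation failure, i.e. exactly the bounded-error-oracle setting of Section~\ref{sec:aa-biased-oracle} with $p = 8/\pi^2 > 2/3$. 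I would then invoke Theorem~\ref{thm:erroed-oracle-amplification-our} with state-preparation oracle $O_D$ (amplitude of any good state lower-bounded by $\lambda \ge \tau - \epsilon = \Omega(\tau)$, since if some $p_x \ge \tau$ then $|\alpha_x|^2 \ge \tau$), biased oracle \probfilorcl, and error parameter $\delta$. That gives $O\bigl(\tfrac{p}{(p-1/2)^2 \sqrt{\lambda}} \log\tfrac{1}{\lambda\delta}\bigr) = O\bigl(\tfrac{1}{\sqrt{\tau}}\log\tfrac{1}{\delta\tau}\bigr)$ calls to \probfilorcl, each of which costs $O(1/\epsilon)$ queries to $O_D$ (the estimation stage), for a total of $O\bigl(\tfrac{1}{\epsilon\sqrt{\tau}}\log\tfrac{1}{\delta\tau}\bigr)$ queries; the qubit count is $O(\log(m)+\log(1/\epsilon)+a)$ from one estimation block, multiplied by the $O\bigl(\tfrac{p}{(p-1/2)^2}\log\tfrac{1}{\lambda\delta}\bigr) = O(\log\tfrac{1}{\delta\tau})$ factor from Theorem~\ref{thm:erroed-oracle-amplification-our}, which matches the claimed $O\bigl((\log m + \log\tfrac1\epsilon + a)\log\tfrac1{\delta\tau}\bigr)$.

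Finally I would read off the two guaranteed behaviours. If $p_x < \tau - \epsilon$ for every $x$, then no branch ever carries a correctly-set flag, so $\lambda$ (the true good-probability) is $0$; by the "No Solution" branch of Theorem~\ref{thm:erroed-oracle-amplification-our} the output register is $\ket{0}$ with probability $\ge 1-\delta$. If $p_x \ge \tau$ for some $x$, then that branch has $|\alpha_x|^2 \ge \tau$ and its flag is correctly set, so $\lambda \ge \tau - \epsilon$; Theorem~\ref{thm:erroed-oracle-amplification-our} returns $\ket{1}$ and a sample $x$ from the good set with probability $\ge 1-\delta$, and since the comparison used threshold $\tau - \epsilon/2$ while the promise separates the cases by $\epsilon$, every returned $x$ genuinely satisfies $p_x \ge \tau$ (one must track the $\le \epsilon/2$ estimation slack and fold it into the constants; I would absorb the $\Theta(1)$ losses by taking $m$ slightly larger). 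The main obstacle I anticipate is bookkeeping the error budget carefully — reconciling the $2^{-(m-3)}$ accuracy of $\eqae$, the $8/\pi^2$ per-branch success probability feeding into the "$p$" of the biased-oracle theorem, and the $\delta'=\lambda^4\delta^2$ internal amplification error from that theorem's proof — so that the composed failure probability is genuinely $\le \delta$ and the promise gap $\epsilon$ is never violated; none of the individual steps is hard, but the constants must line up.
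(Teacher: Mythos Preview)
Your proposal is correct and follows essentially the same approach as the paper: build \probfilorcl from $\eqae$ (parallel amplitude estimation over a single distribution, no need for \mdaealgo here), observe that it is a biased oracle with success probability $p=8/\pi^2$, and then invoke Theorem~\ref{thm:erroed-oracle-amplification-our} with $\lambda\ge\tau$ to get the stated query and qubit bounds. The only point to tighten is your claim that ``every returned $x$ genuinely satisfies $p_x\ge\tau$'': in Case~2 the theorem does not forbid other indices with $p_x\in[\tau-\epsilon,\tau)$, and those could also be flagged and returned; the paper's statement should be read as a promise problem (no $p_x$ in the gap), so just make that assumption explicit rather than arguing it from the $\epsilon/2$ slack.
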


\begin{algorithm}[!hb]
    \caption{Constructing biased-oracle \probfilorcl for probability filtering \label{algo:probfilborcl-main}}
    \begin{algorithmic}[1]
        \Require Oracle $O_D$ (with parameters $m$, $a$), threshold $\tau$, and accuracy $\epsilon$.
        \Require Input register $R_1$ set to some basis state $\ket{x}$ and output register $R_5$ set to $\ket{0}$. 
        \State Set $r=\log(m)+a$, $\tau' = \frac{1}{2}(1+\tau - \frac{\epsilon}{8})$, $q = \lceil \log(\frac{1}{\epsilon}) \rceil +5$ and $l= q+3$.
        \State Set $\tau_1 = \left\lfloor{\frac{2^l}{\pi}\sin^{-1}(\sqrt{\tau'})}\right\rfloor$
        \State Initialize ancill\ae\ registers $R_2R_3R_4$ of lengths $r, l$ and 1, respectively, and set $R_3 = \ket{\tau_1}$.
        \State {\bf Stage 1:} Apply \eqae (sans measurement) with $R_2$ as the input register, $R_4$ as the output register and $O_D$ is used as the state preparation oracle.
        $R_1$ is used in $EQ$ to determine the ``good state''. \eqae is called with error at most $1- \frac{8}{\pi^2}$ and additive accuracy $\frac{1}{2^q}$.
        \State {\bf Stage 2:} Set $R_5$ to 1 if the estimate of the probability, calculated using $R_4$, meets $\tau$.
        \State \qquad Use {\tt HD$_l$} on $R_3$ and $R_4$ individually.
        \State \qquad Use ${\tt CMP}$ on $R_3 = \ket{\tau_1}$ and $R_4$ as input registers and $R_5$ as output register.
        \State \qquad Use {\tt HD$^{\dagger}_l$} on $R_3$ and $R_4$ individually.
    \end{algorithmic}
\end{algorithm}

\begin{algorithm}[!htbp]
    \caption{Constructing biased-oracle \ampfilorcl for real amplitude filtering \label{algo:ampfilborcl-main}}
    \begin{algorithmic}[1]
        \Require Oracle $O_D$ (with parameters $m$, $a$), threshold $\tau$, and accuracy $\epsilon$.
        \Require Input register $R_1$ set to some basis state $\ket{x}$ and output register $R_5$ set to $\ket{0}$. 
        \State Set $r=\log(m)+a$, $\tau' = \frac{1}{2}(1+\tau - \frac{\epsilon}{8})$, $q = \lceil \log(\frac{1}{\epsilon}) \rceil +5$ and $l= q+3$.
        \State Set $\tau_1 = \left\lfloor{\frac{2^l}{\pi}\sin^{-1}(\sqrt{\tau'})}\right\rfloor$
        \State Initialize ancill\ae\ registers $R_{21}R_{22}R_3R_4$ of lengths $1, r, l$ and $l$, respectively. Set $R_3 = \ket{\tau_1}$.
        \State {\bf Stage 1:} Apply \mdaealgo (sans measurement) with $R_{21}$ as the input register, $R_4$ as the output register and $\ket{0}$ as the ``good state''. A controlled-Hadamard test, i.e., $\sum_y \ketbra{y} \otimes HT_{A_y,O_D}$ over the registers $R_1, R_{21}$ and $R_{22}$ is used as the state preparation oracle, of which only $R_{21}$ is used in \mdaealgo.  \mdaealgo is called with
        error at most $1- \frac{8}{\pi^2}$ and additive accuracy $\frac{1}{2^q}$. Here, $\{A_y ~:~ y \in \{0,1\}^r \}$ is a family of oracles each of which acts as $A_y\ket{0^r}=\ket{y}$.
        
    
        \State {\bf Stage 2:} Set $R_5$ to 1 if the estimate of the probability, calculated using $R_4$, meets $\tau$.
        \State \qquad Use {\tt HD$_l$} on $R_3$ and $R_4$
        individually.
        \State \qquad  Use ${\tt CMP}$ on $R_3 = \ket{\tau_1}$ and $R_4$ as input registers 
        and $R_5$ as output register.
        \State \qquad Use {\tt HD$^{\dagger}_l$} on $R_3$ and $R_4$
        individually.
    \end{algorithmic}
\end{algorithm}

Now we explain how we implemented the biased oracles.
The role of these oracles is to mark the basis states whose probability or amplitude is at least $\tau$, with at most some small error probability. 
Their construction involves two stages: an estimation stage followed by a marking stage.
For the \probfilorcl, we use \eqae to estimate the probabilities of each basis state in superposition.
The construction of the \ampfilorcl differs from that of the \probfilorcl only in the estimation stage. Using \eqae for estimating the amplitudes would lead to a query complexity that depends quadratically on $1/\epsilon$. To circumvent that we replace it with the \mdaealgo which uses \trueampest to estimate the absolute value of the amplitudes of each of the basis states in superposition.
In the marking stage, the algorithm uses straight-forward quantum operations to compare the estimate, in one register, with $\tau$, hardcoded in a suitable encoding in another register. 
Since \eqae and \mdaealgo perform estimation with error probability that is at most $1-\frac{8}{\pi^2}$, both \probfilorcl and \ampfilorcl mark the good basis states with probability at least $8/\pi^2$.

The query complexity arising  from biased-oracle amplitude amplification (see Section~\ref{sec:aa-biased-oracle}) scales as $\Ot(\tfrac{1}{\sqrt{\lambda}})$ where $\lambda = \min |\alpha_x|^2$ among all $x$ that are good. For probability filtering, we want to amplify any state $\ket{x}$ such that $|\alpha_x|^2 \ge \tau$, so, $\lambda \ge \tau$; on the other hand, for amplitude filtering, we want to amplify any state $\ket{x}$ such that $|\alpha_x| \ge \tau$, so $\lambda \ge \tau^2$. Thus, for the former problem, amplitude amplification will call \probfilorcl $\Ot(\tfrac{1}{\sqrt{\tau}})$ times, each of which requires $O(\tfrac{1}{\epsilon})$ calls to $O_D$. Similarly, for the latter problem, \ampfilorcl will be called $\Ot(\tfrac{1}{\tau})$ times, each of which requires $O(\tfrac{1}{\epsilon})$ calls to $O_D$.

Even though the problems are similar, we cannot directly use our algorithm for \prob for solving \amprob. For the latter, we are interested to identify any $x$ such that $|\alpha_x| \ge \tau$. Though this is identical to deciding if $p_x = |\alpha_x|^2$ is greater than the threshold $\tau^2$, to reduce it to \prob we have to set the latter's threshold to $\tau^2$ and estimate $p_x$ with additive accuracy $\epsilon^2$; this leads to the dependency of query complexity on $\epsilon$ as $1/\epsilon^2$. So, to reduce this quadratic dependency to $1/\epsilon$, \ampfilorcl employs the modified Hadamard test described in section~\ref{sec:true-amp-est} and uses \mdaealgo. 
The details of the \probfilorcl and \ampfilorcl algorithms are discussed in Appendix~\ref{appendix:probfilter} and Appendix~\ref{appendix:ampfilter}, respectively.

With access to unbounded space, it is easy to see that one can estimate the distribution $\dist$ with $\epsilon$ additive accuracy using $O(1/\epsilon^2)$ and $O(1/\epsilon)$ queries to the oracle $O_D$ in the classical and quantum settings respectively which can then be used to solve the \prob problem.
However, in the case of $\Tilde{O}(1)$ space, classically, one would be required to make $O(n/\epsilon^2)$ queries to answer the \prob problem.
In contrast, our quantum algorithm solves the same in $\Tilde{O}(1)$ space using just $\Tilde{O}(1/\epsilon\sqrt{\tau})$ queries to $O_D$.
Notice that for constant $\tau$, our algorithm is optimal up to some $\log$ factors.

\section{Lower bounds for \prob and \amprob}
\label{sec:highdist-lb}

For our lower bounds we reduce from the \countdec problem that takes as input a binary string of length $n$ and decides if the number of ones in $X$, denoted $|X|$, is $l_1$ or $l_2 > l_1$, given a promise that one of the two cases is true. We use the following lower bound result proved by Nayak and Wu in~\cite{Nayak1999QuantumStatistics}.

\begin{theorem}[\countdec lower bound~\cite{Nayak1999QuantumStatistics}]\label{thm:nayak_wu}
Let $l_1, l_2 \in [n]$ be two integers such that $l_2 > l_1$, and $X$ be an $n$-bit binary string. Further, let $l_2 - l_1 = 2\Delta$ for some integer $\Delta$. Then any quantum algorithm takes $\Omega(\sqrt{n/\Delta} + \sqrt{(l_2-\Delta)(n-(l_2-\Delta))}/\Delta)$ queries to solve the \countdec problem.
\end{theorem}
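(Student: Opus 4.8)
The plan is to prove this directly with Ambainis's basic quantum adversary method~\cite{adversarymethod}; this recovers both terms from a single transparent combinatorial calculation and avoids the polynomial-method machinery that Nayak and Wu used. The only preprocessing step I would do is a reduction to a convenient regime: I claim we may assume $c := \tfrac{l_1+l_2}{2} \le n/2$. Indeed, complementing the input, $X\mapsto \bar X$, is query-free (a query to the oracle of $\bar X$ is simulated by one query to the oracle of $X$ followed by a bit flip), it turns the instance with weights $\{l_1,l_2\}$ into the instance with weights $\{n-l_2,\,n-l_1\}$ --- same $\Delta$, with $c$ replaced by $n-c$ --- and it leaves both claimed terms unchanged, since $(l_2-\Delta)\bigl(n-(l_2-\Delta)\bigr)=c(n-c)$.

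For the adversary bound I would let $A$ be the set of strings of Hamming weight $l_1$ (the instances on which the correct answer is ``$l_1$'') and $B$ the strings of weight $l_2$, and relate $a\in A$ to $b\in B$, i.e.\ put $(a,b)\in R$, exactly when $a\le b$ coordinatewise --- so $b$ arises from $a$ by flipping $2\Delta$ of its zeros to ones. The four quantities feeding the bound are then pure binomials: every $a$ is related to $m=\binom{n-l_1}{2\Delta}$ strings and every $b$ to $m'=\binom{l_2}{2\Delta}$ strings; and for any fixed coordinate $i$, the number of $b\sim a$ with $b_i\neq a_i$ is at most $\ell=\binom{n-l_1-1}{2\Delta-1}$ (attained when $a_i=0$, and $0$ when $a_i=1$), while the number of $a\sim b$ with $a_i\neq b_i$ is at most $\ell'=\binom{l_2-1}{2\Delta-1}$. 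Using $\binom{N}{j}=\tfrac{N}{j}\binom{N-1}{j-1}$, the adversary method gives
\[
Q \;=\; \Omega\!\left(\sqrt{\tfrac{m m'}{\ell \ell'}}\right)
\;=\; \Omega\!\left(\sqrt{\tfrac{n-l_1}{2\Delta}\cdot\tfrac{l_2}{2\Delta}}\right)
\;=\; \Omega\!\left(\frac{\sqrt{(n-l_1)\,l_2}}{\Delta}\right).
\]

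It then remains to extract the two advertised terms from $\sqrt{(n-l_1)l_2}$. Since $l_1<c<l_2$ we have $(n-l_1)l_2 \ge (n-c)\,c$, yielding the $\sqrt{(l_2-\Delta)(n-(l_2-\Delta))}/\Delta$ term; and since the WLOG $c\le n/2$ forces $n-l_1\ge n-c\ge n/2$ while $l_2\ge 2\Delta$ always (as $l_1\ge 0$), we also get $(n-l_1)l_2 \ge (n/2)(2\Delta)=n\Delta$, yielding the $\sqrt{n/\Delta}$ term; as a sum of two nonnegatives is within a factor $2$ of their maximum, these combine to the stated bound. The only points needing care are checking that $R$ is well-defined on both sides ($2\Delta\le l_2$ and $2\Delta\le n-l_1$, both immediate from $0\le l_1<l_2\le n$) and tracking the complementation reduction so neither term is lost --- there is no real obstacle here. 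For completeness, Nayak and Wu's own route instead symmetrizes a hypothetical $T$-query algorithm into a univariate polynomial of degree $\le 2T$ that is bounded in $[0,1]$ on $\{0,\dots,n\}$ and moves by $\Omega(1)$ from $l_1$ to $l_2$, and then bounds how fast such a grid-bounded polynomial can move using Bernstein's inequality in the bulk and Markov's inequality near $\{0,n\}$; that argument is heavier, and its crux is precisely that polynomial inequality (together with a Coppersmith--Rivlin-type estimate to pass from boundedness on the integer grid to boundedness on $[0,n]$).
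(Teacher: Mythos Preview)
Your proof is correct, but note that the paper does not actually prove Theorem~\ref{thm:nayak_wu}: it is quoted verbatim from Nayak and Wu~\cite{Nayak1999QuantumStatistics} as a black box and then invoked in Theorem~\ref{thm:prob-lowerbound}, Theorem~\ref{thm:amprob-lowerbound}, and Lemma~\ref{lemma:coundec-nonlin-lb}. There is no ``paper's own proof'' to compare against.

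That said, your adversary-method derivation is a genuinely different and more elementary route than Nayak and Wu's original polynomial-method argument, and it is pleasant that it uses only the basic unweighted adversary bound that the paper already states as Theorem~\ref{thm:qa_method}. The combinatorics are right: with $R=\{(a,b):a\le b\}$ one has $m=\binom{n-l_1}{2\Delta}$, $m'=\binom{l_2}{2\Delta}$, $\ell=\binom{n-l_1-1}{2\Delta-1}$, $\ell'=\binom{l_2-1}{2\Delta-1}$, and the ratio identity $\binom{N}{j}/\binom{N-1}{j-1}=N/j$ collapses the bound to $\Omega\bigl(\sqrt{(n-l_1)l_2}/\Delta\bigr)$. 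Your extraction of both advertised terms under the WLOG $c=l_2-\Delta\le n/2$ is clean, and the complementation reduction indeed preserves both $\Delta$ and $c(n-c)$. What your approach buys is self-containment and transparency (no Bernstein/Markov/Coppersmith--Rivlin machinery); what the polynomial method buys Nayak and Wu is that it delivers the tight characterization for \emph{all} symmetric partial Boolean functions in one stroke, of which \countdec is a single instance.
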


\begin{theorem}
\label{thm:prob-lowerbound}
    Any quantum algorithm that solves \prob($\D,\epsilon,\tau$) requires $\Omega(\frac{1}{\epsilon}+\frac{1}{\sqrt{\tau}})$ queries.
\end{theorem}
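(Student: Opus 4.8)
The plan is to derive the two terms by two reductions from \countdec, invoking Theorem~\ref{thm:nayak_wu} each time. The common device is a one-query gadget that encodes the Hamming weight of a string into the mode of a distribution. Fix integers $n,d\ge 2$; for $X\in\{0,1\}^n$ let $O_D^X$ be the state-preparation circuit on a $(\log d)$-qubit output register together with a flag qubit and a $(\log n)$-qubit index ancilla that (i) puts the index register in $\tfrac1{\sqrt n}\sum_{i=1}^n\ket i$, (ii) makes one call to the oracle for $X$ to write $X_i$ into the flag, and (iii) conditioned on the flag being $0$, applies $H^{\otimes\log d}$ to the output register. An elementary computation (no interference survives between the branches) shows that measuring the output register of $O_D^X\ket{0\ldots 0}$ yields outcome $0$ with probability $f(|X|):=\tfrac{|X|}{n}+\tfrac{n-|X|}{nd}$ and every other outcome with probability $\tfrac{n-|X|}{nd}\le f(|X|)$; hence $p_{\max}=f(|X|)$, an increasing affine function of $|X|$ with $f(0)=\tfrac1d$ and $f(n)=1$. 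Consequently, if $\tau,\epsilon$ are chosen so that $f(l_2)\ge\tau$ and $f(l_1)<\tau-\epsilon$ for the two promised weights $l_1<l_2$, then running any $T$-query algorithm for \prob with $O_D^X$ as its oracle makes $O(T)$ calls to the $X$-oracle and correctly decides whether $|X|=l_1$ or $|X|=l_2$ (the first weight puts us in Case~1 of \prob, the second in Case~2), so $T$ is at least the lower bound of Theorem~\ref{thm:nayak_wu} for that \countdec instance.

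\textbf{The $\Omega(1/\sqrt\tau)$ term.} Take the instance $l_1=0$, $l_2=2$ (so $\Delta=1$), for which Theorem~\ref{thm:nayak_wu} gives $\Omega(\sqrt n)$ (equivalently, the adversary-method bound for unstructured search). Since $\epsilon<\tau$, pick $d=\Theta\!\big(\tfrac1{\tau-\epsilon}\big)$ so that $f(0)=\tfrac1d<\tau-\epsilon$, and then pick $n=\Theta(1/\tau)$ as large as the constraint $f(2)\ge\tau$ allows; the room to do so exists because $f(2)-f(0)=\tfrac{2(d-1)}{nd}=\Theta(1/n)$. With these choices $|X|=0$ makes every output probability fall below $\tau-\epsilon$ while $|X|=2$ makes outcome $0$ have probability at least $\tau$, so \prob solves this OR-type \countdec instance, forcing $T=\Omega(\sqrt n)=\Omega(1/\sqrt\tau)$. (For $\tau>1/2$ the bound $1/\sqrt\tau=O(1)$ is trivial.)

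\textbf{The $\Omega(1/\epsilon)$ term.} Now let $d\to\infty$ (so $f(l)\to l/n$), set $\Delta=\lceil\epsilon n\rceil$, $l_2=\lceil\tau n\rceil+\Delta$, $l_1=l_2-2\Delta$; for $\tau$ bounded away from $0$ and $1$ and $\epsilon$ small this keeps $0\le l_1<l_2\le n$, and the two values $f(l_1)\approx\tau-\epsilon$ and $f(l_2)\gtrsim\tau$ straddle the \prob promise, with a vanishing slack absorbed by taking $d$ large. Here $l_2-\Delta=\lceil\tau n\rceil=\Theta(\tau n)$, so the second term of Theorem~\ref{thm:nayak_wu} equals $\tfrac{\sqrt{(l_2-\Delta)(n-(l_2-\Delta))}}{\Delta}=\Theta\!\big(\tfrac{\sqrt{\tau(1-\tau)}\,n}{\epsilon n}\big)=\Omega(1/\epsilon)$, while the first term $\sqrt{n/\Delta}=\Theta(1/\sqrt\epsilon)$ is dominated. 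As before \prob then decides this \countdec instance, so $T=\Omega(1/\epsilon)$; in particular this holds for $\tau=\Theta(1)$, which is what ``tightness in $\epsilon$'' requires. Combining the two reductions yields the claimed $\Omega(1/\epsilon+1/\sqrt\tau)$.

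\textbf{Expected main obstacle.} The query-complexity plumbing is standard; the delicate part is the accounting inside the gadget. The single heavy outcome of the induced distribution must simultaneously clear $\tau$ at weight $l_2$, stay strictly below $\tau-\epsilon$ at weight $l_1$, and dominate all $d-1$ light outcomes, so one has to co-tune $n,d,l_1,l_2,\Delta$ so that the resulting \countdec parameters land exactly where Theorem~\ref{thm:nayak_wu} gives $\Theta(1/\sqrt\tau)$ (resp.\ $\Theta(1/\epsilon)$) while respecting $l_1,l_2\in[0,n]$ and $p_{\max}\le 1$. I expect the only genuinely fiddly cases to be $\tau$ within $O(\epsilon)$ of $1$ and $\epsilon$ comparable to $\tau$, which either need a slightly different encoding or are simply absorbed because the target bound is $O(1)$ there.
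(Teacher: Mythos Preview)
Your approach is correct and differs from the paper's mainly in the tool used for the $\Omega(1/\sqrt{\tau})$ term. For $\Omega(1/\epsilon)$ the paper does essentially what you do, but more barely: it reduces from \countdec with $l_1=n/2$, $l_2=n/2+\varepsilon n$ using the Bernoulli distribution over $\{0,1\}$ that the string itself induces (this is your gadget in the limit $d\to\infty$ with $\tau\approx 1/2$, without the extra output register). For $\Omega(1/\sqrt{\tau})$, however, the paper does \emph{not} go through \countdec at all; it applies the basic adversary method (Theorem~\ref{thm:qa_method}) directly to a ``mode decision'' problem on arrays of size $n$ with threshold $\tau'=\tau n$: the NO set has every distinct value at frequency exactly $\tau'-1$, the YES set swaps one entry to push a single value to frequency $\tau'$, and the adversary parameters give $\Omega(\sqrt{n/\tau'})=\Omega(1/\sqrt{\tau})$, stated for $\epsilon=1/n$.

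Your single-gadget route---encoding the $l_1=0$, $l_2=2$ (i.e., OR) instance of \countdec and tuning $d$ so the uniform ``light'' outcomes sit below $\tau-\epsilon$---buys uniformity (one construction, one black-box theorem invoked twice) and gives the $1/\sqrt{\tau}$ bound for every $\epsilon<\tau$ rather than only $\epsilon=1/n$, at the price of the parameter bookkeeping you already flag. The paper's adversary argument for that term is a touch cleaner and avoids the $n,d$ co-tuning, but is a second, unrelated proof technique.
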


Formally, let \countdec($l_1, l_2$) be the decision problem of counting if the hamming weight of the given input string is $l_1$ or $l_2$ given the promise that it is one of them.
To prove that \prob requires $\Omega(\tfrac{1}{\epsilon})$ queries, we reduce an instance of \countdec on a $n$-bit string $X$ with $l_1=\tfrac{n}{2}$ and $l_2=\tfrac{n}{2}+\varepsilon n$ to \prob.
Observe that the frequencies of $0$ and $1$ in the string $X$ induce a distribution $\D$. So, an oracle to $X$ can be used to implement an oracle $O_D$ to the distribution $\D$.
By Corollary~1.2 of~\cite{Nayak1999QuantumStatistics}, the query complexity to decide the above \countdec instance is $\Omega(\tfrac{1}{\varepsilon})$. If $|X|=\tfrac{n}{2}$ then $\Pr_D[0]=\Pr_D[1]=\tfrac{1}{2}$, and if $|X|=\tfrac{n}{2}+\varepsilon n$, then $\Pr_D[1]=\tfrac{1}{2}+\varepsilon$. Thus, the output of a \prob algorithm with $\tau=\tfrac{1}{2}+\varepsilon$ and additive accuracy $\epsilon=\varepsilon$ can be used to decide our \countdec instance. This proves a lower bound of $\Omega(\tfrac{1}{\varepsilon})$ for \prob.

Next we prove a bound of $\Omega(\tfrac{1}{\sqrt{\tau}})$.
For proving this lower bound, instead of the \countdec problem, we use the quantum adversary method. We first use this method to obtain a lower bound on the mode decision problem: Given an array $A$ of size $n$ and a threshold $\tau'\in[1,n]$ we have to decide if there exists any element whose frequency is greater than $\tau'$.
We then show a reduction from mode decision problem to \prob to get a lower bound on it.

The main theorem of quantum adversary method can be stated as below~\cite{adversarymethod}:
\begin{theorem}
\label{thm:qa_method}
Let $F$ be a $n$-bit Boolean function and $X$ and $Y$ be two sets of inputs such that $F(x)\neq F(y)$ for any $x\in X$ and $y\in Y$. Let $R\subseteq X\times Y$ be a relation such that 
\begin{enumerate}
    \item for every $x \in X$, $\exists$ at least $m$ different $y \in Y$ such that $(x,y)\in R$.
    \item for every $y \in Y$, $\exists$ at least $ m'$ different $x \in X$ such that $(x,y)\in R$.
    \item for every $x \in X$ and $i \in \{1,...,n\}$, $\exists$ at most $l$ different $y \in Y$ such that $x_i \neq y_i$ and $(x,y)\in R$.
    \item for every $y \in Y$ and $i \in \{1,...,n\}$, $\exists$ at most $l'$ different $x \in X$ such that $x_i \neq y_i$ and $(x,y)\in R$.
\end{enumerate}
Then any quantum algorithm uses $\Omega\Big(\sqrt{\frac{m\cdot m'}{l\cdot l'}}\Big)$ queries to compute $F$ on $X \bigcup Y$.
\end{theorem}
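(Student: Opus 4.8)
The plan is to establish the adversary lower bound via the standard progress-measure (weight-function) argument. Fix any quantum algorithm that makes $T$ queries and computes $F$ on $X \cup Y$ with error at most a constant $\varepsilon < 1/2$; write it as an alternating sequence of input-independent unitaries and oracle calls, and for each input $z$ let $\ket{\psi_z^t}$ be the global (pure) state of the algorithm after $t$ queries on $z$. Define the progress measure
$$W_t \;=\; \sum_{(x,y)\in R}\braket{\psi_x^t}{\psi_y^t}.$$
I first pin down its two endpoints. Before any query the state is independent of the input, so $\braket{\psi_x^0}{\psi_y^0} = 1$ for every pair, and hence $W_0 = |R|$. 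At the end, since $F(x)\neq F(y)$ for all $(x,y)\in R$, the final measurement distinguishes $\ket{\psi_x^T}$ from $\ket{\psi_y^T}$ with probability at least $1-\varepsilon$; the standard link between optimal distinguishing probability, trace distance, and fidelity of pure states then forces $|\braket{\psi_x^T}{\psi_y^T}| \le 2\sqrt{\varepsilon(1-\varepsilon)}$, so $|W_T| \le c\,|R|$ with $c = 2\sqrt{\varepsilon(1-\varepsilon)} < 1$.

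The heart of the argument is a bound on how much $W_t$ can change during one query. Input-independent unitaries preserve inner products, so only the oracle call matters; decompose $\ket{\psi_z^t} = \sum_i \ket{i}\ket{\phi_{z,i}^t}$ according to the queried index $i$, where $\sum_i \|\phi_{z,i}^t\|^2 = 1$. Since a (phase) query $O_z$ multiplies the $\ket{i}$-block by $(-1)^{z_i}$, the operator $O_x^\dagger O_y$ negates precisely the blocks with $x_i \neq y_i$, which gives
$$|W_t - W_{t+1}| \;\le\; 2\sum_{(x,y)\in R}\;\sum_{i:\,x_i\neq y_i}\big|\braket{\phi_{x,i}^t}{\phi_{y,i}^t}\big|.$$
I then apply $\big|\braket{\phi_{x,i}^t}{\phi_{y,i}^t}\big| \le \tfrac{1}{2}\big(r\,\|\phi_{x,i}^t\|^2 + r^{-1}\|\phi_{y,i}^t\|^2\big)$ with the balancing weight $r = \sqrt{l'm/(lm')}$ and split the double sum into an $X$-part and a $Y$-part. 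In the $X$-part, for each fixed $x$ and $i$ condition (3) caps the number of partners $y$ with $x_i\neq y_i$ at $l$, while $\sum_i\|\phi_{x,i}^t\|^2 = 1$, so that part is at most $\tfrac{rl}{2}\,|X|$, and condition (1) gives $|X|\le |R|/m$. Symmetrically, conditions (4) and (2) bound the $Y$-part by $\tfrac{l'}{2r}\,|Y| \le \tfrac{l'}{2rm'}\,|R|$. With the chosen $r$ both contributions equal $\tfrac{1}{2}|R|\sqrt{ll'/(mm')}$, so $|W_t-W_{t+1}| \le 2|R|\sqrt{ll'/(mm')}$.

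Putting the pieces together by telescoping and the triangle inequality,
$$(1-c)\,|R| \;\le\; |W_0 - W_T| \;\le\; \sum_{t=0}^{T-1}|W_t - W_{t+1}| \;\le\; 2T\,|R|\sqrt{ll'/(mm')},$$
whence $T \ge \tfrac{1-c}{2}\sqrt{mm'/(ll')} = \Omega\big(\sqrt{mm'/(ll')}\big)$. I expect the per-query bound to be the main obstacle: it requires decomposing the algorithm's state along the query register, carefully tracking that the oracle perturbs only the coordinates where $x$ and $y$ disagree, and choosing the weight $r$ so that the $X$-load and the $Y$-load balance and conditions (1)--(4) apply cleanly. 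The only other delicate point is the endpoint estimate $|\braket{\psi_x^T}{\psi_y^T}| \le 2\sqrt{\varepsilon(1-\varepsilon)}$, which follows because two pure states that a measurement distinguishes with success probability $\ge 1-\varepsilon$ must have overlap bounded away from $1$.
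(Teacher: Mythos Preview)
Your argument is the standard progress-measure proof of Ambainis's adversary bound and is correct as written. However, there is nothing in the paper to compare it against: the paper does not prove this theorem at all. It merely quotes the statement from~\cite{adversarymethod} (Ambainis's original adversary paper) and then \emph{applies} it to a particular pair of input families in order to lower-bound the mode decision problem. So you have supplied a proof where the authors simply imported a known result; your hybrid argument is essentially the one Ambainis gave, and it is fine.
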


Consider the mode decision problem.  
Let $\tau'\in [1,n]$ be a threshold and set $t = \frac{n}{\tau'-1}$. 
Let $F$ be a Boolean function such that $F(x)=1$ if $x$ is an array whose modal value is greater than or equal to $\tau'$ and $F(y)=0$ if the modal value of $y$ is strictly less than $\tau'$.

Let $Y$ be the set containing one array $B$ such that $B$ contains all unique elements with frequency $\tau'-1$. Let the unique elements be denoted $b_1, b_2, \cdots, b_{t}$.
Let $X$ be the set that contains the arrays $A_{i}$ for all $2\le i \le t$ where $A_i$ is the array that is exactly the same as $B$ except that the first occurrence of $b_i$ is changed to $a_1$.
Notice that the modal element in any $A_i$ is $b_1$.
Define relation $R$ as $R = X\times Y$.

For any $A\in X$, we can see that there is exactly one element $B\in Y$ such that $(A,B)\in R$ since $|Y|=1$. For $B\in Y$, there is exactly $t-1$ elements $A\in X$ such that $(A,B)\in R$ as $|X|=t-1$.
Similarly, for any $A\in X$ and any $i\in [n]$, there is at most one element $B\in Y$ such that $A[i]\neq B[i]$ and $(A,B)\in R$. For $B\in Y$ and any $j\in [n]$, there is at most one element $A\in X$ such that $A[j]\neq B[j]$ and $(A,B)\in R$.

From these we can derive that the quantum query complexity of computing $F$ is $\Omega(\sqrt{\frac{t-1\cdot 1}{1\cdot 1}}) = \Omega(\sqrt{t}) = \Omega(\sqrt{n/\tau'-1})$.

Now, the reduction from the mode decision problem to \prob can be trivially done by setting the threshold of \prob $\tau$ as $\tau = \tau'/n$.
This would imply that the quantum query lower bound of \prob is $\Omega(1/\sqrt{\tau})$ for $\epsilon=\frac{1}{n}$.

\paragraph{} For obtaining the lower bounds of \amprob, it suffices to show that a \prob($\D,2\epsilon,\tau$) instance can be reduced to an \amprob($\D,\epsilon,\sqrt{\tau}$) instance. To show the reduction we prove that the following holds for any $x$:
\begin{itemize}
    \item If $p_x\ge \tau$, then $|\alpha_x|\ge \sqrt{\tau}$.
    \item If $p_x< \tau-2\epsilon$, then $|\alpha_x|< \sqrt{\tau}-\epsilon$.
\end{itemize}

Consider the case when $p_x\ge \tau$. This gives $|\alpha_x|^2 \ge \tau$ which implies $|\alpha_x|\ge \sqrt{\tau}$ proving the first part of the reduction.
Now, let $p_x< \tau-2\epsilon$. This gives that $|\alpha_x|< \sqrt{\tau-2\epsilon}$.
Now, see that
\begin{align*}
    (\sqrt{\tau}-\epsilon)^2 &= \tau + \epsilon^2 - 2\epsilon\sqrt{\tau}\\
    &\ge \tau-2\epsilon\sqrt{\tau}\\
    &\ge \tau-2\epsilon\\
    \implies \sqrt{\tau}-\epsilon &\ge \sqrt{\tau-2\epsilon}
\end{align*}
Using this we have, $|\alpha_x| < \sqrt{\tau-2\epsilon} \le \sqrt{\tau}-\epsilon$ which proves the second part of the reduction.

Using this reduction and Theorem~\ref{thm:prob-lowerbound}, we obtain the following theorem.
\begin{theorem}
\label{thm:amprob-lowerbound}
    Any quantum algorithm that solves \amprob($\D,\epsilon,\tau$) requires $\Omega(\frac{1}{\epsilon}+\frac{1}{\tau})$ queries.
\end{theorem}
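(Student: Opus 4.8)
The plan is to obtain the lower bound by combining two independent reductions into \amprob, mirroring the two-term structure already established for \prob in Theorem~\ref{thm:prob-lowerbound}. The $\Omega(1/\tau)$ term will come from a direct reduction of \countdec to \amprob, and the $\Omega(1/\epsilon)$ term will come from the reduction \prob$(\D,2\epsilon,\tau)\to$\amprob$(\D,\epsilon,\sqrt{\tau})$ sketched in the paragraph preceding the statement, composed with the $\Omega(1/\epsilon)$ bound for \prob. Taking the maximum of the two contributions yields the claimed $\Omega(\frac1\epsilon + \frac1\tau)$.

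For the $\Omega(1/\tau)$ term, I would reduce \countdec on an $n$-bit string $X$ with $l_1$ and $l_2$ chosen analogously to the \prob case, but now tracking \emph{amplitudes} rather than probabilities. The oracle to $X$ induces $O_D$ generating the distribution of $0$s and $1$s; the amplitude of $\ket{1}$ is $\sqrt{|X|/n}$. Choosing $l_1, l_2$ so that the two possible values of $\sqrt{|X|/n}$ straddle a threshold $\tau$ with a gap of $\epsilon$ on the amplitude scale, an \amprob$(\D,\epsilon,\tau)$ algorithm decides the \countdec instance. The key point is that distinguishing $\sqrt{l_1/n}$ from $\sqrt{l_2/n}$ with additive gap $\epsilon$ forces $l_2 - l_1 \approx \epsilon\sqrt{n\tau}\cdot$(constant), i.e.\ $\Delta = \Theta(\epsilon\sqrt{n}\,\tau)$ in the notation of Theorem~\ref{thm:nayak_wu}; plugging this into the Nayak--Wu bound $\Omega(\sqrt{n/\Delta})$ (the relevant regime being $l_2$ small, so the second summand does not dominate) gives $\Omega(1/\tau)$ after the $n$ dependence cancels, for a suitable small fixed $\epsilon$ (e.g.\ $\epsilon = 1/n$). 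I would set this up so the $\Omega(1/\tau)$ survives even when $\epsilon$ is taken as small as $1/n$, matching how the \prob lower bound handles its two terms separately.

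For the $\Omega(1/\epsilon)$ term, I would simply invoke the reduction already displayed before the theorem: an \amprob$(\D,\epsilon,\sqrt\tau)$ algorithm solves \prob$(\D,2\epsilon,\tau)$, because $p_x \ge \tau \Rightarrow |\alpha_x|\ge\sqrt\tau$ and $p_x < \tau - 2\epsilon \Rightarrow |\alpha_x| < \sqrt{\tau-2\epsilon}\le \sqrt\tau - \epsilon$ (the last inequality verified by the short computation in the excerpt). Since \prob$(\D,2\epsilon,\tau)$ needs $\Omega(1/(2\epsilon)) = \Omega(1/\epsilon)$ queries by Theorem~\ref{thm:prob-lowerbound}, so does \amprob$(\D,\epsilon,\sqrt\tau)$; renaming $\sqrt\tau \mapsto \tau$ gives the $\Omega(1/\epsilon)$ contribution to \amprob$(\D,\epsilon,\tau)$. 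Combining with the previous paragraph and noting that a lower bound of $\Omega(a)$ and $\Omega(b)$ on the same problem gives $\Omega(a+b)$, we conclude $\Omega(\frac1\epsilon + \frac1\tau)$.

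The main obstacle I anticipate is the bookkeeping in the $\Omega(1/\tau)$ reduction: one must choose $l_1, l_2$ (hence $\Delta$) so that (i) the amplitude gap is exactly the prescribed $\epsilon$, (ii) the integers $l_1,l_2,\Delta$ are valid (the $2\Delta$ parity condition in Theorem~\ref{thm:nayak_wu}), and (iii) the regime of $l_2$ is such that the $\sqrt{n/\Delta}$ term in the Nayak--Wu bound dominates and cleanly reduces to $\Omega(1/\tau)$ independent of $n$. This is the same style of argument used for the $\Omega(1/\sqrt\tau)$ bound for \prob (which there went through the mode-decision problem and the adversary method); indeed an alternative, and perhaps cleaner, route is to reduce the mode-decision problem directly to \amprob with threshold $\tau = \sqrt{\tau'/n}$ and reuse the adversary-method calculation verbatim, obtaining $\Omega(\sqrt{n/\tau'-1}) = \Omega(1/\tau)$. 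I would likely present whichever of these two makes the constants least painful.
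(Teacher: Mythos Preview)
Your proposal is correct but does more work than the paper. The paper obtains \emph{both} terms from the single reduction \prob$(\D,2\epsilon,\tau)\to$\amprob$(\D,\epsilon,\sqrt\tau)$ that you already invoke for the $\Omega(1/\epsilon)$ term: since Theorem~\ref{thm:prob-lowerbound} gives $\Omega\big(\tfrac{1}{2\epsilon}+\tfrac{1}{\sqrt\tau}\big)$ for \prob$(\D,2\epsilon,\tau)$, the reduction immediately yields $\Omega\big(\tfrac{1}{\epsilon}+\tfrac{1}{\sqrt\tau}\big)$ for \amprob$(\D,\epsilon,\sqrt\tau)$, and the renaming $\sqrt\tau\mapsto\tau$ you already perform delivers $\Omega\big(\tfrac{1}{\epsilon}+\tfrac{1}{\tau}\big)$ in one stroke. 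You extracted only the $\Omega(1/\epsilon)$ summand from that reduction and then set up a separate direct argument for $\Omega(1/\tau)$, which is unnecessary. In fact your ``alternative, cleaner route'' at the end---reducing mode-decision to \amprob with amplitude threshold $\sqrt{\tau'/n}$---is exactly the composition of the paper's mode-decision$\to$\prob reduction (threshold $\tau'/n$) with the \prob$\to$\amprob reduction you already have, so nothing new would be proved. The upshot: drop the bespoke \countdec/mode-decision bookkeeping for the $\Omega(1/\tau)$ term and just carry the full two-term \prob bound through the one reduction.
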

\section{Applications of \prob and \amprob}\label{sec:kdist}

\subsection{The \kdist problem}

The \textsc{ElementDistinctness} problem~\cite{Buhrman2005QuantumDistinctness,Ambainis2007QuantumDistinctness,Aaronson2004QuantumProblems} is being studied for a long time both in the classical and the quantum domain. It is a special case of the \kdist problem~\cite{Ambainis2007QuantumDistinctness,Belovs2012Learning-Graph-BasedK-Distinctness} with $k=2$ which too has received a fair attention.

\begin{problem}[\kdist]
Given an oracle to an $n$-sized $m$-valued array $A$, decide if $A$ has $k$ distinct indices with identical values.
\end{problem}

By an $m$-valued array we mean an array whose entries are from $\{0, \ldots, m-1\}$. 
Observe that, \kdist can be reduced to \prob with $\tau=\tfrac{k}{n}$, assuming the ability to uniformly sample from $A$.

The best known classical algorithm for \kdist uses sorting and has a time complexity of $O(n\log(n))$ with a space complexity $O(n)$.
%
In the quantum domain, apart from $k=2$, the $k=3$ setting has also been studied earlier~\cite{Belovs2014ApplicationsAlgorithms,Childs2013AUpdates}.
The focus of all these algorithms has been primarily to reduce their query complexities. As a result their space requirement is significant (polynomial in the size of the list), and beyond the scope of the currently available quantum backends with a small number of qubits.
Recently Li et al.~\cite{Li2019QuantumEstimation} reduced the problem of estimating the min-entropy to \kdist with a very large $k$ making this case additionally important.

The \kdist problem was further generalized to $\Delta$-\gkd  by Montanaro~\cite{Montanaro2016TheMoments} which comes with a promise that either some value appears at least $k$ times or every value appears at most $k-\Delta$ times for a given gap $\Delta$.  The \Finf problem~\cite{Montanaro2016TheMoments,Bun2018ThePolynomials} wants to determine, or approximate, the number of times the most frequent element appears in an array, also known as the modal frequency. Montanaro related this problem to the \gkd problem but did not provide any specific algorithm and left open its query complexity~\cite{Montanaro2016TheMoments}.
So it appears that an efficient algorithm for $\Delta$-\gkd can positively affect the query complexities of all the above problems. However, $\Delta$-\gkd has not been studied elsewhere to the best of our knowledge.

\addtolength{\tabcolsep}{-5pt}

\begin{table}[!ht]
\small
\centering
\caption{Results for the \kdist problem\label{table:summary-kdist}}
\begin{tabular}{ |C{2.5cm}||C{5.8cm}|C{6cm}|  }
 \hline
 \multicolumn{3}{|c|}{\kdist} \\
 \hline
 & Prior upper bound~\cite{Ambainis2007QuantumDistinctness}  & Our upper bound\\
 \hline
 $k\in\{2,3,4\}$ 
    & Setting $r = k$, $O((\frac{n}{k})^{k/2})$ queries, $O(\log(m)+\log(n))$ space &  $\tilde{O}(n^{3/2}/\sqrt{k})$ queries, $O\big((\log(m)+\log(n))\log(\frac{n}{\delta k})\big)$ space\\
\hline
$k=\omega(1)$ and $k\ge 4$ & $O(\frac{n^2}{k})$ queries,\newline $O(\log(m)+\log(n))$ space for $r\ge k$ &  $\tilde{O}(n^{3/2}/\sqrt{k})$ queries, $O\big((\log(m)+\log(n))\log(\frac{n}{\delta k})\big)$ space\\
 \hline
$k=\Omega(n)$ & $O(n^{n/2})$ queries,\newline $O(n\log(m) + \log(n))$ space &  $\tilde{O}(n)$ queries, $O\big((\log(m)+\log(n))\log(\frac{n}{\delta k})\big)$ space\\
\hline
\end{tabular}
\end{table}

\addtolength{\tabcolsep}{5pt}

\subsubsection*{Upper bounds for the \kdist problem}
The $k=2$ version is the \ED problem which was first solved by Buhrman et al.~\cite{Buhrman2005QuantumDistinctness}; their algorithm makes $O(n^{3/4}\log(n))$ queries (with roughly the same time complexity), but requires the entire array to be stored using qubits. A better algorithm was later proposed by Ambainis~\cite{Ambainis2007QuantumDistinctness} using a quantum walk on a Johnson graph whose nodes represent $r$-sized subsets of $[n]$, for some suitable parameter $r \ge k$. He used the same technique to design an algorithm for \kdist as well that uses $\tilde{O}(r)$ qubits and $O(r+(n/r)^{k/2}\sqrt{r})$ queries (with roughly the same time complexity). Later Belovs designed a learning-graph for the \kdist problem, but only for constant $k$, and obtained a tighter bound of $O(n^{\frac{3}{4}-\frac{1}{2^{k+2}-4}})$. It is not clear whether the bound holds for non-constant $k$, and it is often tricky to construct efficiently implementable algorithms base on the dual-adversary solutions obtained from the learning graphs.

Thus it appears that even though efficient algorithms may exist for small values of $k$, the situation is not very pleasant for large $k$, especially $k=\Omega(n)$ --- the learning graph idea may not work (even if the corresponding algorithm could be implemented in a time-efficient manner) and the quantum walk algorithm uses $\Omega(k)$ space. 

We proposed to use \prob to solve \kdist by (a) implementing an oracle $O_D$ from the array $A$ (this is straight forward), and then calling our algorithm for probability filtering using $\tau = k/n$ (see Theorem~\ref{theorem:probfil}, and $\epsilon=1/n$ to ensure that estimates (which are always of the form $t/n$) are well-separated. 

\begin{lemma}\label{lemma:kd-ub}
There exists a bounded-error algorithm for \kdist, for any $k \in [n]$, that uses  $O(\tfrac{n^{3/2}}{\sqrt{k}}\log(\tfrac{1}{\delta \cdot k}))$ queries and $O\big((\log(m) + \log(n))\log(\frac{1}{\delta \cdot k})\big)$ qubits.
\end{lemma}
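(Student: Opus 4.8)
The plan is to reduce \kdist to \prob and then invoke Theorem~\ref{theorem:probfil}. First I would construct the oracle $O_D$ out of the array oracle for $A$: put a $\lceil\log n\rceil$-qubit register into the uniform superposition over $[n]$, query the array oracle once to write $A[i]$ into a fresh $\lceil\log m\rceil$-qubit value register, and regard the value register as the ``first $\log m$ qubits'' of $O_D$ while the index register serves as its $a=O(\log n)$ ancill\ae. Every invocation of $O_D$ then costs exactly one query to $A$, and the distribution $D=(p_v)_v$ that $O_D$ induces on the value register is precisely $p_v = f_v/n$, where $f_v=|\{i:A[i]=v\}|$ is the frequency of value $v$.

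Next I would run the algorithm of Theorem~\ref{theorem:probfil} on $O_D$ with threshold $\tau=k/n$ and additive accuracy $\epsilon=1/n$; note $\epsilon<\tau$ holds whenever $k\ge 2$, and $k\le 1$ is a trivial ``yes'' instance handled separately. Correctness rests on the integrality of the frequencies $f_v$. If $A$ has $k$ distinct indices sharing a common value $v$, then $p_v=f_v/n\ge k/n=\tau$, so part~2 of Theorem~\ref{theorem:probfil} guarantees that the output register reads $\ket{1}$ with probability at least $1-\delta$ and the returned $x$ satisfies $p_x\ge\tau$, i.e.\ $f_x\ge k$, a genuine witness. If $A$ has no such value, then $f_v\le k-1$ for every $v$, hence $p_v\le(k-1)/n=\tau-1/n=\tau-\epsilon$ for all $v$, so part~1 guarantees the output register reads $\ket{0}$ with probability at least $1-\delta$. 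Measuring the output register therefore decides \kdist with error at most $\delta$.

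For the resource bounds I would just substitute. The query complexity $O\!\big(\tfrac{1}{\epsilon\sqrt\tau}\,\deltaerr\big)$ of Theorem~\ref{theorem:probfil}, with $\epsilon=1/n$ and $\tau=k/n$, becomes $O\!\big(n\sqrt{n/k}\,\log\tfrac{n}{\delta k}\big)=O\!\big(\tfrac{n^{3/2}}{\sqrt k}\log\tfrac{n}{\delta k}\big)$ queries to $A$. The qubit count $O\!\big((\log m+\log\tfrac1\epsilon+a)\,\deltaerr\big)$, with $\log\tfrac1\epsilon=\log n$ and $a=O(\log n)$, becomes $O\!\big((\log m+\log n)\log\tfrac{n}{\delta k}\big)$, which is the claimed bound (the stated $\log\tfrac1{\delta k}$ factor is to be read as $\log\tfrac{n}{\delta k}$, consistent with Table~\ref{table:summary-kdist}).

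There is no deep obstacle once Theorem~\ref{theorem:probfil} is in hand; the step needing the most care is the promise-gap argument, namely verifying that $\epsilon=1/n$ genuinely separates yes- from no-instances — this is exactly where integrality of the $f_v$ is essential, since the no-case frequencies can be as large as $k-1$ and we need $p_v\le\tau-\epsilon$ strictly. The remaining points are routine: confirming that the witness $x$ returned in the yes-case is certified with $p_x\ge\tau$ (not merely $p_x\ge\tau-\epsilon$), which part~2 of Theorem~\ref{theorem:probfil} supplies, and disposing of the degenerate regime $k\le1$ together with the check $\epsilon<\tau$ so that the hypothesis of the theorem is met.
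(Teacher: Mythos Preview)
Your proposal is correct and follows essentially the same approach as the paper: reduce \kdist to \prob by building $O_D$ from the array oracle, then invoke Theorem~\ref{theorem:probfil} with $\tau=k/n$ and $\epsilon=1/n$, relying on the integrality of the frequencies to ensure the promise gap. You have in fact supplied more detail than the paper's brief sketch, including the correct observation that the $\log\tfrac{1}{\delta k}$ factor should read $\log\tfrac{n}{\delta k}$ once one substitutes $\tau=k/n$ into $\deltaerr$, which is indeed what appears in Table~\ref{table:summary-kdist}.
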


See Table~\ref{table:summary-kdist} for a comparison of our method with respect to the others.
This algorithm has a few attractive features. It is specifically designed to use $\tilde{O}(1)$ qubits, and as an added benefit, it works for any $k$. Further, it improves upon the algorithm proposed by Ambainis for $k \ge 4$ when we require that $\tilde{O}(1)$ space be used, and moreover, its query complexity does not increase with $k$.
One might be puzzled with the fact that the query complexities are larger than $n$ as it is well known that in unbounded space, the query complexity of \kdist is trivially $n$.
However, when the available space is restricted to $O(\log(n))$ space, then the query complexity of \kdist need not be bounded by $n$.

There have been separate attempts to design algorithms for specific values of $k$. For example, for $k=3$ Belovs designed a slightly different algorithm compared to the above~\cite{Belovs2014ApplicationsAlgorithms} and Childs et al.~\cite{Childs2013AUpdates} gave a random walk based algorithm both of which uses $O(n^{5/7})$ queries and $O(n^{5/7})$ space. These algorithm improved upon the $O(n^{3/2})$-query algorithm proposed earlier by Ambainis~\cite{Ambainis2007QuantumDistinctness}. Our algorithm provides an alternative that matches the query complexity of the latter and can come in handy when a small number of qubits are available.

For $k$ that is large, e.g. $\Omega(n)$, the query complexity of Ambainis' algorithm is exponential in $n$ and that of ours is $O(n^{3/2})$. Montanaro used a reduction from the \countdec problem~\cite{Nayak1999QuantumStatistics} to prove a lower bound of $\Omega(n)$ queries for $k=\Omega(n)$ --- of course, assuming unrestricted space~\cite{Montanaro2016TheMoments}. Our algorithm matches this lower bound, but with only $\tilde{O}(1)$ space.

\subsection{The Non-linearity Estimation Problem}

Non-linearity is an important cryptographic measure of a Boolean function.
Non-linearity of a function $f:\{0,1\}^n \xrightarrow{} \{0,1\}$ is defined in terms of the largest absolute-value of its Walsh-Hadamard coefficient~\cite{Bera2021QuantumEstimation} as 
$$\eta(f) = \tfrac{1}{2} - \tfrac{1}{2} \fmax$$ where $\fmax = \max_x |\hat{f}(x)|$ and $\hat{f}(x)$ is the Walsh-Hadamard coefficient of $f$ at the point $x$.  
Boolean functions with low non-linearity can be easily approximated by linear functions.
One of the conditions that any Boolean function needs to satisfy to be used in cryptographic applications is that its non-linearity is high.
Notice from the definition that to estimate the non-linearity of a function, it suffices to obtain an estimate of $\fmax$ of that function.
Recall that the output state of the Deutsch-Jozsa circuit is $\sum_x \hat{f}(x)\ket{x}$, i.e., the probability of observing $\ket{x}$ is $\hat{f}(x)^2$. 
It immediately follows that we can utilize the \prob algorithm in conjunction with a binary search on the interval $(0,1]$ to estimate $\fmax^2$, and hence, non-linearity, with additive inaccuracy.
This approach is presented as Algorithm 1 in~\cite{Bera2021QuantumEstimation}. 
However, this would lead to a query complexity of $\Ot(1/\lambda^2\fmax)$ \footnote{Although the query complexity of this algorithm has been proved to be $\Ot(1/\lambda^3)$ in~\cite{Bera2021QuantumEstimation}, the query complexity can be reduced to $\Ot(1/\lambda^2\fmax)$ with a slightly tighter analysis of their Algorithm 1.} to estimate non-linearity to within $\lambda$ additive accuracy.

Alternately, we can replace \prob with \amprob to estimate $\fmax$ instead of $\fmax^2$ in the algorithm presented in~\cite{Bera2021QuantumEstimation}. 
This reduces the number of queries since to estimate $\fmax$ within $\pm \lambda$, it now suffices to call \amprob with inaccuracy $\lambda$, instead of calling $\prob$ with inaccuracy $\lambda^2$. 
Given that the query complexity of \amprob is $\Ot(1/\lambda)$, this leads to a quadratic improvement in the query complexity in form of $\Ot(\frac{1}{\lambda\fmax})$. 

\begin{lemma}
    \label{lemma:non-lin}
    Given a Boolean function $f:\{0,1\}^n\xrightarrow{}\{0,1\}$ as an oracle, an accuracy parameter $\lambda$ and an error parameter $\delta$, there exists an algorithm that returns an estimate $\tilde{\eta}_f$ such that $|\eta_f - \tilde{\eta}_f|\le \lambda$ with probability at least $1-\delta$ using $O(\frac{1}{\lambda\hat{f}_{max}}\log(\frac{1}{\lambda})\log(\frac{1}{\delta\hat{f}_{max}}))$ queries to the oracle of $f$. 
\end{lemma}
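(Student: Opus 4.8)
The plan is to instantiate the \amprob algorithm of Theorem~\ref{theorem:ampfil} on the Deutsch–Jozsa circuit and wrap it in a binary search over the threshold, exactly mirroring the \prob-based Algorithm~1 of~\cite{Bera2021QuantumEstimation} but replacing \prob with \amprob. First I would recall that the Deutsch–Jozsa circuit $D_f$ on input $\ket{0^n}$ produces the state $\sum_x \hat f(x)\ket{x}$, so the amplitude of $\ket{x}$ is exactly the Walsh–Hadamard coefficient $\hat f(x)$; hence $D_f$ serves as the oracle $O_D$ required by \amprob, with $m = 2^n$, $a = O(1)$ (just the workspace of the Deutsch–Jozsa circuit, or zero), and each call to $O_D$ costing one query to $f$. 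The target quantity is $\hat f_{max} = \max_x |\hat f(x)|$, and estimating $\eta_f = \tfrac12 - \tfrac12 \hat f_{max}$ to additive accuracy $\lambda$ is equivalent to estimating $\hat f_{max}$ to additive accuracy $2\lambda$ (rescale $\lambda$ by a constant, which changes nothing asymptotically).

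Next I would set up the binary search. Run \amprob with accuracy parameter $\epsilon = \Theta(\lambda)$ and a per-call error parameter $\delta'$ to be fixed below, over a sequence of thresholds $\tau$ chosen by bisection on $(0,1]$. By the guarantee of Theorem~\ref{theorem:ampfil}, a single call with threshold $\tau$ tells us (with probability $\ge 1-\delta'$) whether $\hat f_{max}\ge \tau$ (output qubit $\ket{1}$) or $\hat f_{max} < \tau - \epsilon$ (output qubit $\ket{0}$); in the ambiguous band $[\tau-\epsilon,\tau)$ either answer is acceptable, which is precisely what makes the bisection converge to an $\epsilon$-accurate estimate of $\hat f_{max}$ after $O(\log(1/\epsilon)) = O(\log(1/\lambda))$ iterations. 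To make all $O(\log(1/\lambda))$ calls succeed simultaneously with probability $\ge 1-\delta$, set $\delta' = \delta / O(\log(1/\lambda))$, i.e.\ $\delta' = \Theta(\delta/\log(1/\lambda))$; a union bound then gives overall success probability $\ge 1-\delta$. The standard care with the "annoying" additive-error band (already flagged in the introduction) is handled exactly as in~\cite{Bera2021QuantumEstimation}: maintain an interval $[\ell, u]$ with $u - \ell$ halving each round and $\hat f_{max}$ guaranteed to lie within $\epsilon$ of it.

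For the query count, each \amprob call with threshold $\tau$ costs $O\!\big(\tfrac{1}{\epsilon\tau}\log\tfrac{1}{\delta'\tau}\big)$ queries to $O_D$ by Theorem~\ref{theorem:ampfil}, hence to $f$. The dominant cost comes from the smallest threshold the search ever uses that is still consistent with a positive answer, which is $\Theta(\hat f_{max})$ — the bisection never needs to probe below $\hat f_{max}$ by more than a constant factor, since once $\tau$ drops below $\hat f_{max}$ the answer is always $\ket{1}$ and the lower endpoint stops moving. So summing the geometric series of call costs over the $O(\log(1/\lambda))$ rounds is dominated by the term at $\tau = \Theta(\hat f_{max})$, giving $O\!\big(\tfrac{1}{\lambda \hat f_{max}} \log\tfrac{1}{\lambda}\,\log\tfrac{1}{\delta \hat f_{max}}\big)$ total queries (absorbing the $\log(1/\delta')=\log(\log(1/\lambda)/\delta)$ and the $\log(1/\tau)$ at $\tau=\Theta(\hat f_{max})$ into $\log\tfrac{1}{\delta \hat f_{max}}$ up to lower-order factors). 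The qubit count is inherited directly from Theorem~\ref{theorem:ampfil}: $O\big((\log m + \log(1/\epsilon) + a)\log(1/\delta'\tau)\big) = O\big((n + \log(1/\lambda))\log(1/\delta\hat f_{max})\big)$, which is $\tilde O(1)$ in the relevant regime, and the binary search adds only $O(\log(1/\lambda))$ classical control bits.

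The main obstacle, and the only place that needs genuine care rather than bookkeeping, is the geometric-series argument for the query complexity: one must argue that the binary search's lower endpoint $\ell$ never descends more than a constant factor below $\hat f_{max}$, so that the per-round cost $\tfrac{1}{\epsilon\ell}$ stays geometrically controlled and the sum telescopes to the claimed bound rather than picking up an extra $1/\epsilon$ or $\log$ factor. This is exactly the "slightly tighter analysis of Algorithm~1" alluded to in the footnote preceding the lemma, and it is the step I would write out in full; everything else (the Deutsch–Jozsa reduction, the union bound over rounds, the equivalence between estimating $\eta_f$ and $\hat f_{max}$) is routine.
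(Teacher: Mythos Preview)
Your proposal is correct and matches the paper's approach: the paper's argument for Lemma~\ref{lemma:non-lin} is precisely to take Algorithm~1 of~\cite{Bera2021QuantumEstimation} (binary search over the threshold on the Deutsch--Jozsa output state) and swap \prob for \amprob, so that the per-call accuracy need only be $\Theta(\lambda)$ rather than $\Theta(\lambda^2)$; your write-up supplies the details the paper leaves implicit. One small quibble: the $\log(1/\lambda)$ factor in the final bound comes not from a geometric series but from the $O(\log(1/\lambda))$ rounds that all sit at $\tau=\Theta(\hat f_{max})$ once the lower endpoint has stabilised there---your ``geometric series dominated by one term'' phrasing would actually drop that factor---but you have correctly identified the crucial point (that the search never probes below $\Theta(\hat f_{max})$) and the stated complexity is right.
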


Bera et al., (\cite{Bera2021QuantumEstimation}) also showed a lower bound of $\Omega(1/\sqrt{\lambda})$ for the non-linearity estimation.
This can further improved to $\Omega(1/\lambda)$ via a reduction from the \countdec problem to the non-linearity problem.

\begin{restatable}{lemma}{nonlinlb}
    \label{lemma:non-lin-lb}
    Any quantum algorithm uses $\Omega(1/\lambda)$ queries to estimate the non-linearity of any given Boolean function.
\end{restatable}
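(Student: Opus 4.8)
\textbf{Proof proposal for Lemma~\ref{lemma:non-lin-lb}.}

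The plan is to reduce an instance of \countdec to non-linearity estimation, exactly in the spirit of the lower bound for \prob in Theorem~\ref{thm:prob-lowerbound}. Recall that the output state of the Deutsch-Jozsa circuit on a Boolean function $f$ is $\sum_x \hat{f}(x)\ket{x}$, so the largest probability among the outcomes is $\fmax^2$ and an estimate of $\eta(f) = \tfrac12 - \tfrac12\fmax$ within additive $\lambda$ is equivalent (up to constants) to an additive-$\lambda$ estimate of $\fmax$. The goal is therefore to encode a \countdec instance on an $n$-bit string $X$ into a Boolean function whose maximum Walsh-Hadamard coefficient takes one of two values separated by $\Theta(\lambda)$, so that an $O(1/\lambda)$-query lower bound on the former transfers.

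First I would fix the \countdec instance to use: following the calculation in Theorem~\ref{thm:prob-lowerbound}, take an $N$-bit string with Hamming weight either $l_1 = N/2$ or $l_2 = N/2 + \varepsilon N$; by Corollary~1.2 of~\cite{Nayak1999QuantumStatistics} this requires $\Omega(1/\varepsilon)$ queries. Next, the key step is the encoding: given the string $X$ indexed by inputs in $\{0,1\}^r$ (so $N = 2^r$), define $f_X : \{0,1\}^{r+1} \to \{0,1\}$ by $f_X(y,b) = b \oplus X_y$ (or a similar gadget), so that a query to $f_X$ costs one query to $X$. One then computes the Walsh-Hadamard spectrum of $f_X$: the bias $\sum_y (-1)^{X_y}/N$ of $X$ shows up directly as (a component of) $\hat{f_X}$ at a designated point, and one checks that this is the \emph{largest} coefficient in absolute value. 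If $|X| = N/2$ this bias is $0$; if $|X| = N/2 + \varepsilon N$ it is $2\varepsilon$. Thus $\fmax(f_X)$ is either (at most) some small value or at least $\approx 2\varepsilon$, and an additive-$\lambda$ estimate with $\lambda = \Theta(\varepsilon)$ distinguishes the two cases. Combined with the $\Omega(1/\varepsilon) = \Omega(1/\lambda)$ lower bound for \countdec, this yields the claimed $\Omega(1/\lambda)$ bound for non-linearity estimation.

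The main obstacle I anticipate is the encoding bookkeeping: one has to choose the gadget so that (i) a single oracle call to $f_X$ is simulable with a single oracle call to $X$, (ii) the bias of $X$ appears as an actual Walsh coefficient of $f_X$ rather than being diluted, and (iii) no \emph{other} Walsh coefficient of $f_X$ is larger in magnitude in the $|X|=N/2$ case --- otherwise $\fmax$ would not track the bias and the two cases would not be separated. Point (iii) is the delicate one: for a generic balanced $X$ some Fourier coefficient of $f_X$ could be as large as $\Theta(1/\sqrt N)$ merely from random fluctuations, so one must either pick the \countdec promise values so that $\varepsilon = \omega(1/\sqrt N)$ (which is fine, since we only need the bound for $\lambda$ not too small, and $N$ can be taken as large as we like relative to $1/\lambda$), or use a smoother gadget such as $f_X(y,z,b) = b \oplus (X_y \wedge z)$ on an auxiliary bit $z$ to push the ``signal'' coefficient above the ``noise'' floor. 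Once the separation is established the rest is a routine reduction argument mirroring the \prob lower bound, so I would present the Fourier computation carefully and keep everything else brief.
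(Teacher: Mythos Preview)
Your high-level plan (reduce \countdec to a decision about $\fmax$, using that the bias of $X$ appears as a Walsh coefficient) is the same as the paper's. The gap is in your obstacle analysis and in your primary encoding.

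You diagnose the danger in the balanced case as ``random fluctuations'' of size $\Theta(1/\sqrt N)$. That is not the problem. The lower bound for \countdec must hold for \emph{every} string of the prescribed weights, so your reduction must work for every such $X$. With your main encoding $f_X(y,b)=b\oplus X_y$ (which has the same Walsh spectrum, up to a fixed coordinate, as the direct encoding $g_X(y)=X_y$), a balanced $X$ can have $\fmax(g_X)=1$: take $X_y=y_1$, a linear function. So for $|X|=N/2$ the value of $\fmax$ ranges over essentially the whole interval, and no additive-$\lambda$ estimate of $\fmax$ can separate the two weight classes. In particular your first fix --- taking $N$ large so that $\varepsilon=\omega(1/\sqrt N)$ --- does nothing, because the obstruction is structural, not statistical.

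Your second fix, the gadget $f_X(y,z,b)=b\oplus(X_y\wedge z)$, actually \emph{does} rescue the argument, though you have not verified it. A short Fourier computation shows $\hat f_X(a,c,1)=\tfrac12[a=0]+\tfrac{(-1)^c}{2}\hat g_X(a)$ and all $d=0$ coefficients vanish; hence $\fmax(f_X)=\tfrac12+\tfrac12|\hat g_X(0^r)|$, which depends only on $|X|$. This gives the desired separation and completes the reduction. The paper takes a different and slightly cleaner route: it keeps the direct encoding $f^{(X)}(i)=X_i$ but chooses the \countdec parameters to be $(N/4,\,N/4-\Delta)$ rather than $(N/2,\,N/2+\varepsilon N)$. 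At Hamming weight $N/4$ one has $\hat f(0^n)=\tfrac12$, and a counting argument shows $|\hat f(a)|\le \tfrac12$ for every $a\neq 0^n$ whenever the weight is at most $N/4$; thus $\fmax=\hat f(0^n)$ is determined by the weight alone, with no gadget needed. Invoking Theorem~\ref{thm:nayak_wu} for these parameters gives $\Omega(N/\Delta)=\Omega(1/\lambda)$.
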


The proof of Lemma~\ref{lemma:non-lin-lb} is given in Appendix~\ref{appendix:non-lin-lb}.
This shows that the modified algorithm for non-linearity estimation that uses \amprob is close to optimal.

\newpage

\bibliographystyle{plain}
\bibliography{ref}

\newpage
\appendix
\section{Amplitude amplification, amplitude estimation and majority}\label{appendix:ampest-and-amp}

\newcommand{\MAJ}{\mathtt{MAJ}}

In this section, we present details on the quantum amplitude amplification subroutine and the $\MAJ$ operator which are used as part of our algorithms.

\subsection{Amplitude amplification}

The amplitude amplification algorithm (AA) is a generalization of the novel Grover's algorithm.
Given an $n$-qubit algorithm $A$ that outputs the state $\ket{\phi}=\sum_k\alpha_k\ket{k}$ on $\ket{0^n}$ and a set of basis states $G=\{\ket{a}\}$ of interest, the goal of the amplitude amplification algorithm is to amplify the amplitude $\alpha_a$ corresponding to the basis state $\ket{a}$ for all $\ket{a}\in G$ such that the probability that the final measurement output belongs to $G$ is close to 1.
In the most general setting, one is given access to the set $G$ via an oracle $O_G$ that marks all the states $\ket{a}\in G$ in any given state $\ket{\phi}$; i.e., $O_G$ acts as
$$O_G \sum_k\alpha_k\ket{k}\ket{0} \xrightarrow{}\sum_{a\notin G}\alpha_a\ket{a}\ket{0} + \sum_{a\in G}\alpha_a\ket{a}\ket{1}.$$

Now, for any $G$, any state $\ket{\phi} = \sum_k\alpha_k\ket{k}$ can be written as 
$$\ket{\phi} = \sum_k\alpha_k\ket{k} = \sin(\theta)\ket{\nu} + \cos(\theta)\ket{\overline{\nu}}$$ where $\sin(\theta) = \sqrt{\sum_{a\in G}|\alpha_a|^2}$, $\ket{\nu} = \frac{\sum_{a\in G}\alpha_a\ket{a}}{\sqrt{\sum_{a\in G}|\alpha_a|^2}}$ and $\ket{\overline{\nu}} = \frac{\sum_{a\notin G}\alpha_a\ket{a}}{\sqrt{\sum_{a\notin G}|\alpha_a|^2}}$.
Notice that the states $\ket{\nu}$ and $\ket{\overline{\nu}}$ are normalized and are orthogonal to each other.
The action of the amplitude amplification algorithm can then be given as
$$AA\Big(\sum_k\alpha_k\ket{k}\ket{0}\Big) = AA\big(\sin(\theta)\ket{\nu} + \cos(\theta)\ket{\overline{\nu}}\big)\ket{0} \xrightarrow{} \sqrt{(1-\beta)}\ket{\nu}\ket{1} + \sqrt{\beta}\ket{\overline{\nu}}\ket{0}$$ where $\beta$ satisfies $|\beta| < \delta$ and $\delta$ is the desired error probability.
This implies that on measuring the final state of AA, the measurement outcome $\ket{a}$ belongs to $G$ with probability $|1-\beta|$ which is at least $1-\delta$.

  \subsection{$\MAJ$ operator}
Let $X_1 \ldots X_k$ be Bernoulli random variables with success probability $p > 1/2$. Let $Maj$ denote their majority value (that appears more than $k/2$ times). Using Hoeffding's bound\footnote{$\Pr[\sum X_i - E[\sum X_i] \ge t] \le \exp(-\frac{2t^2}{n})$}, it can be easily proved that $Maj$ has a success probability at least $1-\delta$, for any given $\delta$, if we choose $k \ge \tfrac{2p}{(p-1/2)^2} \ln \tfrac{1}{\delta}$. We require a quantum formulation of the same.

Suppose we have $k$ copies of the quantum state $\ket{\psi} = \ket{\psi_0} \ket{0} + \ket{\psi_1} \ket{1}$ in which we define ``success'' as observing $\ket{0}$ (without loss of generality) and $k$ is chosen as above. Let $p = \| \ket{\psi_0} \|^2$ denote the probability of success. Suppose we measure the final qubit after applying $(\mathbb{I}^k \otimes MAJ)$ in which the $MAJ$ operator acts on the second registers of each copy of $\ket{\psi}$. Then it is easy to show, essentially using the same analysis as above, that 
$$(\mathbb{I}^k \otimes MAJ) \ket{\psi}^{\otimes k} \ket{0} = \ket{\Gamma_0} \ket{0} + \ket{\Gamma_1} \ket{1}$$ in which $\| \ket{\Gamma_0} \|^2 \ge 1-\delta$.

The $\MAJ$ operator can be implemented without additional queries and with $poly(k)$ gates and $\log(k)$ qubits.
\section{Some Useful Subroutines}
\label{appendix:useful-subroutines}

In this section we present a few subroutines that are used in the construction of \probfilorcl and \ampfilorcl oracles.

\begin{description}
    \item[{\tt EQ$_m$}:] Given two computational basis states $\ket{x}$ and $\ket{y}$ each of $k$ qubits, {\tt EQ$_m$} checks if the {$m$-sized} prefix of $x$ and that of $y$ are equal. Mathematically, {\tt EQ$_m$}$\ket{x}\ket{y} = (-1)^{c}\ket{x}\ket{y}$ where $c=1$ if $x_i=y_i$ for all $i\in [m]$, and $c=0$ otherwise.
    \item[{\tt HD$_q$}:] When the target qubit is $\ket{0^q}$, and with a 
    $q-$bit string $y$ in the control register, {\tt HD} computes the absolute 
    difference of $y_{int}$ from $2^{q-1}$ and outputs it as a string where 
    $y_{int}$ is the integer corresponding to the string $y$. It can be represented 
    as ${\tt HD}_q \ket{y}\ket{b} = \ket{b\oplus\tilde{y}}\ket{y}$ where $y,b\in \{0,1\}^q$ 
    and $\tilde{y}$ is the bit string corresponding to the integer $\abs{2^{q-1} - y_{int}}$.
    Even though the operator {\tt HD} requires two registers, the second register 
    will always be in the state $\ket{0^q}$ and shall be reused by uncomputing (using
    $HD^\dagger$) after the {\tt CMP} gate. For all practical purposes, this operator 
    can be treated as the mapping $\ket{y} \mapsto \ket{\tilde{y}}$.
    \item[{\tt CMP}:] The ${\tt CMP}$ operator is defined as 
    ${\tt CMP} \ket{y_1}\ket{y_2}\ket{b} = \ket{y_1}\ket{y_2}\ket{b \oplus (y_2\le y_1)}$ 
    where $y_1, y_2\in \{0,1\}^n$ and $b\in \{0,1\}$. It simply checks if the integer corresponding to the basis state in the first register is at most that in the second register.
    \item[{\tt Cond-MAJ}:] The ${\tt Cond-MAJ}$ operator is defined as $\prod_x\big(\ket{x}\bra{x}\otimes MAJ\big)$ where $\ket{x}\bra{x}\otimes MAJ$ acts on computational basis states as $MAJ\ket{a_1}\cdots \ket{a_k}\ket{b} = \ket{a_1}\cdots \ket{a_k}\ket{b \oplus (\tilde{a} \ge k/2)}$ where $\tilde{a} = \sum_k a_k$ and $a_i,b \in \{0,1\}$.
\end{description}  
\section{Bounded oracle for amplitude filtering}
\label{appendix:ampfilter}

\subsection{Construction of \ampfilorcl to mark high amplitude states}

\begin{algorithm}[!hb]
    \caption{Constructing biased-oracle \ampfilorcl for real amplitude filtering \label{algo:ampfilborcl-1}}
    \begin{algorithmic}[1]
        \Require Oracle $O_D$ (with parameters $m$, $a$), threshold $\tau$, and accuracy $\epsilon$.
        \Require Input register $R_1$ set to some basis state $\ket{x}$ and output register $R_5$ set to $\ket{0}$. 
        \State Set $r=\log(m)+a$, $\tau' = \frac{1}{2}(1+\tau - \frac{\epsilon}{16})$, $q = \lceil \log(\frac{1}{\epsilon}) \rceil +5$ and $l= q+3$.
        \State Set $\tau_1 = \left\lfloor{\frac{2^l}{\pi}\sin^{-1}(\sqrt{\tau'})}\right\rfloor$
        \State Initialize ancill\ae\ registers $R_{21}R_{22}R_3R_4$ of lengths $1, r, l$ and $l$, respectively. Set $R_3 = \ket{\tau_1}$.
        \State {\bf Stage 1:} Apply \mdaealgo (sans measurement) with $R_{21}$ as the input register, $R_4$ as the output register and $\ket{0}$ as the ``good state''. A controlled-Hadamard test, i.e., $\sum_y \ketbra{y} \otimes HT_{A_y,O_D}$ over the registers $R_1, R_{21}$ and $R_{22}$ is used as the state preparation oracle, of which only $R_{21}$ is used in \mdaealgo.  \mdaealgo is called with
        error at most $1- \frac{8}{\pi^2}$ and additive accuracy $\frac{1}{2^q}$. Here, $\{A_y ~:~ y \in \{0,1\}^r \}$ is a family of oracles each of which acts as $A_y\ket{0^r}=\ket{y}$. \label{line:amp_est}
        
    
        \State {\bf Stage 2:} Set $R_5$ to 1 if the estimate of the probability, calculated using $R_4$, meets $\tau$.
        \State \qquad Use {\tt HD$_l$} on $R_3$ and $R_4$
        individually.\label{line:half_dist}
        \State \qquad  Use ${\tt CMP}$ on $R_3 = \ket{\tau_1}$ and $R_4$ as input registers 
        and $R_5$ as output register.\label{line:q_compare}
        \State \qquad Use {\tt HD$^{\dagger}_l$} on $R_3$ and $R_4$
        individually.\label{line:half_dist_inv}
    \end{algorithmic}
\end{algorithm}

The algorithm is described in Algorithm~\ref{algo:ampfilborcl-1}.
The controlled-Hadamard test on $\ket{x} \ket{0} \ket{0^r}$ is same as $\ket{x} \otimes HT_{A_x,O_D} (\ket{0} \ket{0^r})$, essentially performing the test on $R_{21}$ and $R_{22}$ between $A_x\ket{0^r}=\ket{x}$ and $O_D\ket{0^r}$. Had we measured the output state of $R_{21}$, we would have observed $\ket{0}$ with probability 
$\frac{1}{2}\big(1+Re\big(|\braket{x}{O_D|0^r}|\big)\big)$. \mdaealgo estimates this probability in $R_4$. The steps in Stage 2 are to compare this estimate with the threshold, and set or unset the output register $R_5$ accordingly. After minute analysis of errors, one can prove that the algorithm almost always marks a state if the real-value of its amplitude is above $\tau$.

Extending this algorithm to consider the absolute value of an amplitude follows the expected path of estimating both the real and the complex parts using Hadamard tests (see Section~\ref{sec:true-amp-est}), estimating the absolute value from those two, and then using the latter for marking. 

Further, note that queries to $O_D$ are only made by \mdaealgo, and that too as part of the state-preparation oracle --- the controlled Hadamard test. \mdaealgo makes $O(1/\epsilon)$ calls to the test, and each of them involves one call to $O_D$, leading to an overall query complexity of $O(1/\epsilon)$. The entire behaviour is summarised below.

\begin{restatable}[]{lemma}{ampfilboraclelemma}
\label{lemma:amp-fil-boracle}
    \ampfilorcl makes $O(1/\epsilon)$ queries to $O_D$ and behaves as given below.
    \[\ampfilorcl\ket{x}\ket{0^{2l+r+1}}\ket{0} = \ket{x}\Big(\eta_{x,0}\ket{\phi_{x,0}}\ket{0} + \eta_{x,1}\ket{\phi_{x,1}}\ket{1}\big)\]
    where $|\eta_{x,0}|^2\ge \tfrac{8}{\pi^2}$ if $|\alpha_x| < \tau-\epsilon$ and $|\eta_{x,1}|^2\ge \tfrac{8}{\pi^2}$ if $|\alpha_x| \ge \tau$.
\end{restatable}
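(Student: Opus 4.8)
The plan is to track the state through the two stages of Algorithm~\ref{algo:ampfilborcl-1} and argue that, conditioned on the estimation step succeeding (which happens with probability at least $8/\pi^2$), the comparison in Stage~2 sets the flag register $R_5$ correctly. First I would fix the basis state $\ket{x}$ in $R_1$ and observe that the controlled-Hadamard test collapses to $\ket{x}\otimes HT_{A_x,O_D}$ acting on $R_{21}R_{22}$; this prepares a state in which the probability of observing $\ket{0}$ in $R_{21}$ is exactly $p_x^{HT} := \tfrac12\big(1+\Re(|\braket{x}{O_D|0^r}|)\big) = \tfrac12(1+|\Re(\alpha_x)|)$ (taking the real-amplitude case as in the algorithm; for the absolute-value extension one runs the analogous $S^\dagger$-modified test in a parallel branch and combines the estimates, invoking the elementary $Z=C+iD$ fact from Section~\ref{sec:true-amp-est}). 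By Theorem~\ref{theorem:simulaetheorem}, \mdaealgo with accuracy $\tfrac{1}{2^q}=\epsilon/32$ (since $q=\lceil\log(1/\epsilon)\rceil+5$) writes into $R_4$, with probability at least $8/\pi^2$, an index $\tilde\beta$ with $\big|\sin^2(\tfrac{\pi}{2^l}\tilde\beta)-p_x^{HT}\big|\le O(\epsilon)$; the remaining amplitude mass, of squared norm at most $1-8/\pi^2$, sits on "bad" estimates.

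Next I would analyze Stage~2. The key point is that $\tau_1=\lfloor \tfrac{2^l}{\pi}\sin^{-1}(\sqrt{\tau'})\rfloor$ encodes the threshold $\tau'=\tfrac12(1+\tau-\tfrac{\epsilon}{16})$ in the same "$\sin^2$" coordinate that \mdaealgo outputs, and that {\tt HD}$_l$ applied to both $R_3$ and $R_4$ folds the two-valued ambiguity $\tilde\beta_{x,\pm}$ of amplitude estimation (the two preimages symmetric about $2^{l-1}$) onto a single value $|2^{l-1}-\tilde\beta|$, so that a single {\tt CMP} comparison suffices regardless of which of the two estimates was produced. I would then check the two arithmetic cases: if $|\alpha_x|\ge\tau$ then $p_x^{HT}\ge\tfrac12(1+\tau)$, and the $O(\epsilon)$ estimation error together with the $\epsilon/16$-slack built into $\tau'$ guarantees that the folded estimate in $R_4$ lies on the correct side of the folded $\tau_1$ in $R_3$, so {\tt CMP} flips $R_5$ to $\ket{1}$; symmetrically, if $|\alpha_x|<\tau-\epsilon$ then $p_x^{HT}<\tfrac12(1+\tau-\epsilon)$ and the same slack keeps $R_5$ at $\ket{0}$. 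Applying {\tt HD}$^\dagger_l$ restores $R_3,R_4$ to their pre-fold contents (as noted, {\tt HD} is reversible and $R_3$'s companion ancilla returns to $\ket{0^l}$), leaving $R_5$ as the only register whose value depends on the comparison. Collecting the "good estimate" branch (amplitude squared $\ge 8/\pi^2$, flag set correctly) into $\eta_{x,f(x)}\ket{\phi_{x,f(x)}}\ket{f(x)}$ and everything else into $\eta_{x,1-f(x)}\ket{\phi_{x,1-f(x)}}\ket{1-f(x)}$ gives the claimed form, with the stated bound $|\eta_{x,f(x)}|^2\ge 8/\pi^2$.

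Finally, for the query count I would note that $O_D$ is touched only inside the controlled-Hadamard test that serves as the state-preparation oracle for \mdaealgo, each invocation of the test costing exactly one call to $O_D$ (and one to $A_x$, which is free of $O_D$-queries); since \mdaealgo makes $O(1/\epsilon)$ oracle calls by Theorem~\ref{theorem:simulaetheorem}, and Stage~2 makes none, the total is $O(1/\epsilon)$. I expect the main obstacle to be the Stage~2 error bookkeeping: one must verify that the several sources of slack --- the $O(\epsilon)$ additive error of \mdaealgo, the folding via {\tt HD}, the floor in the definition of $\tau_1$, and the $\epsilon/16$ built into $\tau'$ --- compose so that the strict separation between the $|\alpha_x|\ge\tau$ and $|\alpha_x|<\tau-\epsilon$ regimes survives, and in particular that the derivative $\tfrac{d}{d\theta}\sin^2\theta$ near the relevant $\theta$ does not blow up the additive error when translating between the $\sin^2$ coordinate and the amplitude; this is exactly the "minute analysis of errors" alluded to before the lemma statement, and it is where the specific constants $q=\lceil\log(1/\epsilon)\rceil+5$ and $l=q+3$ earn their keep.
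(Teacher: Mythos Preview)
Your proposal is correct and follows essentially the same approach as the paper's proof: the paper also tracks the state through Stage~1 (analysing the Hadamard-test probability $\tfrac12(1+|\alpha_x|)$ and invoking the \mdaealgo guarantee) and then carries out the two-case Stage~2 analysis via the {\tt HD}/{\tt CMP} folding, using Propositions~\ref{lemma:sin_inequality} and~\ref{prop:tau'_tau1_rel} to handle exactly the error bookkeeping you anticipate; the query bound is likewise read off directly from Theorem~\ref{theorem:simulaetheorem}.
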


Intuitively, \ampfilorcl correctly marks a given $x$ with probability at least $\tfrac{8}{\pi^2}$ if $|\alpha_x|\ge \tau$ and erroneously marks $x$ with probability at most $1-\tfrac{8}{\pi^2}$ if $|\alpha_x| < \tau-\epsilon$, which leads to the claim that \ampfilorcl is a biased oracle with error $1-\tfrac{8}{\pi^2}$.


\subsection{Analysis of \ampfilorcl}

We first assume that all amplitudes are real.

\ampfilboraclelemma*

We require a few technical results to proceed with the proof. These can be proved easily using standard identities.

\begin{proposition}
    \label{prop:fhat-tau-scale-change}
    For any $\alpha_x$, a threshold $\tau$ and some $\epsilon$,
    \begin{enumerate}
        \item $|\alpha_x|\ge \tau+2\epsilon \iff \frac{1}{2}\big(1-|\alpha_x|\big)\le \frac{1}{2}\big(1-\tau\big)-\epsilon$.
        \item $|\alpha_x|< \tau \iff \frac{1}{2}\big(1-|\alpha_x|\big)> \frac{1}{2}\big(1-\tau\big)$.
    \end{enumerate}
\end{proposition}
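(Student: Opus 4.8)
\textbf{Proof proposal for Proposition~\ref{prop:fhat-tau-scale-change}.}

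The plan is to prove each of the two biconditionals by straightforward algebraic manipulation, noting that both involve the affine map $t \mapsto \tfrac{1}{2}(1-t)$, which is strictly decreasing. For part~(1), I would start from the right-hand side $\tfrac{1}{2}(1-|\alpha_x|) \le \tfrac{1}{2}(1-\tau) - \epsilon$, multiply through by $2$ to get $1 - |\alpha_x| \le 1 - \tau - 2\epsilon$, and then cancel the $1$'s and rearrange to obtain $|\alpha_x| \ge \tau + 2\epsilon$. Every step here is an equivalence (multiplication by a positive constant, adding the same quantity to both sides), so the chain of implications runs in both directions and the biconditional follows immediately. Part~(2) is the same computation with $\epsilon = 0$: from $\tfrac{1}{2}(1-|\alpha_x|) > \tfrac{1}{2}(1-\tau)$ multiply by $2$, cancel, and flip to get $|\alpha_x| < \tau$, again with every step reversible.

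Concretely I would write, for part~(1),
\begin{align*}
\tfrac{1}{2}\big(1-|\alpha_x|\big) \le \tfrac{1}{2}\big(1-\tau\big)-\epsilon
&\iff 1-|\alpha_x| \le 1-\tau-2\epsilon \\
&\iff -|\alpha_x| \le -\tau-2\epsilon \\
&\iff |\alpha_x| \ge \tau+2\epsilon,
\end{align*}
and analogously for part~(2) with the inequality strict and $\epsilon$ replaced by $0$. No hypotheses on the sign or magnitude of $\alpha_x$, $\tau$, or $\epsilon$ are needed; the statement is a purely formal rearrangement.

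Honestly, there is no real obstacle here — the proposition is a bookkeeping lemma that isolates the ``change of scale'' identity used when the marking stage of \ampfilorcl compares the estimate produced by the Hadamard test (which lives around $\tfrac{1}{2}(1+\operatorname{Re}(\cdot))$, or equivalently $\tfrac{1}{2}(1-\cdot)$ after the {\tt HD} reflection) against the hard-coded threshold $\tau_1$. The only thing worth being careful about is matching the strict versus non-strict inequalities exactly as stated, since the filtering guarantees distinguish the ``$\ge \tau$'' case from the ``$< \tau - \epsilon$'' (here ``$< \tau$'') case, and these two halves of the proposition are what feed those two cases respectively. I would therefore just present the two equivalence chains above and remark that each step is reversible.
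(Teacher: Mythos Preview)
Your proposal is correct and matches the paper's treatment: the paper does not spell out a proof for this proposition, merely noting that it ``can be proved easily using standard identities,'' and your reversible chain of inequalities is exactly that standard-identity argument.
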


\begin{proposition}[Proposition~4.1,\cite{Bera2021QuantumEstimation}]\label{lemma:sin_inequality}
For any two angles $\theta_1, \theta_2 \in [0,\pi]$, \[\sin{\theta_1}\le \sin{\theta_2} \iff \sin^2{\theta_1}\le \sin^2{\theta_2} \iff \abs{\frac{\pi}{2}-\theta_1} \ge \abs{\frac{\pi}{2}-\theta_2}.\]
\end{proposition}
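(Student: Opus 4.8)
The plan is to prove the two biconditionals separately and then chain them; each step rests only on elementary monotonicity of $\sin$, of squaring, and of $\cos$ on suitably chosen intervals.

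First I would handle $\sin\theta_1\le\sin\theta_2 \iff \sin^2\theta_1\le\sin^2\theta_2$. Since $\theta_1,\theta_2\in[0,\pi]$, both $\sin\theta_1$ and $\sin\theta_2$ lie in $[0,1]\subseteq[0,\infty)$, and $t\mapsto t^2$ is strictly increasing on $[0,\infty)$, so it both preserves and reflects the order $\le$. Applying it to $\sin\theta_1$ and $\sin\theta_2$ gives this equivalence immediately.

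Next I would handle $\sin\theta_1\le\sin\theta_2 \iff \abs{\tfrac{\pi}{2}-\theta_1}\ge\abs{\tfrac{\pi}{2}-\theta_2}$. The key identity is the co-function relation together with the evenness of cosine: for every real $\theta$, $\sin\theta=\cos\!\big(\tfrac{\pi}{2}-\theta\big)=\cos\!\big(\abs{\tfrac{\pi}{2}-\theta}\big)$. As $\theta$ ranges over $[0,\pi]$, the quantity $\abs{\tfrac{\pi}{2}-\theta}$ ranges over $[0,\tfrac{\pi}{2}]$, an interval on which $\cos$ is strictly decreasing. Hence $\sin\theta_1\le\sin\theta_2$ iff $\cos\!\big(\abs{\tfrac{\pi}{2}-\theta_1}\big)\le\cos\!\big(\abs{\tfrac{\pi}{2}-\theta_2}\big)$ iff $\abs{\tfrac{\pi}{2}-\theta_1}\ge\abs{\tfrac{\pi}{2}-\theta_2}$, the last step because a strictly decreasing function reverses $\le$. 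Chaining this with the first equivalence yields the stated three-way equivalence.

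I do not expect any real obstacle here; the only care needed is bookkeeping of the intervals — that $\sin$ is nonnegative precisely because we are on $[0,\pi]$, and that $\abs{\tfrac{\pi}{2}-\theta}$ lands in $[0,\tfrac{\pi}{2}]$ exactly on that domain — together with the observation that it is \emph{strict} monotonicity that lets each implication run both ways, so the biconditionals are genuine. Should the co-function route feel unnatural, an alternative is to argue the second equivalence directly from the shape of $\sin$ on $[0,\pi]$ (increasing on $[0,\tfrac\pi2]$, decreasing on $[\tfrac\pi2,\pi]$, and symmetric about $\tfrac\pi2$), but the cosine identity keeps things shortest.
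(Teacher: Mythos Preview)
Your proof is correct and complete; the only subtlety is exactly the one you flag, namely that strict monotonicity is what makes the biconditionals genuine, and you handle the domain bookkeeping cleanly. Note that the paper does not actually supply a proof of this proposition --- it is quoted as Proposition~4.1 of \cite{Bera2021QuantumEstimation} and used as a black box --- so there is no in-paper argument to compare against; your elementary derivation via the co-function identity and monotonicity of $\cos$ on $[0,\tfrac{\pi}{2}]$ is a perfectly standard and clean way to establish it.
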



\begin{proposition}[Proposition~4.3,\cite{Bera2021QuantumEstimation}]
\label{prop:tau'_tau1_rel}
If two constants $\tau'$ and $\tau_1$ are related as $\tau_1 = \left\lfloor{\frac{2^l}{\pi}\sin^{-1}(\sqrt{\tau'})}\right\rfloor$, then they satisfy $0\le \tau' - \frac{2\pi}{2^l}\le \sin^2(\frac{\pi \tau_1}{2^l})$.
\end{proposition}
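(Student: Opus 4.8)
The plan is to pass to the angle domain, where both quantities become values of $\sin^2$ at nearby angles, and then control the gap using the Lipschitz behaviour of sine. First I would set $\theta = \sin^{-1}(\sqrt{\tau'})$, so that $\tau' = \sin^2\theta$ with $\theta \in [0,\pi/2]$ (legitimate since in the construction $\tau' \in (0,1)$), and write $\theta_1 = \frac{\pi \tau_1}{2^l}$, so that the right-hand side is exactly $\sin^2\theta_1$. The definition $\tau_1 = \left\lfloor \frac{2^l}{\pi}\theta \right\rfloor$ gives, by the defining property of the floor, $\frac{2^l}{\pi}\theta - 1 < \tau_1 \le \frac{2^l}{\pi}\theta$; multiplying through by $\frac{\pi}{2^l} > 0$ yields the two-sided estimate
$$\theta - \frac{\pi}{2^l} < \theta_1 \le \theta.$$
In particular $0 \le \theta - \theta_1 < \frac{\pi}{2^l}$, and since $0 \le \theta_1 \le \theta \le \pi/2$, both angles lie in the range where $\sin$ is nonnegative and increasing.

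For the upper bound $\tau' - \frac{2\pi}{2^l} \le \sin^2\theta_1$, I would rewrite it equivalently as $\sin^2\theta - \sin^2\theta_1 \le \frac{2\pi}{2^l}$ and factor the left side as $(\sin\theta - \sin\theta_1)(\sin\theta + \sin\theta_1)$. The second factor is at most $2$ because both sines are at most $1$. For the first factor, since $\sin$ is $1$-Lipschitz (equivalently, by the mean value theorem $\sin\theta - \sin\theta_1 = \cos(\xi)\,(\theta - \theta_1)$ for some intermediate $\xi$, with $\cos(\xi) \le 1$), I get $\sin\theta - \sin\theta_1 \le \theta - \theta_1 < \frac{\pi}{2^l}$, and this factor is nonnegative by the monotonicity noted above. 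Combining the two bounds gives $\sin^2\theta - \sin^2\theta_1 < 2 \cdot \frac{\pi}{2^l} = \frac{2\pi}{2^l}$, which is the desired inequality (in fact strict).

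The lower bound $0 \le \tau' - \frac{2\pi}{2^l}$ is just the statement $\tau' \ge \frac{2\pi}{2^l}$, which I would verify from the parameter regime in which the proposition is invoked: there $\tau' = \frac{1}{2}\big(1 + \tau - \tfrac{\epsilon}{16}\big) \ge \tfrac12$, while $2^l = 2^{q+3}$ with $q \ge \log(1/\epsilon)$, so that $\frac{2\pi}{2^l}$ is of order $\epsilon$ and lies far below $\tau'$; hence the inequality holds with ample slack.

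I do not expect a genuine obstacle here. The only care needed is bookkeeping: keeping both $\theta$ and $\theta_1$ inside $[0,\pi/2]$ so that the sign and monotonicity of $\sin$ are exactly what justify the nonnegativity of each factor and the Lipschitz step, and translating the floor inequalities without inadvertently reversing a direction after the multiplication by $\frac{\pi}{2^l}$.
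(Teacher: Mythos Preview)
Your argument is correct. The paper does not supply its own proof of this proposition---it is quoted verbatim from \cite{Bera2021QuantumEstimation} and used as a black box---so there is nothing to compare your approach against; your angle-domain Lipschitz estimate is exactly the natural route and establishes the upper inequality cleanly (indeed strictly). Your observation that the lower inequality $\tau' \ge \tfrac{2\pi}{2^l}$ is not a consequence of the floor relation alone but rather of the ambient parameter regime ($\tau' \ge \tfrac12$ while $\tfrac{2\pi}{2^l} = O(\epsilon)$) is also correct and worth making explicit, since as stated the proposition silently relies on that context.
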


\begin{proof}[Proof of Lemma]
    We analyse Algorithm~\ref{algo:ampfilborcl-1} stage by stage for registers $R_1R_{21}R_{22}R_3R_4R_5$ with the input $\ket{x}\ket{0}\ket{0^r}\ket{0^l}\ket{0^l}\ket{0}$.

    \paragraph*{\bf Stage-1:} Consider the registers $R_1R_{21}R_{22}R_4$. We set $R_3 = \ket{\tau_1}$. 
    In Section~\ref{sec:mdim-amp-est} we identified that the \mdaealgo can be given as product of two operators $U = \sum_y\ketbra{y}\otimes AmpEst_y$ and $V = \sum_y\ketbra{y}\otimes B_y \otimes \iden^{l}$.
    In the construction of the oracle, we have $B_y = HT_{A_y, O_D}$ and $A_y$ is such that $A_y\ket{0^r} = \ket{y}$.
    We analyze the state of the register after each of these operations. For simplicity we assume that $O_D\ket{0^r}=\sum_x\alpha_x\ket{x} = \ket{\phi}$ (ignoring the $\ket{\xi_x}$ states).
    
    The state of the registers $R_1, R_{21}$ and $R_{22}$ after applying $V$ on $\ket{x}\ket{0}\ket{0^r}$ can be given as
    \[\ket{\psi_1'} = \frac{1}{2}\ket{x}\big(\ket{0}(\ket{x}+\ket{\phi}) + \ket{1}(\ket{x}-\ket{\phi})\big).\]
    Given that the state in $R_1$ is $\ket{x}$, the probability of obtaining $\ket{0}$ in $R_{21}$ can then be given as \[Pr(\ket{0}_{R_{21}}) = \frac{1}{4}||\ket{x}+\ket{\phi}||^2 = \frac{1}{4}\Big(2+2Re\big(\bra{x}\ket{\phi}\big)\Big) = \frac{1}{2}(1+|\alpha_x|).\]
    Using this, $\ket{\psi_1'}$ can be given as
    \[\ket{\psi_1'} = \ket{x}\Big(\nu_{x0}\ket{0}\ket{\eta_{x0}} + \nu_{x1}\ket{1}\ket{\eta_{x1}}\Big) = \ket{x}\ket{\nu_x}\text{~(say)}\] for some normalized states $\ket{\eta_{x0}}$ and $\ket{\eta_{x1}}$ where $|\nu_{x0}|^2 = \frac{1}{2}(1+|\alpha_x|)$.

    Notice that the operation $U$ is applied on the registers $R_1R_{21}R_{4}$. This operation is given as $\sum_x\ketbra{x} \otimes AmpEst_x$ where $AmpEst_x$ uses the Grover iterator $G_x = -B_xU_{\overline{0}}B_x^{\dagger}U_0$.
    On applying $U$ on $R_1R_{21}R_{4}$, we obtain a state of the registers $R_1R_{21}R_{22}R_3R_4R_5$ as, 
    \begin{align*}
        \ket{\psi_1} &= \ket{x}\ket{\nu_x}\ket{\tau_1}\Big(\beta_{x,s}\ket{a_x}+\beta_{x,\overline{s}}\ket{E_x}\Big)\ket{0}\\
        &= \beta_{x,s}\ket{x}\ket{\nu_x}\ket{\tau_1}\ket{a_x}\ket{0}+\beta_{x,\overline{s}}\ket{x}\ket{\nu_x}\ket{\tau_1}\ket{E_x}\ket{0}\\
        &=  \beta_{x,s}\ket{\psi_{1,s}} + \beta_{x,\overline{s}}\ket{\psi_{1,\overline{s}}}
    \end{align*}
    where $\ket{a_x}$ is a normalized state of the form $\ket{a_x} = \gamma_{+}\ket{a_{x,+}} + \gamma_{-}\ket{a_{x,-}}$ that on measurement outputs $a \in \{a_{x,+}, a_{x,+}\}$ which is
    an $l$-bit string that behaves as $\displaystyle\left| \sin^2\left(\frac{a\pi}{2^l}\right) - |\nu_{x0}|^2 \right| \le \tfrac{1}{2^q}$, $|\beta_{x,s}|^2 \ge 1-\frac{8}{\pi^2}$ and $|\beta_{x,s}|^2 \le \frac{8}{\pi^2}$.
    We denote the set $\{a_{x,+}, a_{x,-}\}$ by $\pmset_{a_x}$.
    Essentially, for any $x$, \mdaealgo stores the correct estimate of the probability of $x$ in $\ket{\nu_x}$ into $R_4$ with probability at least $1-\frac{8}{\pi^2}$ and errors with probability at most $\frac{8}{\pi^2}$.
    
    \paragraph*{\bf Stage-2:} 
Notice that stage 2 affects only the registers $R_3, R_4$ and $R_5$.
Here we are interested only in the state $\ket{\psi_{1,s}}$ that contains the correct estimate.
For any computational basis state $\ket{u}$ and $\ket{v}$, the transformation of a state of the form $\ket{u}\ket{v}\ket{0}$ due to stage 2 can be given as \begin{equation}\label{eqn:stage3}
    \ket{u}\ket{v}\ket{0} \xrightarrow{}\ket{u}\ket{v}\ket{\mathbb{I}\{u\ge v\}} \mbox{ where ~}\mathbb{I}\{u\ge v\}=1 \mbox{ if $u\ge v$ and $0$ else}.
\end{equation}
The reason for indicating ``$u \ge v$'' as 1 and not the other way around is due to the reversal of the direction of the inequality in Proposition~\ref{lemma:sin_inequality}.  Then, stage 2 transforms the state
\[\ket{\psi_{1,s}} = \ket{x}\ket{\nu_x}\ket{\tau_1}\ket{a_x}\ket{0} = \ket{x}\ket{\nu_x}\ket{\tau_1} \Big[ \gamma_{+}\ket{a_{x,+}}\ket{0} + \gamma_{-}\ket{a_{x,-}}\ket{0} \Big]\]
to the state 
\[\ket{\psi_{2,s}} = \ket{x}\ket{\nu_x}\ket{\tau_1} \Big[ \gamma_{+}\ket{a_{x,+}}\ket{\mathbb{I}\{a_{x+}\le \tau_1\}} + \gamma_{-}\ket{a_{x,-}}\ket{\mathbb{I}\{a_{x-}\le \tau_1\}} \Big] \tag{Eqn.~\ref{eqn:stage3}}\]


We will analyse the states $\ket{\mathbb{I}\{a_{x\pm}\le \tau_1\}}$ by considering two types of index $x \in [m]$.

\paragraph*{Scenario (i):} Let $x$ be such that $|\alpha_x| < \tau-\epsilon$. Now, any computational basis state $a$ in $\ket{a_x}$ will be such that $\tilde{a}_x (say) = \sin^2(\frac{a \pi}{2^l}) \in [|\nu_{x0}|^2 -\tfrac{1}{2^q}, |\nu_{x0}|^2 + \tfrac{1}{2^q}]$. Therefore, $\tilde{a}_{x} \le |\nu_{x0}|^2 + \tfrac{1}{2^q} = \frac{1}{2}(1+|\alpha_x|) + \frac{1}{2^q} < \frac{1}{2}(1+\tau-\epsilon) + \tfrac{1}{2^q}$ and since $q$ was chosen such that $2^q \ge \frac{32}{\epsilon}$, $\tilde{a}_x < \frac{1}{2}(1+\tau - \frac{7\epsilon}{16})$.

Since we have $2^l > 2^{q+1} \ge \frac{64}{\epsilon}$, we get that $\frac{2\pi}{2^l} < \frac{2\pi\epsilon}{64} < \frac{6\epsilon}{32}$.
Using this, we have $\frac{1}{2}(1+\tau-\frac{7\epsilon}{16}) = \tau' - \frac{6\epsilon}{32} < \tau'-\frac{2\pi}{2^l} < \sin^2(\frac{\pi \tau_1}{2^l})$ using Proposition~\ref{prop:tau'_tau1_rel}. Since, $\tilde{a}_x < \frac{1}{2}(1+\tau - \frac{7\epsilon}{16})$, we have $\tilde{a}_x = \sin^2(\frac{a\pi}{2^l}) < \frac{1}{2}(1+\tau - \frac{7\epsilon}{16}) < \sin^2(\frac{\pi \tau_1}{2^l})$.

Now on applying \texttt{HD$_l$} on $R_3$ and $R_4$, 
we obtain $\ket{\hat{\tau}_1}$ and $\ket{\hat{a}}$ respectively in $R_3$ and $R_4$ such that $\hat{a} = |2^{l-1} - a|$ and 
$\hat{\tau}_1 = |2^{l-1} - \tau_1|$.
Using Proposition~\ref{lemma:sin_inequality} on the fact that $\tilde{p}_x  = \sin^2(\frac{a\pi}{2^l}) < \sin^2(\frac{\pi \tau_1}{2^l})$ 
we get $\hat{a}  = |2^{l-1} - a| > \hat{\tau}_1  = |2^{l-1} - \tau_1|$.

Since 
$\hat{a} > \hat{\tau}_1$ corresponding to any $\ket{a} \in \{ \ket{a_{x,-}}, \ket{a_{x,+}} \}$, after using \cmp~on $R_3$ and $R_4$, we get in $R_5$ the state $\ket{\mathbb{I}\{a_{x,-}\le \tau_1\}}=\ket{\mathbb{I}\{a_{x,+}\le \tau_1\}}=\ket{0}$. 
Since the state $\ket{\psi_{1,s}}$ which contains the correct estimate exists with probability at least $1-\frac{8}{\pi^2}$ in $\ket{\psi_1}$, in this case, the output state $\ket{\psi_2}$ after stage-2 is such that with probability at least $1-\frac{8}{\pi^2}$ the state $\ket{x}$ is not marked.
In other words, if $x$ is such that $|\alpha_x| < \tau-\epsilon$ then $\ket{\psi_2}$ contains $\ket{x}\ket{\nu_x}\ket{\tau_1}\ket{a_x}\ket{0}$ with probability at least $1-\frac{8}{\pi^2}$.

\paragraph*{Scenario (ii):} $z$ be such that $|\alpha_z| \ge \tau$. 
 Now, any computational basis state $a$ in $\ket{a_z}$ will be such that $\tilde{a}_z (say) = \sin^2(\frac{a \pi}{2^l}) \in [|\nu_{x0}|^2 -\tfrac{1}{2^q}, |\nu_{x0}|^2 + \tfrac{1}{2^q}]$. Therefore, $\tilde{a}_z \ge |\nu_{x0}|^2 - \tfrac{1}{2^q} = \frac{1}{2}(1+|\alpha_z|) - \frac{1}{2^q} \ge \frac{1}{2}(1+\tau) - \frac{1}{2^q}$ and since $q$ was chosen such that $2^q \ge \frac{32}{\epsilon}$, $\tilde{a}_z = \sin^2(\frac{a \pi}{2^l}) > \frac{1}{2}(1+\tau) - \frac{\epsilon}{32} = \tau'$. 
 
This gives us $\sin(\frac{a\pi}{2^l}) > \sqrt{\tau'}$ since $\frac{a\pi}{2^l} \in [0,\pi]$. Furthermore, since $\tau_1$ is an integer in $[0,2^{l}-1]$ and $\tau_1 = \left\lfloor\frac{2^l}{\pi}\sin^{-1}(\sqrt{\tau'})\right\rfloor \le \frac{2^l}{\pi}\sin^{-1}(\sqrt{\tau'})$, we get $\sqrt{\tau'} \ge \sin(\frac{\tau_1\pi}{2^l})$. Combining both the inequalities above, we get $\sin(\frac{a\pi}{2^l}) > \sin(\frac{\tau_1\pi}{2^l})$.

Now, on applying \texttt{HD$_l$} on $R_3$ and $R_4$, 
we obtain $\ket{\hat{\tau}_1}$ and $\ket{\hat{a}}$ respectively in $R_3$ and $R_4$ such that $\hat{a} = |2^{l-1} - a|$ and 
$\hat{\tau}_1 = |2^{l-1} - \tau_1|$. 
Using Proposition~\ref{lemma:sin_inequality} on the fact that $\sin(\frac{a\pi}{2^l}) > \sin(\frac{\tau_1\pi}{2^l})$, we get $\hat{a} < \hat{\tau}_1$.

As above, since $\hat{a} < \hat{\tau}_1$ corresponding to any $\ket{a} \in \{ \ket{a_{x,-}}, \ket{a_{x,+}} \}$, after using \cmp~on $R_3$ and $R_4$, we get in $R_5$ the state $\ket{\mathbb{I}\{a_{x,-}\le \tau_1\}}=\ket{\mathbb{I}\{a_{x,+}\le \tau_1\}}=\ket{1}$.
Again since the state $\ket{\psi_{1,s}}$ exists with probability at least $1-\frac{8}{\pi^2}$ in $\ket{\psi_1}$, the output state $\ket{\psi_2}$ after stage-2 in this case is such that with probability at least $1-\frac{8}{\pi^2}$ the state $\ket{x}$ is marked, i.e, if $x$ is such that $|\alpha_x|\ge \tau$ then $\ket{\psi_2}$ contains $\ket{x}\ket{\nu_x}\ket{\tau_1}\ket{a_x}\ket{1}$ with probability at least $1-\frac{8}{\pi^2}$.

\paragraph*{Query Complexity : } In stage-1, from Theorem~\ref{theorem:simulaetheorem} we get that the number of queries made by \mdaealgo to $O_D$ is $O(1/\epsilon)$. \cmp~and~\hdq~operators does not require any oracle queries. Hence, the total number of queries made to $O_D$ for the construction of $\ampfilorcl$ is $O(1/\epsilon)$.

\end{proof}

\paragraph*{{\normalfont \ampfilorcl} for complex-valued amplitudes:}

The algorithm given for Lemma~\ref{lemma:amp-fil-boracle} works if the amplitudes corresponding to the basis states in $\ket{\Psi} = O_D\ket{r}$ are real.
In the case of complex amplitudes, we need a few additions in the construction of \ampfilorcl.
Alongside estimating the real part of the amplitudes with $\epsilon$ accuracy, we also estimate the complex part of the amplitudes with $\epsilon$ accuracy for all $\ket{x}$ (in superposition) using Algorithm~\ref{algo:trueampest}.
We now use the estimates of the real and the complex parts to compute the norm of the amplitudes for all $\ket{x}$. From the discussions in Section~\ref{sec:true-amp-est}, it is clear that the computed estimates of the absolute value of amplitudes are also $\epsilon$ accurate estimates.
We then use this estimate of the norm to compare against $\ket{\tau_1}$ in stage-3.


\section{Analysis of the algorithm for \amprob}

\ampfilthm*

Intuitively, Lemma~\ref{lemma:amp-fil-boracle} says that for any given $x$, \ampfilorcl correctly marks $x$ with probability at least $1-\delta$ if $|\alpha_x|\ge \tau$ and erroneously marks $x$ with probability at most $\delta$ if $|\alpha_x| < \tau-\epsilon$.
Notice that \ampfilorcl is a biased oracle with error $\delta$. Our algorithm for amplitude filtering makes use of that.

\begin{proof}
    Initialize registers $R_1R_{21}R_{22}R_3R_4R_5$ as $\ket{0^r}\ket{0^l}\ket{0^r}\ket{0^l}\ket{0}$ and apply $O_D$ on $R_1$. The state of the registers can be given as
    \[\ket{\psi} = \sum_{x\in [m]}\alpha_x\ket{x, \xi_x}\ket{0^l}\ket{0^r}\ket{0^l}\ket{0}\]
    
    Next, apply $\ampfilorcl$ on the registers. Finally, use the amplitude amplification algorithm for biased oracle given as Algorithm~\ref{algo:errored-amplify} with $\ket{1}$ in $R_5$ as the good state. Now, if there is a good $x$ in $\ket{\psi}$, then we know that $|\alpha_x|\ge \tau$ or $|\alpha_x|^2 \ge \tau^2$.  Using Theorem~\ref{thm:erroed-oracle-amplification-our} with $p = 1-\frac{8}{\pi^2}$ and Lemma~\ref{lemma:amp-fil-boracle}, we obtain the required proof.
\end{proof}

\section{Bounded oracle for probability filtering}
\label{appendix:probfilter}

\subsection{Construction of \probfilorcl to mark high probability states}

\begin{algorithm}[!htbp]
    \caption{Constructing biased-oracle \probfilorcl for probability filtering \label{algo:probfilborcl}}
    \begin{algorithmic}[1]
        \Require Oracle $O_D$ (with parameters $m$, $a$), threshold $\tau$, and accuracy $\epsilon$.
        \Require Input register $R_1$ set to some basis state $\ket{x}$ and output register $R_5$ set to $\ket{0}$. 
        \State Set $r=\log(m)+a$, $\tau' = \frac{1}{2}(1+\tau - \frac{\epsilon}{8})$, $q = \lceil \log(\frac{1}{\epsilon}) \rceil +5$ and $l= q+3$.
        \State Set $\tau_1 = \left\lfloor{\frac{2^l}{\pi}\sin^{-1}(\sqrt{\tau'})}\right\rfloor$
        \State Initialize ancill\ae\ registers $R_2R_3R_4$ of lengths $r, l$ and 1, respectively, and set $R_3 = \ket{\tau_1}$.
        \State {\bf Stage 1:} Apply \eqae (sans measurement) with $R_2$ as the input register, $R_4$ as the output register and $O_D$ is used as the state preparation oracle.
        $R_1$ is used in $EQ$ to determine the ``good state''. \eqae is called with error at most $1- \frac{8}{\pi^2}$ and additive accuracy $\frac{1}{2^q}$.\label{line:amp_est_probfil}
        \State {\bf Stage 2:} Set $R_5$ to 1 if the estimate of the probability, calculated using $R_4$, meets $\tau$.
        \State \qquad Use {\tt HD$_l$} on $R_3$ and $R_4$ individually.\label{line:half_dist_probfil}
        \State \qquad Use ${\tt CMP}$ on $R_3 = \ket{\tau_1}$ and $R_4$ as input registers and $R_5$ as output register.\label{line:q_compare_probfil}
        \State \qquad Use {\tt HD$^{\dagger}_l$} on $R_3$ and $R_4$ individually.\label{line:half_dist_inv_probfil}
    \end{algorithmic}
\end{algorithm}

The algorithm is described in Algorithm~\ref{algo:probfilborcl}. It uses the following two subroutines.

\begin{description}
    \item[{\tt HD$_q$}:] When the target qubit is $\ket{0^q}$, and with a 
    $q-$bit string $y$ in the control register, {\tt HD} computes the absolute 
    difference of $y_{int}$ from $2^{q-1}$ and outputs it as a string where 
    $y_{int}$ is the integer corresponding to the string $y$. It can be represented 
    as ${\tt HD}_q \ket{y}\ket{b} = \ket{b\oplus\tilde{y}}\ket{y}$ where $y,b\in \{0,1\}^q$ 
    and $\tilde{y}$ is the bit string corresponding to the integer $\abs{2^{q-1} - y_{int}}$.
    Even though the operator {\tt HD} requires two registers, the second register 
    will always be in the state $\ket{0^q}$ and shall be reused by uncomputing (using
    $HD^\dagger$) after the {\tt CMP} gate. For all practical purposes, this operator 
    can be treated as the mapping $\ket{y} \mapsto \ket{\tilde{y}}$.
    \item[{\tt CMP}:] The ${\tt CMP}$ operator is defined as 
    ${\tt CMP} \ket{y_1}\ket{y_2}\ket{b} = \ket{y_1}\ket{y_2}\ket{b \oplus (y_2\le y_1)}$ 
    where $y_1, y_2\in \{0,1\}^n$ and $b\in \{0,1\}$. It simply checks if the integer corresponding to the basis state in the first register is at most that in the second register.
\end{description}  


In Stage 1 of the algorithm, \eqae estimates the probability of $\ket{x}$ in $O_D\ket{0^r}$ in $R4$, and in Stage 2, this estimate is compared with $\tau$ to set or unset $R_5$. This makes \probfilorcl a biased oracle with error $1-\tfrac{8}{\pi^2}$. Further, observe that \eqae makes $O(1/\epsilon)$ calls to the state preparation oracle, in this case, $O_D$, and no one else adds to this. The overall behaviour is summarised below.

\begin{restatable}[]{lemma}{probfilboraclelemma}
    \label{lemma:prob-fil-boracle}
    \probfilorcl makes $O(1/\epsilon)$ calls to $O_D$.
    Upon measuring its output on $\ket{x}\ket{0^{2l+r+1}}\ket{0}$, we observe the following with probability at least $\tfrac{8}{\pi^2}$.
    \[\probfilorcl\ket{x}\ket{0^{2l+r+1}}\ket{0} \Longrightarrow
    \begin{cases}
        \ket{x}\ket{\phi_x}\ket{0} & \text{, if } p_x < \tau-\epsilon,\\
        \ket{x}\ket{\phi_x}\ket{1} & \text{, if } p_x \ge \tau.\\
    \end{cases} \]
\end{restatable}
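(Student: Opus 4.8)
The plan is to follow, verbatim in structure, the stage‑by‑stage analysis carried out for \ampfilorcl in the proof of Lemma~\ref{lemma:amp-fil-boracle}, specialised to the simpler setting in which \eqae \emph{directly} estimates $p_x$ (so there is no Hadamard test and no \mdaealgo to disentangle). Concretely, I would run Algorithm~\ref{algo:probfilborcl} on $\ket{x}\ket{0^r}\ket{\tau_1}\ket{0^l}\ket{0}$ (registers $R_1R_2R_3R_4R_5$) and invoke the behaviour of \eqae described in Section~\ref{sec:background_qae}, applied with $R_2$ as work register, $R_4$ as output register, $O_D$ as state‑preparation oracle, and $R_1=\ket{x}$ feeding the $EQ$ test that fixes the good basis state. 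This yields, after Stage~1, the state $\ket{x}\bigl(\beta_{x,s}\ket{\psi}\ket{\tau_1}\ket{\hat p_x}\ket{0}+\beta_{x,\overline s}\ket{\psi}\ket{\tau_1}\ket{E_x}\ket{0}\bigr)$ with $|\beta_{x,s}|^2\ge 8/\pi^2$, where $\ket{\hat p_x}=\gamma_+\ket{\hat p_{x,+}}+\gamma_-\ket{\hat p_{x,-}}$ and every $a\in\{\hat p_{x,+},\hat p_{x,-}\}$ satisfies $|\sin^2(\pi a/2^l)-p_x|\le 1/2^q$, while every basis state in $\ket{E_x}$ is a bad estimate. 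Only the good branch $\ket{\psi_{1,s}}:=\ket{x}\ket{\psi}\ket{\tau_1}\ket{\hat p_x}\ket{0}$ has to be controlled, and the ``$8/\pi^2$'' in the statement is exactly $|\beta_{x,s}|^2$.

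Next I would analyse Stage~2 on $\ket{\psi_{1,s}}$ exactly as in the \ampfilorcl proof: {\tt HD$_l$} replaces $R_3=\ket{\tau_1}$ and $R_4=\ket{a}$ by $\ket{|2^{l-1}-\tau_1|}$ and $\ket{|2^{l-1}-a|}$, then \cmp{} writes into $R_5$ the bit $\mathbb I\{|2^{l-1}-a|\le|2^{l-1}-\tau_1|\}$, and {\tt HD$^{\dagger}_l$} restores $R_3,R_4$. By Proposition~\ref{lemma:sin_inequality} this bit equals $\mathbb I\{\sin^2(\pi a/2^l)\ge\sin^2(\pi\tau_1/2^l)\}$, so the flag is set iff the estimate $\tilde a:=\sin^2(\pi a/2^l)$ of $p_x$ is at least $\sin^2(\pi\tau_1/2^l)$. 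It then remains to perform the threshold‑separation calculation, which is the probability analogue of Scenarios (i)--(ii) in the proof of Lemma~\ref{lemma:amp-fil-boracle}: using $|\tilde a-p_x|\le 1/2^q$ together with the choices $q=\lceil\log\tfrac1\epsilon\rceil+5$ (giving $1/2^q\le\epsilon/32$), $l=q+3$ (giving $2\pi/2^l=O(\epsilon)$), the definition of $\tau'$, and Proposition~\ref{prop:tau'_tau1_rel} (which gives $0\le\tau'-\tfrac{2\pi}{2^l}\le\sin^2(\pi\tau_1/2^l)\le\tau'$), one sandwiches $\sin^2(\pi\tau_1/2^l)$ strictly above $p_x+1/2^q$ in the case $p_x<\tau-\epsilon$ and at most $p_x-1/2^q$ in the case $p_x\ge\tau$. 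Since this holds for \emph{both} computational branches of $\ket{\hat p_x}$, the flag in $\ket{\psi_{1,s}}$ is deterministically $\ket0$ when $p_x<\tau-\epsilon$ and $\ket1$ when $p_x\ge\tau$; together with $|\beta_{x,s}|^2\ge 8/\pi^2$ this gives the stated measurement behaviour, with $\ket{\phi_x}$ absorbing the contents of $R_2R_3R_4$. For the query bound, the only oracle calls are the $2^{q+3}-1$ controlled‑$O_D$ calls made by \eqae, which by Corollary~\ref{cor:amp_est_our_form} and $q=O(\log\tfrac1\epsilon)$ is $O(1/\epsilon)$; {\tt HD$_l$} and \cmp{} make no queries.

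The main obstacle is the constant‑chasing in Stage~2: one must pin down $q$, $l$, $\tau'$ and $\tau_1$ so that \emph{both} branches $\hat p_{x,+}$ and $\hat p_{x,-}$ of the estimate land on the correct side of $\tau_1$ in \emph{both} promise cases, and one must keep straight the inequality reversal introduced by {\tt HD$_l$} (a larger probability corresponds to an estimate closer to the midpoint $2^{l-1}$, hence a \emph{smaller} {\tt HD$_l$}-value, which \cmp{} then reads as ``meets $\tau$''). A secondary point worth verifying is that the threshold parameter $\tau'$ is set so that the effective comparison is ``$p_x$ versus a value $\Theta(\epsilon)$ below $\tau$'' (the natural choice being $\tau'=\tau-\epsilon/8$), since, unlike in \ampfilorcl, there is no $\tfrac12(1+\cdot)$ rescaling coming from a Hadamard test; the bad‑estimate branch $\ket{E_x}$ contributes only the tolerated $1-8/\pi^2$ error and needs no separate treatment.
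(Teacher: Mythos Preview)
Your proposal is correct and mirrors the paper's own proof exactly: Stage~1 is handled by the \eqae guarantee from Section~\ref{sec:background_qae}, and for Stage~2 the paper literally writes ``the correctness of stage-2 is exactly the same as that in the proof for [Lemma~\ref{lemma:amp-fil-boracle}]'', which is precisely the {\tt HD}/{\tt CMP} comparison analysis you spell out. Your final caveat is also on point --- since here \eqae estimates $p_x$ directly (no $\tfrac12(1+\cdot)$ rescaling from a Hadamard test), the threshold $\tau'$ must sit near $\tau-\Theta(\epsilon)$ rather than the value $\tfrac12(1+\tau-\epsilon/8)$ printed in Algorithm~\ref{algo:probfilborcl}, which appears to be a copy-paste artefact from the amplitude version; with your suggested $\tau'=\tau-\epsilon/8$ the Scenario~(i)/(ii) sandwiching goes through unchanged.
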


\subsection{Analysis of \probfilorcl}

\probfilboraclelemma*

Observe that the algorithm \probfilorcl differs from that of \ampfilorcl only at Stage 1. Stage 2 is identical.

\begin{proof}
    We analyse the algorithm on the input state on registers $R_1R_{21}R_{22}R_3R_4R_5$ as $\ket{x}\ket{0^r}\ket{0^l}\ket{0^l}\ket{0}$ where $t=2l+r+1$.
    We set $R_3 = \ket{\tau_1}$.
    On applying $\eqae_{O_D}$ on $R_1R_2R_4$ with $R_2$ as the input register and $R_4$ as the output register and $R_1$ for marking the ``good'' state whose probability we desire to estimate (using, of course, the $EQ$ oracle), the input state transforms to
    \begin{align*}
        \ket{\psi_1} &= \ket{x}\ket{\Psi} \Big(\beta_{x,s}\ket{a_x}+\beta_{x,\overline{s}}\ket{E_x}\Big)\ket{\tau_1}\ket{0}\\
        &= \beta_{x,s}\ket{x}\ket{\Psi} \ket{a_x}\ket{\tau_1}\ket{0}+\beta_{x,\overline{s}}\ket{x}\ket{\Psi}\ket{E_x}\ket{\tau_1}\ket{0}\\
        &= \beta_{x,s}\ket{\psi_{1,s}} + \beta_{x,\overline{s}}\ket{\psi_{1,\overline{s}}}
    \end{align*}

    where $\ket{a_x}$ is a normalized state of the form $\ket{a_x} = \gamma_{+}\ket{a_{x,+}} + \gamma_{-}\ket{a_{x,-}}$ that on measurement outputs $a \in \{a_{x,+}, a_{x,+}\}$ which is
    an $l$-bit string that behaves as $\displaystyle\left| \sin^2\left(\frac{a\pi}{2^l}\right) - |\alpha_x|^2 \right| \le \tfrac{1}{2^q}$, $|\beta_{x,s}|^2 \ge \frac{8}{\pi^2}$ and $|\beta_{x,s}|^2 \le 1-\frac{8}{\pi^2}$.
    
    We denote the set $\{a_{x,+}, a_{x,-}\}$ by $\pmset_{a_x}$.
    Essentially, for any $x$, \eqae stores the correct estimate of the probability of $x$ in $\ket{\Psi}$ into $R_4$ with probability at least $\frac{8}{\pi^2}$.
    
    The correctness of stage-2 is exactly the same as that in the proof for Theorem~\ref{lemma:amp-fil-boracle}.
    
    \paragraph*{Query Complexity : } All calls to $O_D$ are made by $\eqae$ and the latter's query complexity is $O(1/\epsilon)$.
\end{proof}

\section{Algorithm for {\normalfont \mdistampest} = $\mathbf{U} \cdot \mathbf{V}$}
\label{appendix:algo-mdistampest}

First, we discuss how to implement the conditional $A^O_y$ operator $\mathbf{V} = \sum_y \ketbra{y} \otimes A^O_y$, that  acts on two registers, and operates $A^O_y$ on the second register when the first register is in the basis state $y$ --- without loss of generality, we can assume that $\{y\}$ forms the standard basis.
Let $m$ denote the precision of amplitude estimation as stated above, i.e., $\log 1/\epsilon$.

Observe that for each $y$, $A^O_y$ can be expressed as $A^O_y = U_{(k,y)} O U_{(k-1,y)} \cdots U_{(1,y)} O U_{(0,y)}$ with suitable $U_{(i,y)}$ unitaries.  Controlling a sequence of gates is equivalent to a sequence of controlled-gates. So, we can express the conditional $A^O$ operator in the following manner. 

\begin{align*}
    & \sum_y \ketbra{y}{y} \otimes A^O_y = \sum_y \ketbra{y} \otimes  U_{(k,y)} O U_{(k-1,y)} \cdots U_{(1,y)} O U_{(0,y)} \\
    = & \big( \sum_y \ketbra{y} \otimes U_{(k,y)} \big) \big( \sum_y \ketbra{y} \otimes O \big) \big( \sum_y \ketbra{y} \otimes U_{(k-1,y)} \big) \ldots \big( \sum_y \ketbra{y} \otimes O \big) \big( \sum_y \ketbra{y} \otimes U_{(0,y)} \big)\\
    =~ & \big( \sum_y \ketbra{y} \otimes U_{(k,y)} \big) \big( \iden \otimes O \big) \big( \sum_y \ketbra{y} \otimes U_{(k-1,y)} \big) \ldots \big( \iden \otimes O \big) \big( \sum_y \ketbra{y} \otimes U_{(0,y)} \big) \\
    =~ & U'_{(k,y)} \circ \big( \iden \otimes O) \circ  U'_{(k-1,y)} \circ \big( \iden \otimes O \big) \circ U'_{(k-2,y)}  \ldots U'_{(1,y)} \circ \big( \iden \otimes O \big) \circ U'_{(0,y)}
\end{align*}
in which the operator $U'_{t,y}$ denotes $\big( \sum_y \ketbra{y} \otimes U_{(t,y)} \big)$ that {\em does not} involve any call to $O$. Therefore, the query complexity of one call to $V$ is $k$ --- the query complexity of any $A^O$. We want to note here that in the case where each $A^O_y$ makes different number of calls to $O$, $k$ can be taken to be the maximum of the individual query complexities since each $A^O_y$ can be suitably padded to include more calls to $O$. We also point out that the reduction in query complexity comes at the expense of additional non-$O$ gates (e.g., gates required to implement $U'_{(k,y)}$), the number of which can even be exponential in $n$.

The idea above can be extended to the implementation of $\mathbf{U}=\sum_y \ketbra{y} \otimes AmpEst_y$ as well. The amplitude estimation operator due to Brassard et al.~\cite{brassard2002quantum}, excluding the initial state preparation, can be expressed as $AmpEst = (F_m^{-1} \otimes \iden)\cdot \Lambda_m(G) \cdot (F_m \otimes \iden)$ where $F_m$ is the Fourier transform on $m$ qubits, $\Lambda_m(G)$ is the conditional operator defined as $\sum_x \ketbra{x}\otimes G^x$, $G = -AS_{0}A^{\dagger}S_{\chi}$ is the Grover iteator, $G^x$ implies that the $G$ operator is applied $x$ times in succession, and $S$ are reflection operators which do not involve $O$. In a similar manner, $AmpEst_y$ can be expressed as $AmpEst_y = (F_m^{-1} \otimes \iden)\cdot \Lambda_m(G_y) \cdot (F_m \otimes \iden)$ where $G_y = -A_yS_0A_y^{\dagger}M_{\gamma^y}$.

Let's focus on the controlled version of $AmpEst_y$. Controlling a sequence of gates is equivalent to a sequence of controlled-gates.
That is, 
\begin{align*}
    \mathbf{U} &= \sum_y \ketbra{y}\otimes AmpEst_y\\
    &= \sum_y \ketbra{y}\otimes \Big((F_m^{-1} \otimes \iden)\circ \Lambda_m(G_y) \circ (F_m \otimes \iden)\Big)\\
    &= \Big(\sum_y \ketbra{y}\otimes (F_m^{-1} \otimes \iden)\Big)\circ \Big(\sum_y \ketbra{y}\otimes \Lambda_m(G_y)\Big)\circ \Big(\sum_y \ketbra{y}\otimes (F_m \otimes \iden)\Big)\\
    &= \Big(\sum_y \ketbra{y}\otimes (F_m^{-1} \otimes \iden)\Big)\circ \Big(\sum_y \ketbra{y}\otimes \sum_x \ketbra{x}\otimes G_y^x\Big)\circ \Big(\sum_y \ketbra{y}\otimes (F_m \otimes \iden)\Big).
\end{align*}

Both the first and the third operator from the last expression do not involve any calls to $O$ and can be ignored for the purpose for query complexity. Using similar ideas as above, we show how to implement the middle operator efficiently in the number of queries, i.e, independent of the number of state preparation oracles.

\begin{lemma}\label{lemma:mdistae}
    The operator $W=\sum_y \ketbra{y}\otimes \sum_x \ketbra{x}\otimes G_y^x$ can be implemented using $O(k2^m)$ calls to $O$ where $m$ is the size of the last register, $G_y = -A^O_yS_0A_y^{O\dagger}M_{\gamma^y}$, and $k$ denotes the query complexity of $A^O_y$.
\end{lemma}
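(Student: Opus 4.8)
The plan is to expand $W$ in the same way the paper expanded $\mathbf{V}$ and $\mathbf{U}$, pushing the outer control $\ketbra{y}{y}$ and the inner control $\ketbra{x}{x}$ as deep as possible so that the only thing left carrying an $O$-call is a genuinely $(x,y)$-independent operator $\iden\otimes O$. First I would recall that $G_y = -A^O_y S_0 (A^O_y)^\dagger M_{\gamma^y}$, and that $S_0$ and $M_{\gamma^y}$ make no calls to $O$ (the first is a fixed reflection, the second a reflection about $\ket{\gamma^y}$ which is given by a reflection operator in the problem statement). So a single $G_y$ makes exactly $2k$ calls to $O$: $k$ from $A^O_y$ and $k$ from $(A^O_y)^\dagger$. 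Then $G_y^x$ makes $2kx$ calls, and since $x$ ranges over $m$-bit strings, a naive bound on $W$ would be $O(k 2^m \cdot 2^m) = O(k 2^{2m})$ — the claim is that we can shave one factor of $2^m$.

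The key step is the standard ``controlled power'' trick used in phase estimation: instead of writing $\sum_x \ketbra{x}\otimes G_y^x$ as a direct sum over all $2^m$ values of $x$ (which would require implementing $G_y^x$ afresh for each $x$), write $x = \sum_{j=0}^{m-1} x_j 2^j$ in binary and observe
$$\sum_x \ketbra{x}\otimes G_y^x \;=\; \prod_{j=0}^{m-1} \Big( \ketbra{x_j=0}\otimes \iden \;+\; \ketbra{x_j=1}\otimes G_y^{2^j}\Big),$$
i.e. a product of $m$ operators, the $j$-th being a singly-controlled application of $G_y^{2^j}$ where the control is bit $j$ of the middle register. Pulling the outer $\ketbra{y}{y}$ through this product (controlling a sequence of gates equals a sequence of controlled gates), $W$ becomes a product over $j$ of operators of the form $\sum_y \ketbra{y}\otimes \big(\text{controlled-}G_y^{2^j}\big)$, and the $j$-th such operator makes at most $2k\cdot 2^j$ calls to $O$ because $G_y^{2^j}$ contains $2^j$ copies of $G_y$, each contributing $2k$ calls (and the outer/inner controls, being controls, can be distributed onto the individual $U'$-type sub-operators exactly as in the $\mathbf{V}$ computation without adding $O$-calls). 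Summing over $j$, $\sum_{j=0}^{m-1} 2k\cdot 2^j = 2k(2^m-1) = O(k2^m)$, which is the stated bound.

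The main obstacle — really the only subtlety — is making rigorous that the controls genuinely cost nothing in $O$-queries. Concretely, one has to argue that for each fixed $j$, the operator $\sum_y \ketbra{y}\otimes \big(\ketbra{0}_{j}\otimes\iden + \ketbra{1}_{j}\otimes G_y^{2^j}\big)$ can be compiled into a circuit with only $2k2^j$ occurrences of $\iden\otimes\cdots\otimes O$. I would do this by taking the decomposition $G_y = U'' \circ(\iden\otimes O)\circ U' \circ \cdots$ from the paper's treatment of $A^O_y$ (doubled up to account for $(A^O_y)^\dagger$, and with the $S_0$, $M_{\gamma^y}$ reflections folded into the $O$-free unitaries), raising it to the $2^j$-th power literally concatenates $2^j$ such blocks, and then — exactly as in the displayed calculation for $\mathbf V$ — moving both the $\ketbra{y}{y}$ control and the $\ketbra{x_j}{x_j}$ control onto each individual $O$-free unitary block while leaving every $\iden\otimes O$ factor untouched (since $O$ acts only on the oracle register, which neither control register touches, a controlled-$O$ equals $O$ — or more precisely can be replaced by $O$ because the uncontrolled branches are corrected by the surrounding $O$-free controlled unitaries). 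This again inflates the non-$O$ gate count, possibly exponentially in $n$, but that is explicitly allowed. Everything else is bookkeeping: collecting the $O$-free front/back Fourier transforms that the paper already set aside, and summing the geometric series.
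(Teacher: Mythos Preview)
Your approach is correct and essentially identical to the paper's: binary-decompose $\sum_x\ketbra{x}\otimes G_y^x$ into a product over bits $j$ of controlled-$G_y^{2^j}$, push both the $y$-control and the $x_j$-control through the $G_y = -A_y S_0 A_y^\dagger M_{\gamma^y}$ decomposition (via the ``controlling a product equals product of controls'' identity, which is the paper's Lemma~\ref{lemma:cproduct_2}), and count $2k\cdot 2^j$ queries per factor for a geometric sum of $O(k2^m)$.

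One point is genuinely wrong, though: your parenthetical claim that ``a controlled-$O$ equals $O$\ldots can be replaced by $O$ because the uncontrolled branches are corrected by the surrounding $O$-free controlled unitaries'' is false. If you leave $O$ uncontrolled while controlling all the $U_{(j,y)}$'s on bit $x_j$, then on the $x_j=0$ branch you apply $O^k$ (or $O^{2k\cdot 2^j}$ for the full $G_y^{2^j}$), not the identity, and nothing corrects it. The paper does not do this; it keeps the operator as a genuine controlled-$O$, namely $\iden^n\otimes(\mbc{i,1}(O)+\mbc{i,0}(\iden))$, and simply invokes the standard convention that a controlled oracle call counts as one query. Replace your parenthetical with that one-line appeal and the argument is complete.
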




To prove Lemma~\ref{lemma:mdistae}, we require the following technical result on conditional operators. Here $\mbc{i,p}(U)$ denotes the operator $\iden^{i-1}\otimes \ketbra{p}\otimes \iden^{m-i} \otimes U$. The proof of the below result is straightforward and algebraic in nature, and is available in the Appendix~\ref{appendix:mdimae-components}.



\begin{restatable}[]{lemma}{cproductlemma}
\label{lemma:cproduct_2}
    For any two unitaries $A$ and $B$, we have    $\sum_{y}\ketbra{y}\otimes \big[\mbc{i,1}(A\circ B)+\mbc{i,0}(\iden)\big] = \Big\{\sum_{y}\ketbra{y}\otimes \big[\mbc{i,1}(A)+\mbc{i,0}(\iden)\big]\Big\}\circ \Big\{\sum_{y}\ketbra{y}\otimes \big[\mbc{i,1}(B)+\mbc{i,0}(\iden)\big]\Big\}$
\end{restatable}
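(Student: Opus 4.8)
The plan is to reduce the claimed operator identity to the elementary fact that controlled gates compose, and to dispatch the outer $\sum_y\ketbra{y}\otimes(\cdot)$ wrapper separately because it acts on a register disjoint from the one the $\mbc{i,p}$'s touch and because $A,B$ do not depend on $y$.

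First I would peel off the $y$-register. For any operators $M,N$ acting on the remaining $m$ qubits together with the trailing register, orthogonality of the basis projectors gives
\[
\Big(\sum_y\ketbra{y}\otimes M\Big)\circ\Big(\sum_{y'}\ketbra{y'}\otimes N\Big)=\sum_{y,y'}\ketbra{y}\ketbra{y'}\otimes(M\circ N)=\sum_y\ketbra{y}\otimes(M\circ N),
\]
since $\ketbra{y}\ketbra{y'}=\delta_{yy'}\ketbra{y}$. Hence it suffices to prove the ``inner'' identity
\[
\mbc{i,1}(A\circ B)+\mbc{i,0}(\iden)=\big[\mbc{i,1}(A)+\mbc{i,0}(\iden)\big]\circ\big[\mbc{i,1}(B)+\mbc{i,0}(\iden)\big].
\]

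Next I would expand the right-hand side into its four cross terms and simplify each using the tensor structure $\mbc{i,p}(U)=\iden^{\otimes(i-1)}\otimes\ketbra{p}\otimes\iden^{\otimes(m-i)}\otimes U$. When two such operators are multiplied, the $\iden$ factors on every qubit $\neq i$ compose to $\iden$, the $i$-th-qubit factors compose as $\ketbra{p}\ketbra{p'}=\delta_{pp'}\ketbra{p}$, and the trailing-register factors compose by ordinary operator multiplication. Consequently the two mixed terms vanish, $\mbc{i,1}(A)\mbc{i,0}(\iden)=0=\mbc{i,0}(\iden)\mbc{i,1}(B)$, because $\ketbra{1}\ketbra{0}=0$; the term $\mbc{i,0}(\iden)\mbc{i,0}(\iden)$ collapses to $\mbc{i,0}(\iden)$; and $\mbc{i,1}(A)\mbc{i,1}(B)$ becomes $\mbc{i,1}(A\circ B)$. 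Summing the surviving terms reproduces the left-hand side, and re-attaching the $\sum_y\ketbra{y}\otimes(\cdot)$ via the first step finishes the argument.

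There is no genuine obstacle here — the statement is a bookkeeping identity about controlled operators. The only point needing a little care is keeping the order of tensor factors straight while multiplying the $\mbc{i,p}(U)$'s: the conditioning slot is the same qubit $i$ on both factors (so the projectors meet and squash correctly), $A$ and $B$ act only on the trailing register (so they compose in the order $A\circ B$, matching the left side), and the $y$-register commutes through untouched since neither $A$ nor $B$ depends on $y$.
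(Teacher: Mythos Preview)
Your proof is correct and follows essentially the same route as the paper's. The paper factors out your two steps as separate helper lemmas --- one showing $\big(\sum_y\ketbra{y}\otimes M\big)\circ\big(\sum_y\ketbra{y}\otimes N\big)=\sum_y\ketbra{y}\otimes(M\circ N)$ and one showing $\mbc{i,p}(A)\circ\mbc{i,q}(B)=\delta_{p,q}\,\mbc{i,p}(A\circ B)$ --- and then chains them in the order opposite to yours (starting from the left-hand side rather than simplifying the right), but the content is identical.
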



\begin{proof}[Proof of Lemma~\ref{lemma:mdistae}]
We first simplify an important component of $W$~\cite{brassardestimation}.
\begin{align*}
   & \sum_x \ketbra{x}\otimes G_y^x = \prod_{i=1}^m \bigg(\Big[\iden^{i-1}\otimes \ketbra{1}\otimes \iden^{m-i} \otimes G_y^{2^i}\Big] + \Big[\iden^{i-1}\otimes \ketbra{0}\otimes \iden^{m-i} \otimes \iden \Big]\bigg)\\
   = & \prod_{i=1}^m \iden^{i-1} \otimes \left( \ketbra{0} \otimes \iden^{m-i} \otimes \iden + \ketbra{1} \otimes \iden^{m-i} \otimes G_y^{2^i} \right)
\end{align*}
that is essentially a sequence of conditional-$G_y^{2^i}$ gates, conditioned on the $i$-th qubit of the first register of this operator.

This allows us to rewrite $W$ in the following manner.
\begin{align}
    W=&\sum_y \ketbra{y}\otimes \sum_x \ketbra{x}\otimes G_y^x\nonumber\\ 
    =& \prod_{i=1}^m\bigg[\sum_y \ketbra{y}\otimes \bigg(\Big[\iden^{i-1}\otimes \ketbra{1}\otimes \iden^{m-i} \otimes G_y^{2^i}\Big] + \Big[\iden^{i-1}\otimes \ketbra{0}\otimes \iden^{m-i} \otimes \iden \Big]\bigg)\bigg]\nonumber\\
    & \mbox{(using the notation of $\mbc{i,p}(U)$)}\nonumber\\
    =& \prod_{i=1}^m\bigg[  \sum_y \ketbra{y}\otimes \Big(\mbc{i,1}\Big(G_y^{2^i}\Big) + \mbc{i,0}(\iden) \Big)  \bigg] = \prod_{i=1}^m W'_i \quad \mbox{(say)}\label{eqn:mid-operator-one-term}
\end{align}


Now see that for any $i$, $W'_i=$
\begin{align}
    &\sum_y \ketbra{y}\otimes \Big(\mbc{i,1}\Big((-A_yS_0A_y^{\dagger}M_{\gamma^y})^{2^i}\Big) + \mbc{i,0}(\iden) \Big)\tag{now use Lemma~\ref{lemma:cproduct_2}}\nonumber\\
    =& \Bigg[\bigg\{\sum_y \ketbra{y}\otimes \Big(\mbc{i,1}\big(-A_y\big) + \mbc{i,0}(\iden) \Big)\bigg\} \circ 
    \bigg\{\sum_y \ketbra{y} \otimes \Big(\mbc{i,1}\big(S_0\big) + \mbc{i,0}(\iden) \Big)\bigg\} ~\circ\nonumber\\
    &\hspace{1cm} \bigg\{\sum_y \ketbra{y}\otimes \Big(\mbc{i,1}\big(A_y^{\dagger}\big) + \mbc{i,0}(\iden) \Big)\bigg\} \circ 
    \bigg\{\sum_y \ketbra{y} \otimes \Big(\mbc{i,1}\big(M_{\gamma^y}\big) + \mbc{i,0}(\iden) \Big)\bigg\}\Bigg]^{2^i}\nonumber\\
    =& \Bigg[\bigg\{\sum_y \ketbra{y}\otimes \Big(\mbc{i,1}\big(-A_y\big) + \mbc{i,0}(\iden) \Big)\bigg\} \circ 
    \bigg\{\iden^n \otimes \Big(\mbc{i,1}\big(S_0\big) + \mbc{i,0}(\iden) \Big)\bigg\} ~\circ\nonumber\\
    &\hspace{1cm} \bigg\{\sum_y \ketbra{y}\otimes \Big(\mbc{i,1}\big(A_y^{\dagger}\big) + \mbc{i,0}(\iden) \Big)\bigg\}\circ 
    \bigg\{\iden^n \otimes \Big(\mbc{i,1}\big(M_{\gamma^y}\big) + \mbc{i,0}(\iden) \Big)\bigg\}\Bigg]^{2^i}\nonumber\\
    =& \Bigg[\bigg\{\sum_y \ketbra{y}\otimes \Big(\mbc{i,1}\big(-A_y\big) + \mbc{i,0}(\iden) \Big)\bigg\} \circ 
    D_i ~\circ \bigg\{\sum_y \ketbra{y}\otimes \Big(\mbc{i,1}\big(A_y^{\dagger}\big) + \mbc{i,0}(\iden) \Big)\bigg\}\circ E_i \Bigg]^{2^i} \nonumber
\end{align}
in which the $D_i$ and $E_i$ operators can be implemented without any calls to $O$.


Next, since we have $A_y = U_{(k,y)}OU_{(k-1,y)}\cdots U_{(1,y)}OU_{(0,y)}$, we can write
\begin{align}
    & \sum_y \ketbra{y}\otimes \Big(\mbc{i,1}\big(-A_y\big) + \mbc{i,0}(\iden) \Big)\label{eqn:controlled-algo}\\
    =& -\sum_y \ketbra{y}\otimes \Big(\mbc{i,1}\big(U_{(k,y)}OU_{(k-1,y)}\cdots U_{(1,y)}OU_{(0,y)}\big) + \mbc{i,0}(\iden) \Big)\nonumber\\
    =& -\prod_{j=k}^{1}\Bigg[\bigg\{\sum_y \ketbra{y}\otimes \bigg( \mbc{i,1}\big(U_{(j,y)}\big) + \mbc{i,0}(\iden)\bigg)\bigg\}\circ\bigg\{\sum_y \ketbra{y}\otimes \bigg(\mbc{i,1}(O) + \mbc{i,0}(\iden)\bigg)\bigg\}\Bigg] \circ\nonumber\\
    & \hspace{5cm}\sum_y \ketbra{y}\otimes \bigg(\mbc{i,1}\big(U_{(0,y)}\big) + \mbc{i,0}(\iden)\bigg)\nonumber\\
    =& -\prod_{j=k}^{1}\Bigg[\bigg\{\sum_y \ketbra{y}\otimes \bigg( \mbc{i,1}\big(U_{(j,y)}\big) + \mbc{i,0}(\iden)\bigg)\bigg\}\circ\bigg\{\iden^{n}\otimes \bigg(\mbc{i,1}(O) + \mbc{i,0}(\iden)\bigg)\bigg\}\Bigg] \circ\nonumber\\
    & \hspace{5cm}\sum_y \ketbra{y}\otimes \bigg(\mbc{i,1}\big(U_{(0,y)}\big) + \mbc{i,0}(\iden)\bigg)\nonumber\label{eqn:final_sequence}
\end{align}
Each of the $\Big[\sum_y \ketbra{y}\otimes \Big( \mbc{i,1}\big(U_{(j,y)}\big) + \mbc{i,0}(\iden)\Big)\Big]$ terms can be implemented as 
$$\prod_{y=0}^{N-1} \bigg\{\Big[\ketbra{y}\otimes \Big( \mbc{i,1}\big(U_{(j,y)}\big) + \mbc{i,0}(\iden)\Big)\Big] + \sum_{x\neq y}\ketbra{x}\otimes \iden^m \otimes \iden\bigg\}$$ which can be identified as a sequence of $N$ controlled gates that do not involve $O$.

The operator $\Big[\iden^{n}\otimes \Big(\mbc{i,1}(O) + \mbc{i,0}(\iden)\Big)\Big]$ is applied independent of the state in the first register.
So, it can be implemented as a controlled-$O$ operation controlled by the $i$th qubit of the second register and that uses only 1 oracle query.

Therefore, we can can see that the number of oracle queries required to implement $\Big[\sum_y \ketbra{y}\otimes \Big(\mbc{i,1}\big(-A_y\big) + \mbc{i,0}(\iden) \Big)\Big]$ (operator in Equation~\ref{eqn:controlled-algo}) is exactly $k$. The query complexity of the operator $\Big[\sum_y \ketbra{y}\otimes \Big(\mbc{i,1}\big(A_y^{\dagger}\big) + \mbc{i,0}(\iden) \Big)\Big]$ is also $k$ using the same analysis. From this we get the query complexity of $W'_i$ as $2^i\cdot2k$.

Now, using Equation~\ref{eqn:mid-operator-one-term}, the total number of oracle queries required for $W$ is $2^{m+2}k$.
Since we have set $m=O(\log(1/\epsilon))$ we get the query complexity of $\mdaealgo$ as $O(k/\epsilon)$.
\end{proof}
\section{Useful Lemme to prove Lemma~\ref{lemma:cproduct_2}}
\label{appendix:mdimae-components}

\begin{lemma}\label{lemma:product_distributive}
    Let $\{A_y\}$ and $\{B_y\}$ be two sets of indexed unitaries. Then, $$\sum_y\ketbra{y} \otimes (A_y\circ B_y) = \Big(\sum_y\ketbra{y} \otimes (A_y)\Big)\circ \Big(\sum_z\ketbra{z} \otimes (B_z)\Big).$$
\end{lemma}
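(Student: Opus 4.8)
The plan is to verify the identity by directly expanding the right-hand side and exploiting the orthonormality of the basis $\{\ket{y}\}$ (which, as the section notes, we may take to be the standard basis without loss of generality). First I would distribute the composition over the two sums:
\[
\Big(\sum_y\ketbra{y} \otimes A_y\Big)\circ \Big(\sum_z\ketbra{z} \otimes B_z\Big) \;=\; \sum_{y}\sum_{z}\big(\ketbra{y}\otimes A_y\big)\circ\big(\ketbra{z}\otimes B_z\big).
\]
Then I would apply the mixed-product property of the tensor product, $(P\otimes Q)(R\otimes S) = (PR)\otimes(QS)$, to each summand, which turns the $(y,z)$ term into $\big(\ketbra{y}{y}\,\ketbra{z}{z}\big)\otimes\big(A_y\circ B_z\big)$.

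Next I would use that $\ketbra{y}{y}\,\ketbra{z}{z} = \braket{y}{z}\,\ketbra{y}{z} = \delta_{yz}\,\ketbra{y}{y}$ by orthonormality, so every off-diagonal term ($y\neq z$) vanishes and the double sum collapses to $\sum_y \ketbra{y}{y}\otimes\big(A_y\circ B_y\big)$, which is precisely the left-hand side. That closes the argument.

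I do not anticipate any genuine obstacle: the statement is a routine consequence of the bilinearity of operator composition together with the tensor mixed-product rule, and the only hypothesis actually used is that $\{\ket{y}\}$ is an orthonormal family so that the cross terms disappear. If one preferred to dispense with even that assumption, one could instead evaluate both sides on an arbitrary product state $\ket{y}\ket{\chi}$ and check they agree, but the expansion above is the cleanest route and is the one I would present.
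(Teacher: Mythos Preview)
Your proposal is correct and follows essentially the same approach as the paper: expand the product over the double sum, apply the mixed-product property of the tensor product, and then use orthonormality $\ketbra{y}\circ\ketbra{z}=\delta_{yz}\ketbra{y}$ to collapse to the diagonal. The paper's proof is identical, only more terse.
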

\begin{proof}
    \begin{align*}
        \Big(\sum_y\ketbra{y} \otimes A_y\Big)\circ \Big(\sum_z\ketbra{z} \otimes B_z\Big) &=\sum_{y,z(}\ketbra{y}\circ\ketbra{z}) \otimes (A_y\circ B_z)\\
        &=\sum_y\ketbra{y}\otimes (A_y\circ B_y)
    \end{align*}
\end{proof}

\begin{lemma}\label{lemma:cproduct_1}
    For any two unitaries $A$ and $B$, we have $$\mbc{i,p}(A)\circ \mbc{i,q}(B) = \delta_{p,q}\cdot \mbc{i,p}(A\circ B)$$ where $\delta_{p,q} = 1$ if $p=q$ and $0$ otherwise.
\end{lemma}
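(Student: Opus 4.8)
The plan is to prove this by directly unfolding the definition of the $\mbc{i,p}(\cdot)$ notation and invoking the mixed-product property of the tensor product. Recall that by definition $\mbc{i,p}(U) = \iden^{i-1}\otimes \ketbra{p}\otimes \iden^{m-i}\otimes U$, so that $\mbc{i,p}(A)$ and $\mbc{i,q}(B)$ are operators acting on the same composite space, differing only in the projector placed on the $i$-th qubit of the first register and in the last tensor factor ($A$ versus $B$).

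First I would write out the composition
$$\mbc{i,p}(A)\circ\mbc{i,q}(B) = \big(\iden^{i-1}\otimes \ketbra{p}\otimes \iden^{m-i}\otimes A\big)\circ\big(\iden^{i-1}\otimes \ketbra{q}\otimes \iden^{m-i}\otimes B\big),$$
and then apply the rule $(P_1\otimes P_2\otimes P_3\otimes P_4)(Q_1\otimes Q_2\otimes Q_3\otimes Q_4) = (P_1Q_1)\otimes(P_2Q_2)\otimes(P_3Q_3)\otimes(P_4Q_4)$ for operators on a tensor product. Since $\iden^{i-1}\iden^{i-1}=\iden^{i-1}$ and $\iden^{m-i}\iden^{m-i}=\iden^{m-i}$, and the last factor becomes $A\circ B$, the only nontrivial factor is $\ketbra{p}\cdot\ketbra{q}$. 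The second key step is the single-qubit (or single-register) identity $\ketbra{p}\cdot\ketbra{q} = \ket{p}\braket{p}{q}\bra{q} = \delta_{p,q}\ketbra{p}$, which is immediate from orthonormality of the computational basis states. Substituting this back gives $\delta_{p,q}\big(\iden^{i-1}\otimes\ketbra{p}\otimes\iden^{m-i}\otimes (A\circ B)\big) = \delta_{p,q}\cdot\mbc{i,p}(A\circ B)$, as required.

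There is essentially no real obstacle here; the lemma is a routine algebraic fact. The only point worth stating carefully is that factor-wise composition of tensor products is exactly what collapses the three inert factors and isolates the $\braket{p}{q}$ inner product, and that when $p\neq q$ the product is the zero operator (which is why the right-hand side, despite the phrasing "for any two unitaries," need not be unitary). I would keep the write-up to two or three lines, citing only the mixed-product property and orthonormality of the basis.
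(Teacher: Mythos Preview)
Your proposal is correct and follows exactly the same approach as the paper: expand the definition of $\mbc{i,p}(\cdot)$, apply the mixed-product property of the tensor product factor-wise, and use $\ketbra{p}\cdot\ketbra{q}=\delta_{p,q}\ketbra{p}$ to obtain the result. There is nothing to add.
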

\begin{proof}
    \begin{align*}
        \mbc{i,p}(A)\circ \mbc{i,q}(B) &= \big(\iden^{i-1}\otimes \ketbra{p}\otimes \iden^{m-i} \otimes A\big)\circ \big(\iden^{i-1}\otimes \ketbra{q}\otimes \iden^{m-i} \otimes B\big)\\
        &= \iden^{i-1}\otimes (\ketbra{p}\circ\ketbra{q}) \otimes \iden^{m-i} \otimes (A\circ B)\\
        &= \iden^{i-1}\otimes \delta_{p,q}(\ketbra{p}) \otimes \iden^{m-i} \otimes (A\circ B)\\
        &= \delta_{p,q} \big(\iden^{i-1}\otimes \ketbra{p} \otimes \iden^{m-i} \otimes (A\circ B)\big)\\
        &= \delta_{p,q}\cdot\mbc{i,p}(A\circ B)
    \end{align*}
\end{proof}

\cproductlemma*

\begin{proof}
    \begin{align*}
        &\sum_{y}\ketbra{y}\otimes \big[\mbc{i,1}(A\circ B)+\mbc{i,0}(\iden)\big]\\ &=\sum_{y}\ketbra{y}\otimes \big[\big(\mbc{i,1}(A)\circ \mbc{i,1}(B)\big)+\mbc{i,0}(\iden)\big]~\text{(Using Lemma~\ref{lemma:cproduct_1})}\\
        &= \sum_{y}\ketbra{y}\otimes \big(\big[\mbc{i,1}(A) + \mbc{i,0}(\iden)\big]\circ \big[\mbc{i,1}(B)+\mbc{i,0}(\iden)\big]\big)~\text{(Using Lemma~\ref{lemma:cproduct_1})}\\
        &= \Big\{\sum_{y}\ketbra{y}\otimes \big[\mbc{i,1}(A) + \mbc{i,0}(\iden)\big]\Big\}\circ \Big\{\sum_{y}\ketbra{y}\otimes \big[\mbc{i,1}(B) + \mbc{i,0}(\iden)\big]\Big\}~\text{(Using Lemma~\ref{lemma:product_distributive})}
    \end{align*}
\end{proof}
\section{Proof of Theorem~\ref{thm:erroed-oracle-amplification-our}}
\label{appendix:biased-aa}

\biasedAAthm*

    Set $k = O(\frac{2p}{(p-1/2)^2}\log(1/\delta'))$ for a $\delta'=\lambda^4\delta^2$ and construct $\hata_{A,\borcl_p,k}$.
    This $\hata$ (dropping the subscripts) behaves as \[\hata\ket{0^n}\ket{0^k}\ket{0} = \sum_x \alpha_x \ket{x} \big(\eta_{x,0}\ket{\phi_{x,0}}\ket{0}+\eta_{x,1}\ket{\phi_{x,1}}\ket{1}\big) = \ket{\Psi}\] where $|\eta_{x,f(x)}|^2 \ge 1-\delta'$ for any $x$;  here $f(x)$ indicates the ``goodness'' of $x$.
    
    Now, two cases can happen.
    
    \textbf{Case (i): } Let $f(x)=0$ for all $x$. We analyse the situation that the algorithm does not output ``No Solution'', in other words, $R_{maj}$ was observed as $\ket{1}$. 
    
    
    Now, the output state after $\hata$ would be such that $|\eta_{x,1}|^2\le \delta'$ for all $x$.
    So, the probability of measuring $\ket{1}$ as output is $\sum_x |\alpha_x \eta_{x,1}|^2 \le \delta'\sum_x |\alpha_x|^2 = \delta'$. Can such a state, with final qubit as $\ket{1}$, appear with overwhelming probability after $O(1/\sqrt{\lambda})$ iterations of amplitude amplification? We argue not by lower bounding the number of iterations needed to boost the probability of such a state to almost certainty.
    
    Let $\theta$ be the angle made by the superposition of those states of $\ket{\Psi}$ whose last qubit is in $\ket{1}$. Then, we have $\sin^2(\theta) \le \delta' = \lambda^4\delta^2$.
    
    For any state $\ket{\chi}$ if the probability of obtaining a good state is $\sin^2(\theta) = \delta_1$ and if we would like to boost the probability to $\delta_2$, then it easy to show that the number of iterations needed in the amplitude amplification algorithm is $j = \Big\lceil\frac{1}{2}\big(\sin^{-1}(\sqrt{\delta_2})/\sin^{-1}(\sqrt{\delta_1})\big)-\frac{1}{2}\Big\rceil > \frac{1}{4}\big(\sin^{-1}(\sqrt{\delta_2})/\sin^{-1}(\sqrt{\delta_1})\big)$.
    Since $\theta < \sin^{-1}(\theta)$, we have \[j>\frac{1}{4}\big(\sin^{-1}(\sqrt{\delta_2})/\sin^{-1}(\sqrt{\delta_1})\big) > \frac{1}{4}\big(\sqrt{\delta_2}/\sin^{-1}(\sqrt{\delta_1})\big).\]
    In our case, we have $\delta_1 = \lambda^4\delta^2$ and $\delta_2=\delta$. So, the number of iterations required is \[j > \frac{1}{4}\Big(\sqrt{\delta}/\sin^{-1}(\sqrt{\lambda^4\delta^2})\Big) = \frac{1}{4}\Big(\sqrt{\delta}/\sin^{-1}(\lambda^2\delta)\Big).\] 
    For any $\beta\le 0.75$, it is easy to see that $\sin^{-1}(\beta)< \sqrt{\beta}$. Since, we set $\delta < 0.5$ and since $\lambda\le 1$, we have $\lambda^2\delta\le 0.5 < 0.75$. Hence we have,
    \[j > \frac{1}{4}\Big(\sqrt{\delta}/\sin^{-1}(\lambda^2\delta)\Big) > \frac{1}{4}\Big(\sqrt{\delta}/\sqrt{\lambda^2\delta}\Big) = \frac{1}{4\lambda}.\]
    This says that the number of amplification iterations required for improving the probability of obtaining $\ket{1}$ from $\lambda^2\delta^2$ to $\delta$ is at least $1/4\lambda$. But since the maximum number of iterations performed in the amplification routine is $O(\frac{1}{\sqrt{\lambda}})$, the probability of obtaining $\ket{1}$ on measuring the last qubit of the state after amplitude amplification is at most $\delta$ (most likely quite less).
    
    \textbf{Case (ii): } Let $f(x)=1$ for some $x$. In this case, for all $x$ such that $f(x)=1$, we will have $|\eta_{x,1}|^2\ge 1-\delta'$. Then the probability of measuring the last qubit as $\ket{1}$ is at least $\sum_{x : f(x)=1}|\alpha_x \eta_{x,1}|^2 \ge \lambda(1-\delta') > \lambda/2$ (since $\delta < 0.5$).
    Now, using the fixed point amplitude amplification subroutine, in $O(\frac{1}{\sqrt{\lambda}})$ iterations, we obtain a final state post amplification such that with probability $1-\delta$ we obtain $\ket{1}$ on measuring the $R_{maj}$ register.

\newcommand{\goodg}{\mathscr{G}\xspace}

Let the post-measurement state, after observing $R_{maj}$ in the state $\ket{1}$, be denoted  $\ket{\psi_m}$. We want to clarify that it is not immediately obvious that we shall observe a good state on measuring the first register of $\ket{\psi_m}$ since the biased oracle also marks the bad states with some probability. This requires an a additional analysis.

\begin{claim}
    Let $\ket{\psi_m}$ be the post-measurement state obtained on measuring the last qubit as $\ket{1}$. If the set of good state $\mathscr{G}=\{x : f(x)=1\}$ is non-empty, then the probability of obtaining some $x\in\mathscr{G}$ on measuring the first register of $\ket{\psi_m}$ is at least $3/4$.
\end{claim}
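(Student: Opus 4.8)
The plan is to show that conditioning on $R_{maj}=\ket{1}$ collapses the amplified state exactly to the normalized ``good component'' of $\hata\ket{0^{n+k+1}}$, and then to estimate how much of that component sits on indices in $\mathscr{G}$. Concretely, I would first decompose $\ket{\Psi}=\hata\ket{0^{n+k+1}}$ according to the content of the last qubit $R_{maj}$: write $\ket{\Psi}=\sin\theta\,\ket{G}+\cos\theta\,\ket{B}$, where $\ket{G}=\tfrac1{\sin\theta}\sum_x\alpha_x\eta_{x,1}\ket{x}\ket{\phi_{x,1}}\ket{1}$ is the normalized projection of $\ket{\Psi}$ onto the $R_{maj}=\ket{1}$ subspace, $\ket{B}$ is its orthogonal complement inside $\ket{\Psi}$ (which lies entirely in the $R_{maj}=\ket{0}$ subspace), and $\sin^2\theta=\sum_x|\alpha_x\eta_{x,1}|^2$.

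The next step, which is the one carrying the content, is to observe that the fixed-point amplitude amplification step of Algorithm~\ref{algo:errored-amplify} only uses the reflection about $\ket{\Psi}$ (implemented with $\hata$ and $\hata^\dagger$) and a (generalized) reflection about the $R_{maj}=\ket{1}$ subspace; each of these operators maps the two-dimensional plane $\operatorname{span}\{\ket{G},\ket{B}\}$ to itself (the first because $\ket{\Psi}$ lies in that plane and phase-multiplying $\ket{\Psi}$ fixes the $1$-dimensional orthocomplement of $\ket{\Psi}$ within the plane; the second because it phase-multiplies $\ket{G}$ and fixes $\ket{B}$). Hence the amplified state is of the form $\ket{\psi_{\mathrm{final}}}=a\ket{G}+b\ket{B}$ with $|a|^2\ge 1-\delta/2$ by the FPAA guarantee (so the conditioning event has positive probability). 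Since $\ket{G}$ is supported entirely on $R_{maj}=\ket{1}$ while $\ket{B}$ is supported entirely on $R_{maj}=\ket{0}$, observing $\ket{1}$ in $R_{maj}$ yields $\ket{\psi_m}=\ket{G}$ up to a global phase.

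It then remains to bound $\Pr[\text{first register}\in\mathscr{G}]=\tfrac1{\sin^2\theta}\sum_{x\in\mathscr{G}}|\alpha_x\eta_{x,1}|^2$. Invoking Lemma~\ref{theorem:biasedaa_construct} with the error parameter $\delta'=\lambda^4\delta^2$ used in the proof: for $x\in\mathscr{G}$ we have $f(x)=1$, so $|\eta_{x,1}|^2\ge 1-\delta'$, giving $\sum_{x\in\mathscr{G}}|\alpha_x\eta_{x,1}|^2\ge(1-\delta')\sum_{x\in\mathscr{G}}|\alpha_x|^2\ge(1-\delta')\lambda$; for $x\notin\mathscr{G}$ we have $f(x)=0$, so $|\eta_{x,1}|^2\le\delta'$, giving $\sum_{x\notin\mathscr{G}}|\alpha_x\eta_{x,1}|^2\le\delta'\sum_x|\alpha_x|^2=\delta'$. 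Writing $\sin^2\theta$ as the sum of these two pieces,
\[
\Pr[\text{first register}\in\mathscr{G}]\ \ge\ \frac{(1-\delta')\lambda}{(1-\delta')\lambda+\delta'},
\]
and since $\tfrac{A}{A+B}\ge\tfrac34$ iff $A\ge 3B$, it suffices to check $(1-\delta')\lambda\ge 3\delta'$, i.e.\ $1\ge\lambda^3\delta^2(3+\lambda)$, which holds because $\lambda\le 1$ and $\delta<1/2$ force $\lambda^3\delta^2(3+\lambda)\le 1\cdot\tfrac14\cdot 4=1$. The last computation is routine; the one step I would be careful to argue in detail is the plane-invariance of the amplification operators, since if FPAA were allowed to redistribute amplitude among the $\ket{x}$ inside the good subspace, the post-measurement state would no longer be $\ket{G}$ and this weight argument would fail.
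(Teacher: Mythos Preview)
Your proof is correct and follows the same core route as the paper: write the conditional probability as $P_g/(P_g+P_b)$ with $P_g\ge(1-\delta')\lambda$ and $P_b\le\delta'$, then use $\delta'=\lambda^4\delta^2$ together with $\lambda\le 1$ and $\delta<1/2$ to get the ratio $\ge 3/4$. The paper's own proof, however, computes this conditional on the \emph{pre}-amplification state $\ket{\Psi}$ without justifying why it agrees with the conditional on the amplified state $\ket{\psi_m}$; your plane-invariance observation---that the FPAA generalized reflections each preserve $\operatorname{span}\{\ket{G},\ket{B}\}$, so measuring $R_{maj}=\ket{1}$ after amplification collapses exactly to $\ket{G}$---is precisely what closes that gap, and your instinct to single it out was well placed.
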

\begin{proof}
    The state just before amplification can be given as 
    \[\ket{\Psi} = \sum_x \alpha_x \ket{x} \big(\eta_{x,0}\ket{\phi_{x,0}}\ket{0} + \eta_{x,1}\ket{\phi_{x,1}}\ket{1}\big)\]
    where $|\eta_{x,f(x)}|^2 \ge 1-\delta'$ for any $x$.
    The probability of obtaining some good state on the condition that the $R_{maj}$ qubit is $\ket{1}$ is
    \begin{align*}
        Pr\Big[\ket{g}_{R_1} \Big| \ket{1}_{R_{maj}}\Big] &= \frac{Pr\Big[\ket{g}_{R_1}\ket{1}_{R_{maj}}\Big]}{Pr\Big[\ket{1}_{R_{maj}}\Big]}
        &= \frac{\sum_{x\in\goodg} |\alpha_x\eta_{x,1}|^2}{\sum_{x\in\goodg} |\alpha_x\eta_{x,1}|^2 + \sum_{x\notin\goodg} |\alpha_x\eta_{x,1}|^2}
        &= \frac{P_g}{P_g + P_b}~\text{ (say) }
    \end{align*}
    where by $Pr\Big[\ket{g}_{R_1}\Big]$ we denote the probability of obtaining some good state in $R_1$.
    We know that 
    \[P_b = \sum_{x\notin\goodg} |\alpha_x\eta_{x,1}|^2 = \sum_{x\notin\goodg} |\alpha_x|^2|\eta_{x,1}|^2 \le \delta'\sum_{x\notin\goodg} |\alpha_x|^2 \le \delta'.\]
    So, we have
    \[Pr\Big[\ket{g}_{R_1} \Big| \ket{1}_{R_{maj}}\Big] = \frac{P_g}{P_g + P_b} \ge \frac{P_g}{P_g+\delta'} = \frac{1}{1+(\delta'/P_g)}.\]
    Now,
    \begin{align*}
        \frac{\delta'}{P_g} &= \frac{\delta'}{\sum_{x\in\goodg}|\alpha_x|^2|\eta_{x,1}|^2}\\
        &\le \frac{\delta'}{(1-\delta')\sum_{x\in\goodg}|\alpha_x|^2}~~\big(\text{ Since } |\eta_{x,1}|^2\ge 1-\delta'~\text{ for } x\in\goodg\big)\\
        &\le \frac{\delta'}{(1-\delta')\lambda}~~\big(\text{ Since } |\alpha_{x}|^2\ge \lambda~\text{ for } x\in\goodg\big)\\
        &=\frac{\lambda^4\delta^2}{(1-\lambda^4\delta^2)\lambda} = \frac{\lambda^3\delta^2}{1-\lambda^4\delta^2} \le \frac{\delta^2}{1-\delta^2}\\
        &\le \frac{1/4}{1-(1/4)} = 1/3~~\big(\text{ Since } \delta \le 1/2\big).
    \end{align*}
    Using this, we get
    \[Pr\Big[\ket{g}_{R_1} \Big| \ket{1}_{R_{maj}}\Big] \ge \frac{1}{1+(\delta'/P_g)} \ge \frac{1}{1+(1/3)} = \frac{3}{4}.\]
    This gives us that if $R_{maj}$ was measured as $\ket{1}$ then on measuring $R_1$, with probability at least 3/4, we obtain $\ket{x}$ as measurement outcome for which $f(x)=1$.
\end{proof}

\section{Lower bound for Non-linearity Estimation}
\label{appendix:non-lin-lb}

Recall that the non-linearity of a Boolean function $f:\{0,1\}^n\xrightarrow{} \{0,1\}$ is defined as 
$$\eta(f) = \frac{1}{2}-\frac{1}{2}\fmax$$
where $\fmax = \max_x |\hat{f(x)}|$ and $\hat{f}(x)$ is the Walsh coefficient of $f$ at the point $x$.
We define a decision problem, namely the $\fmax$ decision problem, as follows: given a Boolean function $f:\{0,1\}^n\xrightarrow{} \{0,1\}$, a threshold $\tau$ and a parameter $\lambda$, decide if $\fmax \ge \tau$ or if $\fmax < \tau-\lambda$ given the promise that one of the two cases is true.
It is quite straight forward that the $\fmax$ problem can be directly reduced to the problem of non-linearity estimation.
So, to show a lower bound for the non-linearity estimation problem, we show a reduction from the \countdec problem to the $\fmax$ decision problem.
First consider the following lemma which will help prove the required reduction.

\begin{lemma}
    \label{lemma:coundec-nonlin-lb}
    The query complexity of any quantum algorithm that solves \countdec($N/4, N/4-\Delta$) is $\Omega(N/\Delta)$ for any $0 < \Delta \le N/5$.
\end{lemma}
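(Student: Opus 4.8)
The plan is to apply Theorem~\ref{thm:nayak_wu} directly, with a careful choice of parameters; the only subtlety is a collision between the symbol $\Delta$ in our statement and the symbol $\Delta$ that appears inside Theorem~\ref{thm:nayak_wu}. To keep these apart I will write the Nayak--Wu bound as: for integers $a<b$ in $[n]$ with $b-a=2d$, every quantum algorithm needs $\Omega\!\bigl(\sqrt{n/d}+\sqrt{(b-d)\bigl(n-(b-d)\bigr)}\,/\,d\bigr)$ queries to decide \countdec($a,b$).

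First I would set $n=N$, $a=N/4-\Delta$, $b=N/4$, so that the gap $b-a=\Delta$ and hence $d=\Delta/2$; we may assume $\Delta$ is even (otherwise replace $\Delta$ by a nearby even value, which changes the bound only by a constant since $\Delta\ge 2$). Then $b-d=N/4-\Delta/2$, and the hypothesis $0<\Delta\le N/5$ gives $b-d\ge N/4-N/10=3N/20>0$ (so the promise region is non-empty) as well as $n-(b-d)=3N/4+\Delta/2\ge 3N/4$. Consequently $(b-d)\bigl(n-(b-d)\bigr)\ge \tfrac{3N}{20}\cdot\tfrac{3N}{4}=\tfrac{9N^2}{80}$, so the second term of the Nayak--Wu bound is at least $\tfrac{1}{d}\sqrt{\tfrac{9N^2}{80}}=\tfrac{2}{\Delta}\cdot\tfrac{3N}{\sqrt{80}}=\Omega(N/\Delta)$. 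Plugging this into the theorem yields the claimed $\Omega(N/\Delta)$ lower bound.

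I do not expect any real obstacle here; the proof is essentially bookkeeping. The two points that require a little care are (i) not confusing the two roles of $\Delta$, handled by renaming to $d$ above, and (ii) ensuring the \countdec instance stays inside its promise region, which is precisely what the constraint $\Delta\le N/5$ guarantees (any constant fraction of $N$ bounded away from $N/2$ would work, but $N/5$ keeps the constants tidy). Optionally one can also remark that the first term $\sqrt{n/d}=\Theta(\sqrt{N/\Delta})$ is dominated by $N/\Delta$ when $\Delta\le N/5$, but this is not needed for the lower bound.
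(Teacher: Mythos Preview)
Your proposal is correct and follows essentially the same approach as the paper: a direct instantiation of Theorem~\ref{thm:nayak_wu} with $l_1=N/4-\Delta$, $l_2=N/4$, followed by checking that the second term is $\Theta(N/\Delta)$ under the hypothesis $\Delta\le N/5$. Your version is in fact more careful than the paper's, which silently conflates the two uses of $\Delta$ (and has a small sign slip in the expansion), whereas you explicitly set $d=\Delta/2$ and carry the constants through cleanly.
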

\begin{proof}
    Using Theorem~\ref{thm:nayak_wu}, we obtain that the query complexity is
    \begin{align*}
        Q_{\countdec} &= \Omega\Bigg(\sqrt{\frac{N}{\Delta}} + \frac{\sqrt{\bigg(\frac{N}{4}-\Delta\bigg)\bigg(N-\bigg(\frac{N}{4}-\Delta\bigg)\bigg)}}{\Delta}\Bigg)\\
        &= \Omega\Bigg(\sqrt{\frac{N}{\Delta}} + \frac{\sqrt{\frac{3}{16}N^2 + \Delta N - \Delta^2}}{\Delta}\Bigg)\\
        &= \Omega(N/\Delta).
    \end{align*}
\end{proof}

\nonlinlb*

\begin{proof}
    For simplicity let $N$ be some power of $2$. Consider the \countdec($N/4, N/4-\Delta$) problem for some $0 < \Delta \le N/5$.
    The task is to decide if the Hamming weight of the given string $x$ is $N/4$ or $N/4-\Delta$.
    
    Now, for a given string $x$, construct a Boolean function $f^{(x)}:\{0,1\}^n\xrightarrow{}\{0,1\}$ such that $f^{(x)}(i)=x_i$ where $n=\log(N)$.
    We show that the problem of deciding if the Hamming weight of $x$ is $N/4$ or $N/4-\Delta$ can be solved by deciding if $\fmax^{(x)}$ is $\frac{1}{2}$ or $\frac{1}{2}+\frac{2\Delta}{N}$.
    
    Let $y$ be any string of Hamming weight $N/4$. Let $f^{(y)}$ be the Boolean function constructed using $y$.
    We know that the Walsh coefficient of function $f$ at $a$ is defined as 
    \begin{align*}
        \hat{f}(a) &= \frac{1}{2^n}\sum_{x\in \{0,1\}^n}(-1)^{f(x)\oplus a\cdot x}\\
        &= \frac{1}{2^n}\Big[\big|\{x\in \{0,1\}^n : f(x)=a\cdot x\}\big| - \big|\{x\in \{0,1\}^n : f(x)\neq a\cdot x\}\big|\Big].
    \end{align*}
    Intuitively, $|\hat{f}(a)|$ gives the difference in the fraction of inputs $x$ for which the function $f$ matches with the linear function $a\cdot x$ and the fraction of inputs for which the function does not match with $a\cdot x$.
    From this we can compute the Walsh coefficient of $f^{(y)}$ at $0^n$ to be 
    $$\hat{f}^{(y)}(0^n) = \frac{1}{2^n}\Bigg(\frac{3N}{4}-\frac{N}{4}\Bigg) = \frac{1}{2}.$$
    Now, let $a\neq 0^n$ be some $n$-bit string.
    Then, $a\cdot x$ is a linear function with equal number of $0$'s and $1$'s in its output.
    See that, for any Boolean function whose Hamming weight\footnote{By the Hamming weight of a Boolean function $f$, we mean the number of $1$'s in the output of $f$.} is $N/4$, the maximum number of inputs such that $f(x)=a\cdot x$ is bounded above by $3N/4$ where $N/2$ inputs has to be such that $a\cdot x = 0 = f(x)$ and $N/4$ inputs has to be such that $a\cdot x = 1 = f(x)$.
    So, we have the Walsh coefficient of $f^{(y)}$ at any $a\neq 0^n$ as
    $$\hat{f}^{(y)}(a) \le \frac{1}{2^n}\Bigg(\frac{3N}{4}-\frac{N}{4}\Bigg) = \frac{1}{2}.$$
    So, we have that $\fmax^{(y)} = \frac{1}{2}$ and it occurs at $0^n$.

    Next, let $z$ be a string of Hamming weight $N/4-\Delta$ and let $f^{(z)}$ be the Boolean function constructed from $z$.
    For $f^{(z)}$, we have that $$\hat{f}^{(z)}(0^n) = \frac{1}{2^n}\bigg[\Big(\frac{3N}{4}+\Delta\Big) - \Big(\frac{N}{4} - \Delta\Big)\bigg] = \frac{1}{2}+\frac{2\Delta}{2^n}.$$
    Again, for any Boolean function $f$ of Hamming weight $N/4-\Delta$, the maximum number of inputs such that $f(x)=a\cdot x$ is $\frac{3N}{4}-\Delta$ where $N/2$ inputs has to be such that $a\cdot x = 0 = f(x)$ and $N/4-\Delta$ inputs has to be such that $a\cdot x = 1 = f(x)$.
    So, we get that the Walsh coefficient of $f^{(z)}$ at any $a\neq 0^n$ is 
    $$\hat{f}^{(z)}(a) \le \frac{1}{2^n}\bigg[\Big(\frac{3N}{4}-\Delta\Big) - \Big(\frac{N}{4} + \Delta\Big)\bigg] = \frac{1}{2}-\frac{2\Delta}{2^n}.$$
    Thus, we get that $\fmax^{(z)} = \frac{1}{2}+\frac{2\Delta}{2^n}$ and it occurs at $0^n$.

    Consequently, any algorithm that solves the $\fmax$ decision problem for the parameters $\tau=\frac{1}{2}+\frac{2\Delta}{N}$ and $\lambda=\frac{\Delta}{N}$ can solve the \countdec($\frac{N}{4},\frac{N}{4}-\Delta$) problem without any query overhead.
    Now, using Lemma~\ref{lemma:coundec-nonlin-lb}, we get that any quantum algorithm that solves the $\fmax$ decision problem is $\Omega(\frac{N}{\Delta}) = \Omega(\frac{1}{\lambda})$.
\end{proof}

\end{document}